\tikzset{
	-Latex,auto,node distance =1 cm and 1 cm,semithick,
	state/.style ={ellipse, draw, minimum width = 0.7 cm},
	point/.style = {circle, draw, inner sep=0.04cm,fill,node contents={}},
	bidirected/.style={Latex-Latex,dashed},
	el/.style = {inner sep=2pt, align=left, sloped}
}
\newcommand{\vectr}[1]{\ensuremath{\bm{\mathrm{#1}}}}
\newcommand{\matrx}[1]{\ensuremath{\bm{\mathrm{#1}}}}
\DeclareMathOperator{\E}{\mathbb{E}}
\DeclareMathOperator{\V}{\mathbb{V}ar}
\DeclareMathOperator{\cov}{\mathbb{C}ov}
\newtheorem{theorem}{Theorem}
\newtheorem{lemma}{Lemma}
\newtheorem{corollary}{Corollary}
\begin{document}
	
	\def\spacingset#1{\renewcommand{\baselinestretch}%
		{#1}\small\normalsize} \spacingset{1}

	\title{\bf Unlocking Retrospective Prevalent Information in EHRs - a Pairwise Pseudolikelihood Approach}
	\author{Nir Keret and Malka Gorfine \hspace{.2cm}\\
		Department of Statistics and Operations Research\\
		Tel Aviv University, Israel}
	\date{}
	\maketitle

	\begin{abstract}
		Typically, electronic health record data are not collected towards a specific research question. Instead, they comprise numerous observations recruited at different ages, whose medical, environmental and oftentimes also genetic data are being collected. Some phenotypes, such as disease-onset ages, may be reported retrospectively if the event preceded recruitment, and such observations are termed ``prevalent". The standard method to accommodate this ``delayed entry" conditions on the entire history up to recruitment, hence the retrospective prevalent failure times are conditioned upon and cannot participate in estimating the disease-onset age distribution. An alternative approach conditions just on survival up to recruitment age, plus the recruitment age itself. This approach allows incorporating the prevalent information but brings about numerical and computational difficulties. In this work we develop consistent estimators of the coefficients in a regression model for the age-at-onset, while utilizing the prevalent data. Asymptotic results are provided, and simulations are conducted to showcase the substantial efficiency gain that may be obtained by the proposed approach. In particular, the method is highly useful in leveraging large-scale repositories for replicability analysis of genetic variants. Indeed, analysis of urinary bladder cancer data reveals that the proposed approach yields about twice as many replicated discoveries compared to the popular approach. 
		
	\end{abstract}
	
	{\bf keywords:} EHR; Left truncation; Pairwise Pseudolikelihood; Prevalent; Replicability; Survival analysis.
	
	\spacingset{2} 
	
	\section{Introduction}
	Biobanks and Electronic Health Records (EHRs) offer extensive genetic and environmental data. Although not disease-specific, they encompass high-quality information for diverse health studies. Initiatives like the UK Biobank (UKB), China Kadoorie Biobank, Biobank Sweden, FinnGen and many others underscore their expanding popularity and utility. However, fully unlocking their potential necessitates addressing inherent limitations and biases in this type of data.
	
	Biobanks and EHRs often involve delayed-entry scenarios where participants join follow-up at an age (recruitment time) later than the time axis origin, and are then prospectively monitored until death, dropout, or study conclusion. This setup introduces left truncation, as participants must survive long enough to be recruited. The ``prevalent" observations have been diagnosed with the disease of interest before recruitment, reporting the age-at-onset retrospectively. In contrast, ``incidents" are recruited healthy and their onset is observed during follow-up, whereas ``censored" cases do not experience the event by the time of analysis. It is well known that accounting for left truncation is crucial to avoid bias, and care should be taken when integrating prevalent and incident data.
	
	The UKB provides data on approximately 500,000 UK individuals. Notably, participants aged 40 to 69 were enrolled between 2006 and 2010, introducing delayed entry. In relation to urinary bladder cancer (UBC), the subject of Section 4, there are around 880 incident and 590 prevalent cases, so that the latter constitutes about 40\% of all observed events.
	
	Most time-to-event EHR data analyses do not use prevalent cases \citep{pang2018adiposity,gorfine2021marginalized,abhari2022external,keret2023analyzing} due to two key reasons. Firstly, the primary interest is in associating risk factors to the studied disease. However, baseline measurements from prevalent cases, collected post-diagnosis, are susceptible to recall bias, especially for past habits like smoking, drinking, diet, and physical activity. This work leverages the prevalent cases in an important and popular application of EHR data, ensuring that data are accurately collected. Secondly, computational challenges involving numerical instability and long running times have so far hindered utilization of prevalent cases, as will be elaborated later. Our novel approach successfully circumvents this challenge, enabling seamless integration of prevalent cases.
	
	Detecting novel statistical associations between a rare disease and genetic variants requires an ample number of observed events, as the significance threshold in genome-wide association studies (GWAS) is commonly set at $5\times10^{-8}$. These studies are often conducted using multi-center case-control cohorts for increasing the observed event counts \citep{zhang2014large,huyghe2017contribution}. As most genetic studies are exploratory, replication analyses are crucial due to false-positives \citep{kraft2009replication}. Indeed, biobanks are often leveraged as independent cohorts for external replication analyses, with the aim of verifying or challenging prior research findings. 
	
	Section 4 conducts a replication analysis on single nucleotide polymorphisms (SNPs) previously associated with UBC, using UKB data as an independent cohort.  
	Employing a Cox model \citep{cox1972regression}, we observe higher statistical power with the proposed approach compared to the standard partial-likelihood (PL) estimator, adjusted for left truncation and excluding prevalent observations. Of 31 tested SNPs, 11 were significantly associated with increased UBC risk using the proposed approach, compared to six SNPs detected by the standard PL estimator. The Benjamini-Hochberg (BH) \citep{benjamini1995controlling} procedure for multiple-testing was used, with significance threshold set at 0.05.

	\subsection{Related Work}
	
	We assume that conditionally on the covariates, recruitment times are independent of disease-onset times, and quasi-independent \citep{tsai1990testing} of death and censoring times, and this assumption underlies the subsequent discussion. Quasi-independence, intuitively, can be thought of as independence in the observed region, and is therefore weaker than full independence. When the observed events are all incident, a widely applicable method for accommodating left truncation is the ``risk-set adjustment" \citep[pg. 313]{klein2003survival}. At each time point, only participants who have already entered the study and remained uncensored and event-free are regarded at risk. This is the standard left-truncation method for the PL, Kaplan-Meier \citep{kaplan1958nonparametric} and Nelson-Aalen \citep{nelson1972theory,aalen1978nonparametric} estimators, to name a few.
	
	As elaborated in Section 2, when both prevalent and incident cases are present, the disease times are typically  embedded within the ``illness-death model" -- a three-state stochastic model with initial, transient and absorbing states (``healthy", ``diseased" and ``death", respectively), and three possible transitions: ``healthy$\rightarrow$diseased" ($1\rightarrow2$), ``healthy$\rightarrow$dead" ($1\rightarrow3$) and ``diseased$\rightarrow$dead" ($2\rightarrow3$), as depicted in Figure \ref{Fig:ill-death}. Two main approaches for combining the prevalent and incident cases are inverse probability weighting (IPW) and conditional likelihood.
	
	\cite{copas2001incorporating} presented a pseudo-(partial-) likelihood IPW method where each observation is weighted inversely to its inclusion probability. \cite{chang2006nonparametric} and \cite{vakulenko2017nonparametric} proposed nonparametric IPW estimators for the joint distribution of disease and death times, but did not include covariates. \cite{li2011quantile,li2014varying} address semi-competing risks while including the prevalent cases, however these are not applicable to the illness-death model, as the death-time distribution is assumed unaltered by disease occurrence. 
	
	Importantly, these methods are subject to a ``positivity" condition, namely, that each observation in the target population has a positive recruitment probability. In most biobanks this condition is violated. In particular, in the UKB, those who died before age 40 have zero recruitment probability. Additionally, the distribution of recruitment time should be estimated, which we would rather avoid. Hence, the IPW approach will not be further considered in this work. 
	
	As to conditional likelihood, one approach accommodates delayed entry by conditioning on survival until recruitment age. As explained by \cite{vakulenko2016comparing}, unless a parametric model is specified for recruitment ages, they can be conditioned upon without loss of efficiency. While parametric modeling might increase efficiency when correctly specified, it is established that misspecification can induce severe bias, hence we find such an approach unattractive.
	
	A second option is to condition on both survival until recruitment age and the actual recruitment age, eliminating its randomness and the need for distribution specification. Nonetheless, the likelihood in this approach involves all three transitions of the illness-death model, and necessitates numerical integration for each and every observation during the iterative optimization routine, as shown in Section 2. \cite{vakulenko2016comparing} demonstrate that convergence issues and instability can emerge, especially as the sample size increases, even within fully-parametric models for all transitions. Adopting a semi-parametric model is anticipated to worsen instability because the integrand becomes even more complex.
	
	The third, widely-used and standard option, conditions on all available information up to recruitment. The age-at-onset of prevalent observations is conditioned upon, hence they do not contribute to the likelihood of transition $1\rightarrow2$. The advantage of this option is that under standard assumptions the likelihood of the entire illness-death model factorizes into separate components corresponding to the three transitions, so that each can be analyzed independently using marginal models. Since the remaining observed events in transition $1\rightarrow2$ are all incident, the risk-set adjustment can be applied \citep[Section S10 in the supplementary material]{gorfine2021marginalized}. However, omitting prevalent observations can substantially reduce efficiency compared to the first two options, as evidenced by \cite{saarela2009joint} and \cite{vakulenko2016comparing}. This is also demonstrated in Sections 3 and 4 through simulations and real data analysis. 
	
	\subsection{Our Contribution}
	
	The focus of this work is transition $1\rightarrow2$, as it is particularly susceptible to efficiency loss with  the standard PL-based estimation that excludes the prevalent data. We build on the pairwise pseudolikelihood idea of \cite{liang2000regression}, and develop an alternative procedure acting as a proxy for the conditional likelihood given survival until recruitment, and recruitment age. By circumventing the computationally-problematic numerical integration, we propose a stable and reliable estimation procedure.
	
	The proposed method is versatile and can be applied to various parametric or semi-parametric regression models for survival data. However, we present the estimation procedure, data analysis, simulations and asymptotic properties specifically for the Cox regression model due to its widespread popularity. Proofs establishing the consistency and asymptotic normality are provided, as well as a variance estimation procedure.
Importantly, the simulations demonstrate high robustness against model misspecification of the other two transitions, and of censoring, which should also be estimated when assumed random. Lastly, our approach employs all observation pairs, which can be computationally intensive. To address this, we have incorporated a subsampling technique, considerably cutting down computation time without sacrificing efficiency. 
	
\section{Methodology}
	Let $T_1$ and $T_2$ be the ages at disease diagnosis and death, respectively, and $\vectr{Z}$ is a vector of time-independent covariates of size $p$. Since the disease cannot occur after death, similarly to \cite{xu2010statistical}, the probability distribution
	of ($T_1, T_2$) given $\vectr{Z}$ is assumed to be absolutely continuous in the upper wedge $t_2 \ge t_1$. Namely, the joint density of ($T_1, T_2$) given \vectr{Z}, denoted by $f_{T_1,T_2|\vectr{Z}}(t_1, t_2|\vectr{z})$ is defined for $t_2 \ge t_1 \ge 0$, so $$\int_{0}^\infty \int_t^\infty f_{T_1,T_2}(t, v|\vectr{Z})dvdt=\Pr(T_1<\infty|\vectr{Z})\le 1\, ,$$
	and let $T_1 = \infty$ for those who died disease-free.  
	Based on Figure \ref{Fig:ill-death}, let the instantaneous hazard functions of transitioning from state 1 to either state $k$ = 2 or 3, given $\vectr{Z}$, be
	\begin{eqnarray}
		h_{1k}(t|\vectr{Z}) &=& \lim_{\epsilon \searrow 0} \frac{1}{\epsilon} 
		\Pr(t \leq T_{k-1} < t +\epsilon| T_{1} \geq t,T_{2} \geq t,\vectr{Z})
		\,\, , \,\,\, t>0 \,\,, \,\, \, \, k=2,3  \nonumber 
	\end{eqnarray}
	and the cumulative hazard functions are 
	$ H_{1k}(t|\vectr{Z}) = \int_{0}^th_{1k}(s|\vectr{Z})ds, \, \, \, k=2,3 \, .$
	Likewise, the corresponding hazard functions for leaving state 2 given $\vectr{Z}$ and $T_1=t_1$, are
	$$h_{23}(t|\vectr{Z},t_1) = \lim_{\epsilon \searrow 0} \frac{1}{\epsilon} 
	\Pr(t \leq T_{2} < t +\epsilon| T_{1} = t_1,T_{2} \geq t,\vectr{Z})
	\,\, , \,\,\, t>t_1>0 \, , $$
	and,
	$H_{23}(t|\vectr{Z},t_1) = \int_{t_1}^th_{23}(s|\vectr{Z},t_1)ds$. These hazard functions may include infinite-dimensional parameters. Note that although the same covariate vector $\vectr{Z}$ is used in all hazard functions, any selected regression model permits us to assign a coefficient of zero to any specific variable. This presentation style is a notational convenience and does not restrict us from employing distinct covariates across models.
	
	 Now, assume we are given a sample of $n$ independent and identically-distributed observations, such that the recruitment (delayed entry) and observed ages of observation $i$ are $R_i$, and $V_i=\min(T_{1i},T_{2i},C_i)$, respectively, where $C_i$ is its age at right-censoring, and censoring is assumed to occur only after recruitment \citep{qian2014assumptions}. Let $\Delta_{li} = I(V_i=T_{li})$, $l=1,2$, where $I(\cdot)$ is the indicator function, so $\Delta_{1i}=1$ indicates observing the disease onset of observation $i$, and $\Delta_{2i}=1$ indicates observing its disease-free death. When $\Delta_{1i}=\Delta_{2i}=0$, observation $i$ is censored. Denote $\vectr{Z}_i$ as the vector of covariates associated with observation $i$, so overall its observed information is $\{V_i,\Delta_{1i},\Delta_{2i},R_i,\vectr{Z}_i\}$. We assume that conditionally on the covariates, censoring is independent of the failure times and quasi-independent of recruitment time. It is also assumed that the censoring and other three transitions do not share common parameters, but may share common covariates.
	
	Denote $\vectr{O}_i=(V_i,\Delta_{1i},\Delta_{2i})^T$ as the outcome associated with observation $i$. As outlined in Section 1, estimation employs one of three likelihood functions, corresponding to the distribution of $\vectr{O}$ conditional on varying information subsets: 
	\textbf{I.} $\{\vectr{Z}, T_2 > R\}$.
	\textbf{II.} $\{\vectr{Z},R,T_2 > R\}$. 
	\textbf{III.} $(\vectr{Z},R)$ and the entire observed data up to age $R$. 
	
	The conditional likelihood of option I requires specification of the distribution of $R$, which we prefer to avoid for potential misspecification bias \citep{vakulenko2016comparing}. The conditional likelihood of option III for transition $1\rightarrow2$ can be expressed as
	\begin{eqnarray}
		L^{\mbox{III}} &\propto& \prod_{i:R_i<V_{i}}\left\{\frac{h_{12}(V_i|\vectr{Z}_i)^{\Delta_{1i}}\exp\{-H_{12}(V_i|\vectr{Z}_i)\}}{\exp\{-H_{12}(R_i|\vectr{Z}_i)\}} \right\} \, ,\nonumber
	\end{eqnarray}
	which is convenient as it involves only parameters of this transition. However, the prevalent cases do not participate in $L^{\mbox{III}}$, and  instead one can use the likelihood of option II, which uses all observations and involves the entire illness-death process. Namely,
	\begin{eqnarray}
		L^{\mbox{II}} &=&
		\prod_{i=1}^n\frac{f(V_i,\Delta_{1i},\Delta_{2i},T_{2i}>R_i|\vectr{Z}_i,R_i)}{\Pr(T_{2i}>R_i|\vectr{Z}_i,R_i)}  \label{Eq:Like2}\\
		&\propto& \prod_{i=1}^n\frac{h_{12}(V_{i}|\vectr{Z}_i)^{\Delta_{1i}}h_{13}(V_{i}|\vectr{Z}_i)^{\Delta_{2i}}\exp\{-H_{1\cdot}(V_i|\vectr{Z}_i)-\Delta_{1i}I(R_i>V_i)H_{23}(R_i|V_i,\vectr{Z}_i)\}}{\exp\{-H_{1\cdot}(R_i|\vectr{Z}_i)\}+ \int_{0}^{R_i}h_{12}(s|\vectr{Z}_i)\exp\{-H_{1\cdot}(s|\vectr{Z}_i)-H_{23}(R_i|s,\vectr{Z}_i) \}ds}  \, , \nonumber
	\end{eqnarray}
	where $H_{1\cdot}(\cdot|\vectr{Z}) = H_{12}(\cdot|\vectr{Z}) + H_{13}(\cdot|\vectr{Z})$. The denominator in Eq.(\ref{Eq:Like2}) is the probability sum of survival until recruitment with and without the disease. Numerical integration is required for each observation within the optimization routine, which is likely to induce convergence and instability problems \citep{vakulenko2016comparing}, especially when adopting a semi-parametric approach. Below, we present an alternative estimation procedure, acting as a computationally-friendly proxy for likelihood $L^{\mbox{II}}$ that leverages the prevalent information.
	
	\begin{figure}
		\centering
		\includegraphics[width=85mm]{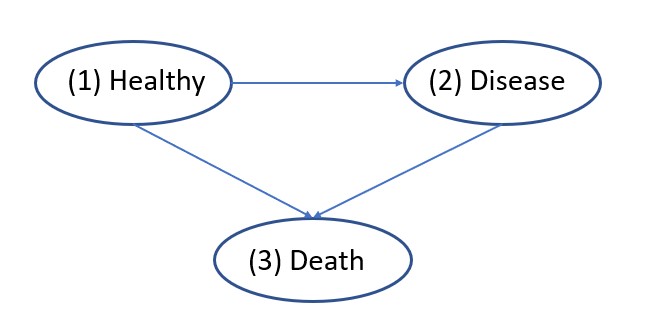}
		\caption{\label{Fig:ill-death} The illness-death model.}
	\end{figure}
	
	\subsection{The Proposed Approach}
	
	\cite{kalbfleisch1978likelihood} elegantly linked between regression permutation tests to score tests based on conditional likelihoods given the order statistic. In some settings this conditional likelihood may help avoiding nuisance parameter estimation. Inspired by this approach, \cite{liang2000regression}, introduced the pairwise pseudolikelihood as a substitute for the computationally-intensive full conditional likelihood, which requires exhaustive enumeration of all $n!$ permutations.
	
	In the pairwise pseudolikelihood, each observation pair contributes their joint distribution conditional on their order statistic. By extending this idea to likelihood $L^{\mbox{II}}$, we can eliminate the denominator in Eq.(\ref{Eq:Like2}), which requires the troublesome numerical integration. Let $\left(\vectr{O}_{(1)},\vectr{O}_{(2)}\right)_{ij}$ be a random permutation of $(\vectr{O}_{i},\vectr{O}_{j})$, it then follows that
	\begin{equation} \label{Eq:pairwise} 
		L^{pair} = \prod_{i<j}L^{pair}_{ij} \, ,
	\end{equation} 
	where the contribution of each pair is
	\begin{eqnarray}
		L^{pair}_{ij} &=& f\left\{\vectr{O}_i,\vectr{O}_j|R_i,R_j,\vectr{Z}_i,\vectr{Z}_j,R_i<T_{2i},R_j<T_{2j},(\vectr{O}_{(1)},\vectr{O}_{(2)})_{ij}\right\} \nonumber \\
		&=& \frac{f\left(\vectr{O}_i,\vectr{O}_j|R_i,R_j,\vectr{Z}_i,\vectr{Z}_j,R_i<T_{2i},R_j<T_{2j}\right)}{f\left\{(\vectr{O}_{(1)},\vectr{O}_{(2)})_{ij}|R_i,R_j,\vectr{Z}_i,\vectr{Z}_j,R_i<T_{2i},R_j<T_{2j}\right\}} \, .  \label{Eq:LOnePair}
	\end{eqnarray}
	Due to independence, the numerator is 
	$f\left(\vectr{O}_i|R_i,\vectr{Z}_i,R_i<T_{2i}\right)f\left(\vectr{O}_j|R_j,\vectr{Z}_j,R_j<T_{2j}\right) \, , $
	and the denominator is $$f\left(\vectr{O}_i|R_i,\vectr{Z}_i,R_i<T_{2i}\right)f\left(\vectr{O}_j|R_j,\vectr{Z}_j,R_j<T_{2j}\right) + f\left(\vectr{O}_j|R_i,\vectr{Z}_i,R_i<T_{2i}\right) f\left(\vectr{O}_i|R_j,\vectr{Z}_j,R_j<T_{2j}\right) \, , $$
	where $f\left(\vectr{O}_j|R_i,\vectr{Z}_i,R_i<T_{2i}\right)$, for instance, is the conditional distribution function of a ``quasi-observation" with outcome $\vectr{O}_j$, recruitment age $R_i$ and covariates $\vectr{Z}_i$. Plugging these expressions back in Eq.(\ref{Eq:LOnePair}), we get
	\begin{eqnarray}
		L^{pair}_{ij}(\vectr{\theta}) &=&
		\frac{\frac{f\left(\vectr{O}_i,R_i<T_{2i}|R_i,\vectr{Z}_i\right)}{\Pr(R_i<T_{2i}|R_i,\vectr{Z}_i)}\frac{f\left(\vectr{O}_j,R_j<T_{2j}|R_j,\vectr{Z}_j\right)}{\Pr(R_j<T_{2j}|R_j,\vectr{Z}_j)}}{\frac{f\left(\vectr{O}_i,R_i<T_{2i}|R_i,\vectr{Z}_i\right)}{\Pr(R_i<T_{2i}|R_i,\vectr{Z}_i)}\frac{f\left(\vectr{O}_j,R_j<T_{2j}|R_j,\vectr{Z}_j\right)}{\Pr(R_j<T_{2j}|R_j,\vectr{Z}_j)} + \frac{f\left(\vectr{O}_j,R_i<T_{2i}|R_i,\vectr{Z}_i\right)}{\Pr(R_i<T_{2i}|R_i,\vectr{Z}_i)}\frac{f\left(\vectr{O}_i,R_j<T_{2j}|R_j,\vectr{Z}_j\right)}{\Pr(R_j<T_{2j}|R_j,\vectr{Z}_j)}} \nonumber \\ 
		&=& \frac{1}{1 + \frac{m_{ji}m_{ij}}{m_{ii}m_{jj}}} \, . \nonumber
	\end{eqnarray}
	
	The terms $\Pr(R_i<T_{2i}|R_i,\vectr{Z}_i)$ and $\Pr(R_j<T_{2j}|R_j,\vectr{Z}_j)$ cancel out, so that
	\begin{eqnarray}
		m_{ji} &=&  h_{12}(V_j|\vectr{Z}_i)^{\Delta_{1j}}h_{13}(V_j|\vectr{Z}_i)^{\Delta_{2j}}h_C(V_j|\vectr{Z}_i)^{1-\Delta_{1j} - \Delta_{2j}}\exp\left\{-H_{1\cdot}(V_j|\vectr{Z}_i) \right. \nonumber \\ & & \left. -H_{23}(R_i|\vectr{Z}_i,V_j)I(V_j<R_i) - H_C(V_j|\vectr{Z}_i)I(V_j>R_i) \right\} I(V_j > R_i)^{1-\Delta_{1j}} \label{Eq:mij} \, ,
	\end{eqnarray}
	where $h_C$ and $H_C$ are the instantaneous and cumulative hazard functions of censoring. In the case of non-random censoring mechanisms like Type 1 censoring \citep[chapter 3.2]{klein2003survival}, these terms do no appear in Eq.(\ref{Eq:mij}), and need not be estimated. In Section 3 we show that our estimation procedure is robust against model misspecifiction for censoring.
	
	Every observation satisfying $V<R$ is prevalent, indicating it has been diagnosed with the disease. However, it may be the case that upon swapping the outcomes within a pair, we end up with a ``quasi-observation" having $V<R$, but that either died or was censored before disease onset. This creates an invalid pair and its corresponding pairwise pseudolikelihood is equal 1, thanks to the last indicator $I(V_j > R_i)^{1-\Delta_{1j}}$ in $m_{ji}$ and the corresponding indicator in $m_{ij}$.
	
	Additionally, under the assumption that the recruitment distribution is independent of the covariates, it would be possible to use the pairwise pseudolikelihood also as a proxy for likelihood I, and avoid estimating the recruitment distribution, as proposed by \cite{huang2013semiparametric} and \cite{wu2018pairwise}. However, we believe that this independence assumption is unrealistic and prefer to avoid it. 
	
	Finally, to enhance efficiency, one could explore a triplet-wise pseudolikelihood (or higher-order tuples), or consider drawing a subset of the total $n!$ permutations in the original conditional likelihood presented in \cite{kalbfleisch1978likelihood}.
	However, \cite{liang2000regression} report that the latter option, of randomly drawing permutations, attains negligible improvement upon the pairwise pseduolikelihood. Furthermore, in our setting, as more observations are involved in a tuple/permutation, the more likely it becomes disqualified, as invalid ``quasi-observations" are likely to appear. Therefore, we adhere to the pairwise pseudolikelihood. 
	
	So far, the derivations were given in general form in terms of the distributions of $T_1$ and $T_2$. In what follows we focus specifically on the Cox model.
	
	\subsection{Cox Model - The Proposed Pairwise Pseudolikelihood}
	Cox models are postulated for the three transitions of Fig.(\ref{Fig:ill-death}), as well as for the censoring distribution. Namely, for $k \in \{12,13,C\}$ it is assumed that $h_k(t|\vectr{Z})=h_{0k}(t)e^{\vectr{\beta}_k^T\vectr{Z}}$, where $h_{0k}$ is an unspecified baseline hazard function, and $\vectr{\beta}_k$ is a vector of regression coefficients. Likewise, $h_{23}(t|\vectr{Z},t_1)=h_{023}(t)e^{\vectr{\beta}_{23}^T(\vectr{Z}^T,t_1)^T}$, where $t>t_1$. 
	For ease of presentation we include $t_1$ as a covariate, but one can consider any known transformation of $t_1$ as well as $t_1\times \vectr{Z}$ interaction terms.
	Similarly, $H_{0k}(t)=\int_{0}^{t}h_{0k}(u)du$, $k\in\{12,13,23,C\}$ are the cumulative baseline hazard functions. Denote 
	$$\mathcal{A}_k(s,t,z) = \exp\left[\left\{H_{0k}(s) - H_{0k}(t) \right\}e^{\vectr{\beta}^T_kz}\right]  , k\in\{12,13,23,C\}.$$
	Based on Eq.(\ref{Eq:mij}) it is straightforward to verify that
	\begin{eqnarray}
		\frac{m_{ji}m_{ij}}{m_{ii}m_{jj}}  &=& \exp\left[\left(\vectr{\beta}^T_{12}\vectr{Z}_i - \vectr{\beta}^T_{12}\vectr{Z}_j\right)\left(\Delta_{1j} - \Delta_{1i}\right)
		+ \left(\vectr{\beta}^T_{13}\vectr{Z}_i - \vectr{\beta}^T_{13}\vectr{Z}_j\right)\left(\Delta_{2j} - \Delta_{2i}\right)\right] \nonumber \\
		& & \frac{\mathcal{A}_{12}\left\{V_i,V_j,Z_i\right\}\mathcal{A}_{13}\left\{V_i,V_j,Z_i\right\}}{\mathcal{A}_{12}\left\{V_i,V_j,Z_j\right\}\mathcal{A}_{13}\left\{V_i,V_j,Z_j\right\}}  \nonumber \\
		& & \frac{\mathcal{A}_{23}\left\{V_j,R_i,(Z^T_i,V_j)^T\right\}^{I(R_i>V_j)}\mathcal{A}_{23}\left\{V_i,R_j,(Z^T_j,V_i)^T\right\}^{I(R_j>V_i)}}{\mathcal{A}_{23}\left\{V_i,R_i,(Z^T_i,V_i)^T\right\}^{I(R_i>V_i)}\mathcal{A}_{23}\left\{V_j,R_j,(Z^T_j,V_j)^T\right\}^{I(R_j>V_j)}} \nonumber \\
		& &
		\exp\left[\left(\vectr{\beta}^T_{C}\vectr{Z}_i - \vectr{\beta}^T_{C}\vectr{Z}_j\right)\left(\Delta_{1i} + \Delta_{2i} - \Delta_{1j} - \Delta_{2j}\right)\right] \nonumber \\
		& & 
		\frac{\mathcal{A}_{C}\left\{V_i,R_i,Z_i\right\}^{I(V_i>R_i)}\mathcal{A}_{C}\left\{V_j,R_j,Z_j\right\}^{I(V_j>R_j)}}{\mathcal{A}_{C}\left\{V_j,R_i,Z_i\right\}^{I(V_j>R_i)}\mathcal{A}_{C}\left\{V_i,R_j,Z_j\right\}^{I(V_i>R_j)}} \nonumber \\
		& &I\{R_i < V_j\}^{1-\Delta_{1j}}I\{R_j < V_i\}^{1-\Delta_{1i}} \, . \label{Eq:Mij}
	\end{eqnarray}
	Although this expression seems cumbersome, it actually admits a fairly simple estimation procedure, as described in Section 2.3. An explicit form of Eq.(\ref{Eq:Mij}) can be found in Appendix A.1.
	
	\subsection{Cox Model - Estimation of $\beta_{12}$}
	Rather than estimating all parameters simultaneously, we propose to first estimate the nuisance parameters via PL, plug those in the pairwise pseudolikelihood, and maximize with respect to the parameters of interest. In the Cox model, prevalent observations could enhance estimation of two parameters, $\vectr{\beta}_{12}$ and $H_{012}$. The latter, however, is regarded as an extra nuisance parameter, and is estimated using the Breslow estimator \citep{breslow1972contribution} with the risk-set correction for left truncation, excluding prevalent observations. Please refer to the discussion for more details about estimation of $H_{012}$. 
	
	To estimate transition $2\rightarrow3$ parameters, it is assumed that data about time from disease onset to death is accessible, which is indeed the case in most biobanks, and the UKB in particular. In this context, denote $W_i = \min(T_{2i},C_i)$ and $\Delta_{3i}=\Delta_{1i}I(W_i=T_{2i})$, so that $\Delta_{3}$ is an indicator for whether death after disease is observed.
	
	Usage of marginal models based on likelihood $L^{\mbox{III}}$ is motivated by the minimal efficiency loss in nuisance parameter estimation. As censoring occurs only after recruitment, conditioning on the entire history up to recruitment time does not effect its estimation. In transition $1\rightarrow3$, the lost information is survival time until recruitment for all observations. However, no death events are lost. Considering there are usually many deaths without disease in large biobanks (about 33,000 deaths in the UKB), incorporating survival until recruitment is unlikely to sizably affect efficiency, if at all. Lastly, transition $2\rightarrow3$ is observed in its entirety for all incident cases, whereas for the prevalent cases the information lost is only survival from disease onset until recruitment, and again, no death event is lost.
	
	Denote $\widehat{\vectr{\beta}}_k$ as the standard PL estimators of $\vectr{\beta}_{k}$, $k\in\{13,23,C\}$, and $\widetilde{\vectr{\beta}}_{12}$ as the standard PL estimator of $\vectr{\beta}_{12}$, with the risk-set correction for delayed entry, excluding the prevalent observations. Define $\widehat{H}_{0k}$ as the risk-set corrected Breslow estimators for $H_{0k}$, $k\in\{12,13,23,C\}$.
	It should be mentioned, that while $\widehat{\vectr{\beta}}_{13}$ is consistent thanks to the risk-set correction, $\widehat{H}_{013}$ can only estimate the cumulative baseline hazard function conditionally on survival up until the minimum observed recruitment time. It implies that if $H_{013}$ is estimated based on a dataset such as the UKB, where recruitment does not start at $0$, the resultant estimator will not be consistent towards the general population  cumulative baseline hazard function of transition $1\rightarrow3$. One way to correct for this bias is by using external data from publicly available life tables, as was done in \cite{gorfine2021marginalized}. However, courtesy of the difference structure $\widehat{H}_{013}(V_i) - \widehat{H}_{013}(V_j)$ appearing throughout the pairwise pseudolikelihood, this bias cancels out, and no correction is needed. This is a unique feature of our approach not shared by likelihood $L^{\mbox{II}}$.
	
	To summarize, $\widehat{\vectr{\beta}}_{12}$ is the maximizer of the following pairwise pseudo-log-likelihood
	\begin{equation} \label{Eq:PairwiseLogLike}
		l^{pair}\left(\vectr{\beta}_{12},\widehat{\vectr{\theta}},\widehat{H}_{012}\right) = -{n \choose 2}^{-1}\sum_{i<j}\ln\left\{1+\zeta_{ij}\left(\widehat{\vectr{\theta}}\right)\eta_{ij}\left(\vectr{\beta}_{12},\widehat{H}_{012}\right)\right\} \, ,
	\end{equation}
	where $\vectr{\theta}=\{\vectr{\beta}_{13},\vectr{\beta}_{23},\vectr{\beta}_{C},H_{013},H_{023},H_{0C}\}$, 
	\begin{equation} \label{Eq:eta_ij}
		\eta_{ij}(\vectr{\beta}_{12},H_{012}) =  \exp\left[\left(\vectr{\beta}^T_{12}\vectr{Z}_i - \vectr{\beta}^T_{12}\vectr{Z}_j\right)\left(\Delta_{1j} - \Delta_{1i}\right)
		+ \left\{H_{012}(V_i) - H_{012}(V_j)\right\}\left(e^{\vectr{\beta}_{12}^T\vectr{Z}_i} - e^{\vectr{\beta}_{12}^T\vectr{Z}_j}\right)\right] \, ,
	\end{equation}
	and $\zeta_{ij}\left(\widehat{\vectr{\theta}}\right)$ is the remaining elements in Eq.(\ref{Eq:Mij}) after plugging in the estimates of $\vectr{\theta}$. In Section 3 we present a sensitivity analysis assessing how the estimation of $\vectr{\beta}_{12}$ is impacted by model misspecification for the other transitions and censoring.
	
	The number of terms in Eq.(\ref{Eq:PairwiseLogLike}) is of order $O(n^2)$, rendering the estimation procedure prohibitively expensive even for moderately-sized datasets. To address this, we adopt a subsampling approach where $K_n$ pairs are selected per observation, reducing the complexity to $O(K_nn)$. The subscript $n$ indicates that the choice of the number of pairs per observation may depend on $n$. For asymptotic guarantees, discussed in Appendix A.2, it is required that $K_n \rightarrow \infty$ as $n\rightarrow\infty$. It is assumed that the data are randomly ordered and for each observation $i\in\{1,\ldots,n\}$, we include its pairwise terms with observations $\{i+1,i+2,\ldots,i+K_n\}$ (modulo $n$), and obtain
	\begin{equation} \label{Eq:PairwiseLogLikeKn}
		l_{K_n}^{pair}\left(\vectr{\beta}_{12},\widehat{\vectr{\theta}},\widehat{H}_{012}\right) = -\frac{1}{nK_n}\sum_{i=1}^n\sum_{j=i+1}^{i+K_n}\ln\left\{1+\zeta_{ij}\left(\widehat{\vectr{\theta}}\right)\eta_{ij}\left(\vectr{\beta}_{12},\widehat{H}_{012}\right)\right\} \, .
	\end{equation}

	\subsection{Cox Model - Asymptotic Results and Variance Estimation}
	This section begins with the consistency and asymptotic normality of $\widehat{\vectr{\beta}}_{12}$, followed by a discussion on variance estimation.
	Theorems 1 and 2 address the case when all pairwise terms are used in estimation, and Corollary 1 then extends these results to the subsampling framework.
	Full proofs with the required list of assumptions are provided in Appendix A.2.
	
	Denote $\vectr{\beta}^o_{k},H^o_{0k},\vectr{\theta}^o$ as the unknown true values of $\vectr{\beta}_{k},H_{0k},\vectr{\theta}$, respectively, for $k\in\{12,13,23,C\}$, and let $\Vert\cdot\Vert_2$ denote the $l^2$ norm.
	Theorem 1 establishes the consistency of the estimator. 
	\begin{theorem} 
		Under assumptions A.1--A.6, as  $n\rightarrow \infty$, 
		$$ \Vert\widehat{\vectr{\beta}}_{12} - \vectr{\beta}^o_{12}\Vert_2 = o_p(1) \, .$$ 
	\end{theorem}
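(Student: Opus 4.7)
The plan is a standard argmax-consistency argument for an M-estimator whose objective is a U-statistic-type pairwise criterion with both finite- and infinite-dimensional nuisance components plugged in. Define the population counterpart
$$Q(\vectr{\beta}_{12}) = -\E\left[\ln\left\{1+\zeta_{12}(\vectr{\theta}^o)\eta_{12}(\vectr{\beta}_{12},H^o_{012})\right\}\right],$$
where the expectation is over an independent pair of observations. I would proceed in three stages: (i) identification of $\vectr{\beta}^o_{12}$ as the unique maximizer of $Q$; (ii) uniform convergence of the empirical objective to $Q$ when the true nuisance quantities are used; and (iii) an equicontinuity-type step showing that replacing $(\vectr{\theta}^o,H^o_{012})$ by $(\widehat{\vectr{\theta}},\widehat{H}_{012})$ leaves the limit unchanged, uniformly in $\vectr{\beta}_{12}$.

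For stage (i), by construction $L^{pair}_{ij}$ evaluated at $(\vectr{\beta}^o_{12},\vectr{\theta}^o,H^o_{012})$ equals the conditional probability of the realized ordering given the unordered pair of outcomes, so the information inequality yields $Q(\vectr{\beta}_{12})\le Q(\vectr{\beta}^o_{12})$ for all $\vectr{\beta}_{12}$, with equality only at the truth under the identifiability assumption on $\vectr{Z}$ included in A.1--A.6. For stage (ii), the empirical objective with the true nuisance parameters is a second-order V-statistic in i.i.d.\ observations with a continuous, uniformly bounded kernel on the compact parameter set, so a uniform law of large numbers for U-statistics delivers
$$\sup_{\vectr{\beta}_{12}}\left|l^{pair}(\vectr{\beta}_{12},\vectr{\theta}^o,H^o_{012})-Q(\vectr{\beta}_{12})\right|=o_p(1).$$

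Stage (iii), the principal obstacle, asks for
$$\sup_{\vectr{\beta}_{12}}\left|l^{pair}(\vectr{\beta}_{12},\widehat{\vectr{\theta}},\widehat{H}_{012})-l^{pair}(\vectr{\beta}_{12},\vectr{\theta}^o,H^o_{012})\right|=o_p(1).$$
The ingredients I would assemble are: finite-dimensional consistency $\widehat{\vectr{\beta}}_k\to\vectr{\beta}^o_k$ in probability for $k\in\{13,23,C\}$ and $\widetilde{\vectr{\beta}}_{12}\to\vectr{\beta}^o_{12}$ via standard Andersen--Gill theory applied to the risk-set-adjusted partial likelihoods; uniform consistency of the Breslow estimators, $\sup_{t\in[\tau_0,\tau]}|\widehat{H}_{0k}(t)-H^o_{0k}(t)|=o_p(1)$, over the observable time window on which the at-risk process is bounded away from zero; and a Lipschitz bound showing that $\zeta_{ij}$ and $\eta_{ij}$ vary smoothly in $(\vectr{\theta},H_{012})$ with an integrable envelope that is uniform over $\vectr{\beta}_{12}$ and over the compact observable support of $(V,R,\vectr{Z})$. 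The structural cancellation noted in Section 2.3 --- that only the differences $\widehat{H}_{013}(V_i)-\widehat{H}_{013}(V_j)$ enter the kernel --- is crucial here, because it removes the left-truncation-induced offset in $\widehat{H}_{013}$ and thereby guarantees that the plug-in limit coincides with $Q$ rather than with a biased version of it.

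With stages (i)--(iii) in hand, the standard argmax theorem (continuous mapping on near-maximizers) yields $\Vert\widehat{\vectr{\beta}}_{12}-\vectr{\beta}^o_{12}\Vert_2=o_p(1)$. The hardest piece will indeed be stage (iii): the kernel is a smooth but intricate function of four infinite-dimensional baseline hazards, so I would formalize the Lipschitz control through a uniform envelope, combine it with the Breslow uniform consistency, and then interchange the limit with the double sum via dominated convergence, with additional care near the boundary $V \approx R$ where the indicator factors in $\zeta_{ij}$ introduce nonsmooth behaviour that must be handled by separately bounding the contribution of pairs whose recruitment-failure time differences are within a shrinking window.
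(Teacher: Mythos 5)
Your proposal is correct and follows essentially the same route as the paper: your stage (i) is the paper's Lemma 3 (identification via the expected conditional log-likelihood / information inequality), your stage (ii) is the paper's Eq.\ (S.3) (a uniform law of large numbers for the U-statistic kernel, verified through a Lipschitz/bounded-gradient condition on the compact set $\mathcal{B}$), your stage (iii) is the paper's Eq.\ (S.2) (plug-in stability, handled there by noting the kernel depends on each $H_{0k}$ only through finitely many evaluations and applying the continuous mapping theorem, with Lemma 1 bounding the Breslow estimators), and the conclusion is drawn from the well-separated-maximum argmax theorem (van der Vaart, Theorem 5.7) exactly as you describe.
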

	
	Before presenting Theorem 2, addressing asymptotic normality, we provide some background. Denote $\vectr{U}(\vectr{\beta}_{12},\vectr{\theta},H_{012})$ as the pairwise pseudolikelihood score function with respect to $\vectr{\beta}_{12}$,
	\begin{equation*} 
		\vectr{U}(\vectr{\beta}_{12},\vectr{\theta},H_{012}) = \frac{\partial  l^{pair}(\vectr{\beta}_{12},\vectr{\theta},H_{012})}{\partial \vectr{\beta}^T_{12}} \, .
	\end{equation*}
	We then have 
	\begin{eqnarray*}
		\vectr{0} &=& \vectr{U}\left(\vectr{\beta}^o_{12},\vectr{\theta}^o,H^o_{012}\right) +\left\{\vectr{U}\left(\widehat{\vectr{\beta}}_{12},\vectr{\theta}^o,H^o_{012}\right) - \vectr{U}\left(\vectr{\beta}^o_{12},\vectr{\theta}^o,H^o_{012}\right)\right\} 
		\nonumber \\
		&+& \left\{\vectr{U}\left(\widehat{\vectr{\beta}}_{12},\widehat{\vectr{\theta}},\widehat{H}_{012}\right) - \vectr{U}\left(\widehat{\vectr{\beta}}_{12},\vectr{\theta}^o,H^o_{012}\right)\right\} \, .
	\end{eqnarray*}  
	It will be shown that
	$$
	\sqrt{n}\left[\vectr{U}(\vectr{\beta}^o_{12},\vectr{\theta}^o,H^o_{012}) +\left\{\vectr{U}\left(\widehat{\vectr{\beta}}_{12},\widehat{\vectr{\theta}},\widehat{H}_{012}\right) - \vectr{U}\left(\widehat{\vectr{\beta}}_{12},\vectr{\theta}^o,H^o_{012}\right)\right\} \right]=\frac{1}{\sqrt{n}}\sum_{i=1}^n\vectr{\xi}_i + o_p(1) \, ,$$
	where the $\vectr{\xi}$'s are zero-mean i.i.d random vectors with $\V(\vectr{\xi})=\matrx{\mathcal{V}}$, and thus a central limit theorem follows. Additionally, as defined in assumption A.7 in Appendix A.2, $\matrx{Q}_{\beta_{12}}$ is the limiting matrix of the Hessian based on Eq.(\ref{Eq:PairwiseLogLike}), namely, as  $n\rightarrow\infty$,
	$$\frac{\partial^2 l^{pair}(\vectr{\beta}_{12},\vectr{\theta},H_{012})}{\partial \vectr{\beta}_{12}^T \partial \vectr{\beta}_{12}} \xrightarrow{p} \matrx{Q}_{\beta_{12}}(\vectr{\beta}_{12},\vectr{\theta},H_{012}) \, .$$ 
	Using a Taylor expansion for $\vectr{U}(\widehat{\vectr{\beta}}_{12},\vectr{\theta}^o,H^o_{012})$ around $\vectr{\beta}^o_{12}$, Theorem 2 will follow, and the complete proof is given in Appendix A.2.
	
	\begin{theorem} Under assumptions A.1--A.7 and as $n\rightarrow\infty$ it follows that 
		$\sqrt{n}(\widehat{\vectr{\beta}}_{12}-\vectr{\beta}^o_{12}) \xrightarrow{D} N\left(\vectr{0},\matrx{Q}^{-1}_{\beta_{12}}\matrx{\mathcal{V}}\matrx{Q}^{-1}_{\beta_{12}}\right) $, and $\matrx{Q}_{\beta_{12}}$ is evaluated at the true parameter values, namely $\matrx{Q}_{\beta_{12}}(\vectr{\beta}^o_{12},\vectr{\theta}^o,H^o_{012})$.
	\end{theorem}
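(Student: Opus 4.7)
The plan is to follow the standard sandwich argument for M-estimators with plug-in nuisance parameters, but with two extra twists: (a) the score $\vectr{U}$ based on $l^{pair}$ is a second-order U-statistic rather than a sum of i.i.d.\ terms, so a Hoeffding projection is needed; and (b) the nuisance inputs include the infinite-dimensional Breslow estimators $\widehat{H}_{013},\widehat{H}_{023},\widehat{H}_{0C}$, which require influence-function expansions to be absorbed into the i.i.d.\ sum. First, I would start from the estimating equation $\vectr{0}=\vectr{U}(\widehat{\vectr{\beta}}_{12},\widehat{\vectr{\theta}},\widehat{H}_{012})$, apply the three-term decomposition displayed just before the theorem, and Taylor-expand $\vectr{U}(\widehat{\vectr{\beta}}_{12},\vectr{\theta}^o,H^o_{012})$ around $\vectr{\beta}^o_{12}$. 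Assumption A.7 (convergence of the Hessian to $\matrx{Q}_{\beta_{12}}$) together with consistency from Theorem 1 then gives
\[
\sqrt{n}\,(\widehat{\vectr{\beta}}_{12}-\vectr{\beta}^o_{12}) = -\matrx{Q}_{\beta_{12}}^{-1}\sqrt{n}\left[\vectr{U}(\vectr{\beta}^o_{12},\vectr{\theta}^o,H^o_{012}) + \bigl\{\vectr{U}(\widehat{\vectr{\beta}}_{12},\widehat{\vectr{\theta}},\widehat{H}_{012})-\vectr{U}(\widehat{\vectr{\beta}}_{12},\vectr{\theta}^o,H^o_{012})\bigr\}\right] + o_p(1).
\]

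Second, I would analyze $\vectr{U}(\vectr{\beta}^o_{12},\vectr{\theta}^o,H^o_{012})$. Written out, it is a U-statistic of order two with symmetric kernel $\vectr{\psi}(\vectr{O}_i,R_i,\vectr{Z}_i;\vectr{O}_j,R_j,\vectr{Z}_j)$ obtained from differentiating $\ln\{1+\zeta_{ij}\eta_{ij}\}$ in $\vectr{\beta}_{12}$. Under A.1--A.6 the kernel has finite second moment, so Hoeffding's decomposition gives
\[
\sqrt{n}\,\vectr{U}(\vectr{\beta}^o_{12},\vectr{\theta}^o,H^o_{012}) = \frac{2}{\sqrt{n}}\sum_{i=1}^n \vectr{\psi}_1(\vectr{O}_i,R_i,\vectr{Z}_i) + o_p(1),
\]
where $\vectr{\psi}_1$ is the first Hoeffding projection; at the truth $\E\vectr{\psi}=\vectr{0}$, so $\vectr{\psi}_1$ is centered.

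Third, I would handle the plug-in perturbation $\vectr{U}(\widehat{\vectr{\beta}}_{12},\widehat{\vectr{\theta}},\widehat{H}_{012})-\vectr{U}(\widehat{\vectr{\beta}}_{12},\vectr{\theta}^o,H^o_{012})$. Replacing $\widehat{\vectr{\beta}}_{12}$ by $\vectr{\beta}^o_{12}$ introduces only an $o_p(n^{-1/2})$ error by a second Taylor step and Theorem 1, so it suffices to linearize in the nuisance components. For the finite-dimensional $\widehat{\vectr{\beta}}_{k}$, $k\in\{13,23,C\}$, I would use the well-known i.i.d.\ expansion $\sqrt{n}(\widehat{\vectr{\beta}}_k-\vectr{\beta}^o_k)=n^{-1/2}\sum_i\vectr{\varphi}_{k,i}+o_p(1)$ in terms of PL influence functions. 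For the Breslow estimators $\widehat{H}_{0k}$ I would invoke their standard martingale representation $\sqrt{n}(\widehat{H}_{0k}-H^o_{0k})(t)=n^{-1/2}\sum_i\phi_{k,i}(t)+o_p(1)$ uniformly in $t$, combined with the smoothness of $\vectr{U}$ in its nuisance arguments (differentiability of $\zeta_{ij}$) to apply a functional delta method. Averaging the resulting Gateaux derivative against the empirical pairs produces, after another Hoeffding projection, an i.i.d.\ contribution $\vectr{\chi}_i$ per observation. Setting $\vectr{\xi}_i = 2\vectr{\psi}_1(\vectr{O}_i,R_i,\vectr{Z}_i)+\vectr{\chi}_i$ yields the claimed linear representation $\frac{1}{\sqrt{n}}\sum_i\vectr{\xi}_i+o_p(1)$ with $\V(\vectr{\xi})=\matrx{\mathcal{V}}$.

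Finally, the classical CLT applied to the i.i.d.\ $\vectr{\xi}_i$'s together with Slutsky's lemma (using the continuous-mapping argument on $\matrx{Q}_{\beta_{12}}(\widehat{\vectr{\beta}}_{12},\widehat{\vectr{\theta}},\widehat{H}_{012})\to\matrx{Q}_{\beta_{12}}(\vectr{\beta}^o_{12},\vectr{\theta}^o,H^o_{012})$ from A.7 and Theorem 1) delivers the stated asymptotic distribution. I expect the main obstacle to be the third step: cleanly combining the U-statistic linearization with the functional expansion of the Breslow plug-in while keeping the remainder $o_p(n^{-1/2})$. This requires uniform integrability bounds on $\vectr{\psi}$ and its derivatives in the nuisance arguments (guaranteed by A.1--A.6, which presumably impose bounded covariates, a bounded time horizon and interior true parameters), plus care in swapping the order of projection and functional differentiation so that the induced i.i.d.\ summands $\vectr{\chi}_i$ remain properly centered.
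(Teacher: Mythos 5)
Your proposal is correct and follows essentially the same route as the paper: the same three-term score decomposition and Taylor expansion justified by A.7 and Theorem 1, the H\'{a}jek/Hoeffding projection of the leading U-statistic, absorption of the nuisance plug-ins via PL influence functions and the martingale representation of the Breslow estimators into a single i.i.d.\ sum $\vectr{\xi}_i$, and then the CLT with Slutsky. The only cosmetic difference is that the paper avoids a full functional delta method by noting each pairwise term depends on the baseline cumulative hazards only through finitely many evaluation points, so a finite-dimensional Taylor expansion per pair suffices.
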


	\begin{corollary}
		As $K_n\rightarrow\infty$ and $n\rightarrow\infty$, Theorems 1 and 2 extend to the subsampling framework.
	\end{corollary}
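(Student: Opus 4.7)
The plan is to show that replacing $l^{pair}$ by its subsampled counterpart $l_{K_n}^{pair}$ introduces only a remainder that is $o_p(n^{-1/2})$ once $K_n \to \infty$, so that the argmax argument underlying Theorem 1 and the Taylor-expansion argument underlying Theorem 2 both go through with the same limits.

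The key tool is a Hoeffding-type decomposition of the symmetric pair kernel $\psi(\vectr{O}_i,\vectr{O}_j;\vectr{\beta}_{12},\vectr{\theta},H_{012}) = \ln\{1+\zeta_{ij}(\vectr{\theta})\eta_{ij}(\vectr{\beta}_{12},H_{012})\}$, written as $\psi = \mu + \psi_1(\vectr{O}_i)+\psi_1(\vectr{O}_j) + \psi_2(\vectr{O}_i,\vectr{O}_j)$ with $\psi_2$ completely degenerate (mean zero conditional on either argument). The modulo-$n$ indexing in Eq.(\ref{Eq:PairwiseLogLikeKn}) ensures that, for $K_n\leq n/2$, each unordered pair is included exactly once and every observation appears in exactly $2K_n$ of the $nK_n$ subsampled pairs. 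The Hajek projection of $l_{K_n}^{pair}$ is therefore $\frac{1}{nK_n}\cdot 2K_n \sum_i \psi_1(\vectr{O}_i) + \mu = \frac{2}{n}\sum_i \psi_1(\vectr{O}_i)+\mu$, which coincides exactly with the linear projection of the full pairwise U-statistic $l^{pair}$ (where each observation appears in $n-1$ of the $\binom{n}{2}$ pairs). Consequently $l_{K_n}^{pair}-l^{pair}$ reduces to a difference of degenerate sums in $\psi_2$, and since summands of a degenerate order-2 U-statistic have zero covariance whenever they share at most one index, this remainder has variance of order $(nK_n)^{-1}$ pointwise, giving $l_{K_n}^{pair}-l^{pair} = O_p((nK_n)^{-1/2}) = o_p(n^{-1/2})$ as soon as $K_n\to\infty$.

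For Theorem 1, uniform convergence of $l_{K_n}^{pair}(\vectr{\beta}_{12},\widehat{\vectr{\theta}},\widehat{H}_{012})$ to its population limit over a compact neighbourhood of $\vectr{\beta}_{12}^o$ follows by combining the pointwise variance bound with the smoothness and envelope conditions used in the proof of Theorem 1 via a standard bracketing argument; since the limit coincides with that of $l^{pair}$, identifiability at $\vectr{\beta}_{12}^o$ delivers $\widehat{\vectr{\beta}}_{12}\to\vectr{\beta}_{12}^o$ in probability. For Theorem 2, applying the same decomposition to $\vectr{U}_{K_n}=\partial l_{K_n}^{pair}/\partial\vectr{\beta}_{12}^T$ and to its Hessian yields $\sqrt{n}\{\vectr{U}_{K_n}(\vectr{\beta}_{12}^o,\vectr{\theta}^o,H_{012}^o) - \vectr{U}(\vectr{\beta}_{12}^o,\vectr{\theta}^o,H_{012}^o)\} = O_p(K_n^{-1/2}) = o_p(1)$ and $\partial^2 l_{K_n}^{pair}/\partial\vectr{\beta}_{12}^T\partial\vectr{\beta}_{12} \xrightarrow{p} \matrx{Q}_{\beta_{12}}$. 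The plug-in perturbation $\vectr{U}_{K_n}(\widehat{\vectr{\beta}}_{12},\widehat{\vectr{\theta}},\widehat{H}_{012}) - \vectr{U}_{K_n}(\widehat{\vectr{\beta}}_{12},\vectr{\theta}^o,H_{012}^o)$ is handled exactly as its full-score counterpart in Theorem 2 because the nuisance estimators $\widehat{\vectr{\theta}},\widehat{H}_{012}$ are unchanged by subsampling. Substituting into the Taylor expansion of Theorem 2 then gives the same sandwich limit $N(\vectr{0},\matrx{Q}_{\beta_{12}}^{-1}\matrx{\mathcal{V}}\matrx{Q}_{\beta_{12}}^{-1})$.

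The main obstacle is upgrading the pointwise variance bound on the degenerate remainder to a bound that is uniform over $\vectr{\beta}_{12}$ in a neighbourhood of $\vectr{\beta}_{12}^o$, which is required both for the argmax continuity in Theorem 1 and for the score/Hessian convergence in Theorem 2; this should follow from an entropy or bracketing argument leveraging the continuity and envelope conditions on $\psi$ already invoked in Assumptions A.4--A.7, with $K_n\to\infty$ providing a uniform $1/\sqrt{K_n}$ gain on top of the $1/\sqrt{n}$ scale of the full pairwise sum. A secondary item, essentially bookkeeping, is verifying that the modulo-$n$ wrap-around at the boundaries $i\approx n$ preserves the exact $2K_n$ per-observation count used in the Hajek-projection matching.
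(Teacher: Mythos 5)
Your proposal is correct and follows essentially the same route as the paper: the paper's proof simply cites Lemma 1 of Janson (1984) to obtain $\E[\{\sqrt{n}(U-U_0)\}^2]=O(K_n^{-1})$ for the incomplete versus complete U-statistic and then applies Chebyshev's inequality, whereas you re-derive that same bound from first principles via the Hoeffding decomposition and the observation that the balanced modulo-$n$ design gives each observation exactly $2K_n$ appearances, so the H\'ajek projections coincide and only the degenerate part contributes to the difference. Your additional discussion of uniformity over $\vectr{\beta}_{12}$ and of the score/Hessian propagation is a more explicit account of what the paper compresses into ``Theorems 1 and 2 will carry over,'' but it is not a different argument.
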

	
	Deriving a closed-form expression for $\matrx{\mathcal{V}}$ is intractable due to the nuisance parameter estimation. Thus, we present three bootstrap methods for variance estimation, preceded by introducing some additional notation.
	Denote $Y_{1i}(t) = I(R_i\le t \le V_i)$ as the at-risk process adjusted to delayed entry, and $Y_{2i}(t) = \Delta_{1i}I(\max(R_i,V_i)\le t \le W_i)$ as the at-risk process for transition  $2\rightarrow3$. Denote $\widetilde{\vectr{Z}}=\left(\vectr{Z}^T,t_1\right)^T$, and for $j=0,1,2,$ let $\vectr{S}_1^{(j)}(\vectr{\beta},t) = \sum_{i = 1}^{n}Y_{1i}(t)e^{\vectr{\beta}^T\vectr{Z}_i}\vectr{Z}_i^{\otimes j}$, and $\vectr{S}_{2}^{(j)}(\vectr{\beta},t) = \sum_{i = 1}^{n}Y_{2i}(t)e^{\vectr{\beta}^T\widetilde{\vectr{Z}}_i}\widetilde{\vectr{Z}}_i^{\otimes j}$, where $\vectr{Z}^{\otimes0}=1$, $\vectr{Z}^{\otimes1}=\vectr{Z}$ and $\vectr{Z}^{\otimes2}=\vectr{Z}\vectr{Z}^T$. Given a vector $\vectr{\omega}$ of $n$ non-negative weights, denote $\vectr{S}_{\vectr{\omega},1}^{(j)}(\vectr{\beta},t) = \sum_{i = 1}^{n}\omega_iY_{1i}(t)e^{\vectr{\beta}^T\vectr{Z}_i}\vectr{Z}_i^{\otimes j}$, and $\vectr{S}_{\omega,2}^{(j)}(\vectr{\beta},t) = \sum_{i = 1}^{n}\omega_iY_{2i}(t)e^{\vectr{\beta}^T\widetilde{\vectr{Z}}_i}\widetilde{\vectr{Z}}_i^{\otimes j}$.  \\
	\noindent \underline{\textbf{Bootstrap 1}}:
	A straightforward approach is the weighted bootstrap for U-statistics, described in Algorithm 1. Consistency of this approach follows from \cite{janssen1994weighted}, together with known consistency results of the PL-based estimators \citep{andersen1982cox}, as well as Theorems 1--2 and Corollary 1. This approach, however, entails running within each bootstrap sample the optimization routines of both the pairwise pseudolikelihood, and the PL of all transitions. In order to circumvent the latter, we propose Bootstrap 2. Bootstrap 1 is included in the simulations for comparison. \\
		\begin{algorithm}
		\spacingset{0.8}
		\caption{Full Weighted Bootstrap}
		\begin{algorithmic}
			\For{$b = 1,\ldots,B$}
			\begin{enumerate}[label=(\roman*)]
				
				\item Sample $n$ independent random weights $w^{(b)}_1,\ldots,w^{(b)}_n$ from a standard exponential distribution.
				\item Use the weights from Step (i) to solve weighted PL-based estimating equations and obtain $\widetilde{\vectr{\beta}}^{(b)}_{12},\widehat{\vectr{\beta}}^{(b)}_{13},\widehat{\vectr{\beta}}^{(b)}_{23},\widehat{\vectr{\beta}}^{(b)}_C$,
				$$\sum_{i = 1}^n\omega^{(b)}_i\Delta_{1i}\left\{\vectr{Z}_i - \frac{\vectr{S}^{(1)}_{\vectr{\omega}^{(b)},1}(\vectr{\beta}_{12},V_i)}{\vectr{S}^{(0)}_{\vectr{\omega}^{(b)},1}(\vectr{\beta}_{12},V_i)} \right\}=\vectr{0} \, \, , \sum_{i = 1}^n\omega^{(b)}_i\Delta_{2i}\left\{\vectr{Z}_i - \frac{\vectr{S}^{(1)}_{\vectr{\omega}^{(b)},1}(\vectr{\beta}_{13},V_i)}{\vectr{S}^{(0)}_{\vectr{\omega}^{(b)},1}(\vectr{\beta}_{13},V_i)} \right\}=\vectr{0}$$
				$$\sum_{i = 1}^n\omega^{(b)}_i\Delta_{3i}\left\{\vectr{Z}_i - \frac{\vectr{S}^{(1)}_{\vectr{\omega}^{(b)},2}(\vectr{\beta}_{23},W_i)}{\vectr{S}^{(0)}_{\vectr{\omega}^{(b)},2}(\vectr{\beta}_{23},W_i)} \right\}=\vectr{0}$$
				$$\sum_{i = 1}^n\omega^{(b)}_i(1-\Delta_{1i}-\Delta_{2i})\left\{\vectr{Z}_i - \frac{\vectr{S}^{(1)}_{\vectr{\omega}^{(b)},1}(\vectr{\beta}_{C},V_i)}{\vectr{S}^{(0)}_{\vectr{\omega}^{(b)},1}(\vectr{\beta}_{C},V_i)} \right\}=\vectr{0}$$
				
				. 
				\item Derive $\widehat{H}^{(b)}_{0k}$, $k\in\{12,13,23,C\}$, using weighted sums in the respective Breslow estimators, namely, 
				$$\widehat{H}^{(b)}_{012}(t) = \sum_{i = 1}^n\frac{w^{(b)}_i\Delta_{1i}I(R_i\le V_i\le t)}{\vectr{S}^{(0)}_{\vectr{\omega}^{(b)},1}\left(\widetilde{\vectr{\beta}}^{(b)}_{12},V_i\right)} \, \, ,
				\widehat{H}^{(b)}_{013}(t) = \sum_{i = 1}^n\frac{w^{(b)}_i\Delta_{2i}I(R_i\le V_i\le t)}{\vectr{S}^{(0)}_{\vectr{\omega}^{(b)},1}\left(\widehat{\vectr{\beta}}^{(b)}_{13},V_i\right)} \, ,$$
				$$\widehat{H}^{(b)}_{023}(t) = \sum_{i = 1}^n\frac{w^{(b)}_i\Delta_{3i}I(\max(V_i,R_i)\le W_i\le t)}{\vectr{S}^{(0)}_{\vectr{\omega}^{(b)},2}\left(\widehat{\vectr{\beta}}^{(b)}_{23},W_i\right)} \, ,$$
				$$\widehat{H}^{(b)}_{0C}(t) = \sum_{i = 1}^n\frac{w^{(b)}_i(1-\Delta_{1i}-\Delta_{2i})I(R_i\le V_i\le t)}{\vectr{S}^{(0)}_{\vectr{\omega}^{(b)},1}\left(\widehat{\vectr{\beta}}^{(b)}_{C},V_i\right)} \, .$$
				\item Derive
				$$\widehat{\vectr{\beta}}^{(b)}_{12} = \arg\min_{\vectr{\beta}_{12}} \frac{1}{nK_n}\sum_{i=1}^n\sum_{j=i+1}^{i+K_n}\omega_i^{(b)}\omega_j^{(b)}\ln\left\{1+\zeta_{ij}\left(\widehat{\vectr{\theta}}^{(b)}\right)\eta_{ij}\left(\vectr{\beta}_{12},\widehat{H}^{(b)}_{012}\right)\right\} $$
			\end{enumerate}
			\EndFor \\
			\Return the empirical variance matrix of $\widehat{\vectr{\beta}}^{(b)}_{12}$, $b=1,\ldots,B$.
		\end{algorithmic}
	\end{algorithm}
	\noindent \underline{\textbf{Bootstrap 2}}:
This approach relies on the factorization of likelihood $L^{\mbox{III}}$ into multiplicative components for each transition, as described in Section 2, implying that the respective maximum likelihood, or PL estimators are asymptotically independent. We propose using the asymptotic distribution of PL estimators and employ a hybrid bootstrap approach that avoids the need for nuisance parameter estimation within each bootstrap sample. This is in fact the so-called ``piggyback bootstrap", developed and theoretically justified by \cite{dixon2005functional}. Bootstrap 2 can be schematized like Algorithm 1, after replacing Step (ii) with 
	\begin{enumerate}[label=(\roman*)]
		\item[(ii)] Sample $\widetilde{\vectr{\beta}}^{(b)}_{12}$, $\widehat{\vectr{\beta}}^{(b)}_k$, $k\in\{13,23,C\}$, from	 normal distributions with means $\widetilde{\vectr{\beta}}_{12}$, $\widehat{\vectr{\beta}}_k$ and PL-based inverse information matrices as variances, see Appendix A.3 for explicit expressions.
	\end{enumerate}
		 Although faster than a full weighted-bootstrap, it still necessitates maximizing the pairwise pseudolikelihood in each bootstrap sample. Subsequently, we outline an alternative heuristic approach, only partly backed up theoretically, yet effective in practice. Importantly, this approach eliminates the need for a numerical optimization routine. \\
	\noindent \underline{\textbf{Bootstrap 3}}: This approach takes advantage of the closed-form variance formula available when the nuisance parameters are assumed known and not estimated.  The description here aligns with the subsampling framework using $K_n$ pairs per observation. The necessary modifications  for the estimator involving all pairs are outlined in Appendix A.3, which also includes the derivations leading to the final variance estimator, now being presented. Denote 
	\begin{equation} \label{Eq:scorebeta12}
		\vectr{U}_{K_n}(\vectr{\beta}_{12},\vectr{\theta},H_{012}) = \frac{\partial  l_{K_n}^{pair}(\vectr{\beta}_{12},\vectr{\theta},H_{012})}{\partial \vectr{\beta}^T_{12}} = \frac{1}{nK_n }\sum_{i=1}^n\sum_{j =i+1}^{i+K_n}\vectr{\psi}_{ij}(\vectr{\beta}_{12},\vectr{\theta},H_{012})
	\end{equation}
	where 
	\begin{equation*}
		\vectr{\psi}_{ij}(\vectr{\beta}_{12},\vectr{\theta},H_{012})=-\frac{\zeta_{ij}(\vectr{\theta})\vectr{\eta}'_{ij}(\vectr{\beta}_{12},H{_{012}})}{1+\zeta_{ij}(\vectr{\theta})\eta_{ij}(\vectr{\beta}_{12},H_{012})} \, ,
	\end{equation*}
	and
	\begin{eqnarray} 
	\vectr{\eta}'_{ij}(\vectr{\beta}_{12},H_{012}) &=& \frac{\partial  \eta_{ij}(\vectr{\beta}_{12},H_{012})}{\partial \vectr{\beta}^T_{12}} =   \eta_{ij}(\vectr{\beta}_{12},H_{012})\bigg[\left(\vectr{Z}_i - \vectr{Z}_j\right)(\Delta_{1j} - \Delta_{1i}) \nonumber  \\
	& & + \left\{H_{012}(V_i) - H_{012}(V_j)\right\}\left(e^{\vectr{\beta}_{12}^T\vectr{Z}_i}\vectr{Z}_i - e^{\vectr{\beta}_{12}^T\vectr{Z}_j}\vectr{Z}_j\right)\bigg]  \, . \label{Eq:eta_grad} 
\end{eqnarray}
	
	Then, the variance of $\widehat{\vectr{\beta}}_{12}$ can be consistently estimated by
	$$\widehat{\V}\left(\widehat{\vectr{\beta}}_{12}\right) = \widehat{\matrx{V}}_1^{-1}\widehat{\matrx{V}}_2\widehat{\matrx{V}}_1^{-1} + \widehat{\matrx{V}}_3 \, ,$$
	where
	$$\widehat{\matrx{V}}_1 = \frac{\partial\vectr{U}_{K_n}\left(\vectr{\beta}_{12},\widehat{\vectr{\theta}},\widehat{H}_{012}\right)}{\partial \vectr{\beta}_{12}}\bigg|_{\vectr{\beta}_{12} = \widehat{\vectr{\beta}}_{12}}  = \frac{\partial\vectr{U}_{K_n}\left(\widehat{\vectr{\beta}}_{12},\widehat{\vectr{\theta}},\widehat{H}_{012}\right)}{\partial \vectr{\beta}_{12}} \, ,
	$$
	and this abuse of notation recurs throughout this paper. Additionally,
	$$\widehat{\matrx{V}}_2 = \frac{1}{n^2K_n^2}\sum_{i=1}^n\sum^{i+K_n}_{j=i+1}\widehat{\vectr{\psi}}^{\otimes2}_{ij} + \frac{2(2K_n-1)}{n^2K_n^2(K_n-1)}\sum_{i = 1}^n\sum_{j = i+1}^{i+K_n}\sum^{i+K_n}_{\substack{l = i+1 \\ j\ne l}}\widehat{\vectr{\psi}}_{ij}\widehat{\vectr{\psi}}^T_{il} \, ,$$
	where $\widehat{\vectr{\psi}}_{ij}$ is in the sense of $\vectr{\psi}_{ij}\left(\widehat{\vectr{\beta}}_{12},\widehat{\vectr{\theta}},\widehat{H}_{012}\right)$. 
	
	Since deriving $\widehat{\matrx{V}}_2$ requires $O\left(nK_n^2\right)$ terms, one may wish to perform a second round of subsampling just for the sake of variance estimation. Suppose that for variance estimation one used $\widetilde{K}_n$ pairs such that $\widetilde{K}_n < K_n$, then the estimator should be modified to
	\begin{equation} \label{Eq:V2estFewerPairs}
		\widetilde{\matrx{V}}_2 = \frac{1}{n^2K_n\widetilde{K}_n}\sum_{i=1}^n\sum^{i+\widetilde{K}_n}_{j=i+1}\widehat{\vectr{\psi}}^{\otimes2}_{ij} + \frac{2(2K_n-1)}{n^2K_n\widetilde{K}_n\left(\widetilde{K}_n-1\right)}\sum_{i = 1}^n\sum_{j = i+1}^{i+\widetilde{K}_n}\sum^{i+\widetilde{K}_n}_{\substack{l = i+1 \\ j\ne l}}\widehat{\vectr{\psi}}_{ij}\widehat{\vectr{\psi}}^T_{il} \, .
	\end{equation}
	
	For $\widehat{\matrx{V}}_3$, let us generate $B$ bootstrap replicates of $\widehat{\vectr{\theta}}$ and $\widehat{H}_{012}$ following Steps (i)--(iii) in Bootstrap 2, then derive
	$$\vectr{\mathfrak{U}}^{(b)} = \left\{\frac{\partial\vectr{U}_{K_n}\left(\widehat{\vectr{\beta}}_{12},\widehat{\vectr{\theta}}^{(b)},\widehat{H}^{(b)}_{012}\right)}{\partial \vectr{\beta}_{12}}\right\}^{-1}\vectr{U}_{K_n}\left(\widehat{\vectr{\beta}}_{12},\widehat{\vectr{\theta}}^{(b)},\widehat{H}^{(b)}_{012}\right) \, ,$$
	$b=1,\ldots,B$, and $\widehat{\matrx{V}}_3$ is the empirical variance matrix estimated from these vectors. 
	 
	 The performance of the three bootstrap methods is demonstrated in the simulation study in Section 3, as well as in the real data analysis in Section 4. It is clearly seen that the methods agree with each other, and can be used for valid statistical inference. Nonetheless, within our simulation study, we encountered sporadic instability issues with Bootstrap 3 in a particular setting (setting A) under the smaller sample size scenario ($n=1,500$, and see Table \ref{Tab:simEvents} for observed-event counts), see Section 3 for more details. Therefore, as the computational burden is not heavy in small sample sizes, we would recommend Bootstrap 2 as a more suitable alternative. In contrast, when dealing with larger sample sizes, no such issue has been observed with Bootstrap 3, and it is therefore recommended, given its speed and scalability.
	
	\section{Simulation Study}
	To assess the proposed estimator's performance, a simulation study was conducted based on 200 samples, with two considered sample sizes $n=1,500/10,000$, and with $K_n = 50$. For each observation we sample its age at recruitment, censoring, disease onset, and pre-disease death.  If $\Delta_{1}=1$, we substitute the pre-disease death age with a newly-sampled post-disease death age. In this manner a large pool of observations is generated, out of which we draw $n$ observations satisfying the condition $T_2>R$. Eight covariates were generated and employed in estimating all considered models, even if not all were used for data generation. Three settings were considered, representing different data characteristics, as follows.
	
	\textbf{\underline{Setting A:}} The failure times were sampled from Cox models, with baseline hazard functions
	$h^o_{012}(t) = 0.02$, $h^o_{013}(t) = 0.02$, $h^o_{023}(t) = 0.05$, and coefficients, $\vectr{\beta}^o_{12} = (2, -1.5,$ $0.1,$ $-0.5,$ $1,$ $-2.5,$ $-1, 0)^T$, $\vectr{\beta}^o_{13} = (0.3,0,0,0,-0.2,0.4,0,0.7)^T$ and  $\vectr{\beta}^o_{23} =$ $(0, 0, 0, 0, 0,$ $0, -0.3, 0.9, 0.05)^T$, 
	where the last element in $\vectr{\beta}^o_{23}$ is the coefficient corresponding to $t_1$. 
	Denote $x_{[l]}$ as the $l$'th element of a vector $\vectr{x}$. To mimic real data where covariates may come from many dissimilar distributions, they were generated independently as follows. $Z_{[1]}$ is generated from a gamma distribution with shape 2 and rate 6, $Z_{[2]}$ from a geometric distribution with probability 1/10, $Z_{[3]}$ from an exponential distribution with rate 0.25, $Z_{[4]}$ from a beta distribution with parameters 2 and 8, $Z_{[5]}$ from a normal distribution with mean 0 and variance 4, $Z_{[6]}$ from a Weibull distribution with shape 3 and scale 4, $Z_{[7]}$ from a Poisson distribution with intensity 5 and $Z_{[8]}$ from a standard uniform distribution. 
	As a following step, each covariate was scaled to be supported on the unit interval, using the so-called ``min-max standardization", namely, given a vector $\vectr{x}$, its min-max standardization is $\vectr{x}'=\left\{\vectr{x}-\min(\vectr{x})\right\}/\left\{\max(\vectr{x})-\min(\vectr{x})\right\}$. Recruitment times were sampled from a symmetric triangular distribution between 0 and 22, and censoring times were generated from an exponential distribution with rate 0.05 restricted to be larger than the corresponding recruitment times. In this setting censoring and recruitment times are independent of the covariates. 
	
	
	\textbf{\underline{Setting B:}} All failure and censoring times were sampled from Cox models, with coefficient vectors and baseline hazard functions identical to setting A, except for the censoring distribution which has baseline hazard function $h_C^o(t)=0.05$, and coefficient vector $\vectr{\beta}^o_C=(0,1.5,0,0,0.5,0,0,0,0)^T$.
	The censoring times were restricted to be larger than the corresponding recruitment times. Covariates were generated from a Gaussian copula with a correlation matrix having 0.8 on the off-diagonal entries, and 
	recruitment times were generated as $R = (1 + 5Z_{[1]} + 7Z_{[2]} + 10Z_{[6]} + \varepsilon)_+$, where $\varepsilon \sim N(0,1)$, and $x_+ = \max(x,0)$.
	Both the censoring and the recruitment ages depend on the covariates, but are conditionally independent of the failure times, given the covariates. Additionally, the covariates are strongly correlated.
	
	\noindent \textbf{\underline{Setting C (misspecification):}}
	Transitions $1\rightarrow3$, $2\rightarrow3$, and the censoring distribution hold secondary interest, merely serving to incorporate the prevalent observations in the analysis. Thus, assessing the estimation sensitivity to their misspecification is vital.
	Inspired by \cite{zhu2012recursively}, three models were employed to simulate transitions $1\rightarrow3$, $2\rightarrow3$ and censoring, each violating the Cox model assumptions. Despite these violations, estimation was PL-based, and the estimates were plugged into the pairwise pseudolikelihood in Eq.(\ref{Eq:PairwiseLogLikeKn}) for obtaining $\widehat{\vectr{\beta}}_{12}$.
	
	Transition $1\rightarrow2$ was simulated from a Cox model with $\vectr{\beta}^o_{12}=(2,-1,0.1,-0.5,1,-1,-1,0)^T$ and baseline hazard function $h^o_{012}(t)=0.01$. Transition $1\rightarrow3$ was generated from an exponential distribution with rate $0.04/\mu_1$, $\mu_1=\sin(\pi Z_{[1]})+2|Z_{[5]} - 0.5| + Z^3_{[6]}$, and transition $2\rightarrow3$ was generated as $T_2=G+T_1$, where $G$ is gamma-distributed with scale 3 and shape $\mu_2=0.5 + \cos(\pi Z_{[7]})^2 + 2|Z_{[8]} - 0.5| + \sqrt{T_1}/3$. Censoring ages were generated as $C=L+R$, where $R$ is the recruitment age and $L$ is generated from a log-normal distribution with $\E(\ln(L))=3|Z_{[2]}-0.5| + 2Z_{[5]}$ and $\V(\ln(L)) = 1.5^2$. Covariates were generated as in setting B, and recruitment ages were sampled such that $R=(1+5Z_{[1]}+6Z_{[2]} + 4Z_{[6]} + \varepsilon)_+$, where $\varepsilon \sim N(0,1)$.

	Table \ref{Tab:simEvents} provides observed event counts for transitions $1\rightarrow2$, $1\rightarrow3$, $2\rightarrow3$, and prevalent events. Tables \ref{Tab:sim1}--\ref{Tab:sim3} display point estimates for $\vectr{\beta}_{12}$ using the standard PL estimator with risk-set adjustment, excluding the prevalent observations, and the proposed pairwise pseudolikelihood. Empirical standard errors (SE) and relative efficiency (RE) are shown, representing the ratio of mean-squared errors between the PL and the proposed estimator. Additionally, to validate the bootstrap methods, $B=100$ bootstrap sample were used per original sample. The mean estimated SEs and coverage rates (CR) of 95\% bootstrap-based Wald-type confidence intervals are presented for all three bootstrap approaches.
	
	In all settings, point estimates closely align with the true parameters, and the proposed approach considerably outperforms PL estimators in terms of SE. In setting A, RE ranges from 1.28 to 2.04, setting B shows RE from 1.5 to 2.15, while in setting C it varies between 1.37 and 1.89. Importantly, the improvement does not diminish upon increasing the sample size. The prevalent observations account for a sizable proportion of the observed events in transition $1\rightarrow2$ (approximately 47\%, 43\% and 39\% in settings A, B, C, respectively), thus play a crucial role in the RE. All three bootstrap variance estimation approaches are in agreement, yielding close empirical and estimated SEs, while maintaining correct CRs.
	
	As noted in the previous section, in setting A with $n=1,500$, Bootstrap 3 encountered occasional instability. Among the initial set of 200 samples, 17 exhibited the presence of outlier values in at least one of their corresponding bootstrap samples. In cases where this issue arose, it typically involved only a single outlier result within the 100 bootstrap samples, though in one instance, there were as many as five such outlier results. Therefore, in setting A, with $n=1,500$, we opted to employ the established relationship between standard deviation and the median absolute deviation (MAD) for the normal distribution. In each sample within this setting, we estimated the standard errors based on Bootstrap 3 as $MAD\times 1.4826$, rather than relying on the empirical standard deviation.
	 Notably, in setting C, despite severe misspecifications, results remain robust and thus endorse the safe use of Cox models with PL-based estimation for the nuisance parameters.
	
	For sensitivity analysis on $K_n$, 200 replicates of settings A--C were generated and analyzed using $K_n=10,25,100,200$. Refer to Table \ref{Tab:simKn} in Appendix A.4 for empirical SEs. Evidently, while $K_n=10$ increased the SEs, other values merely differed, especially at $n=10,000$. These results imply that using all pairs has no extra benefit, and a modest $K_n$ value suffices.

	\FloatBarrier
	
	\begin{table}[ht]
		\spacingset{1}
		\centering
		\begin{tabular}{lccc}
			\hline
			& Setting A & Setting B & Setting C \\ 
			\hline
			$n=1,500$ & & & \\
			$n_{12}$ & 256(22)& 189(13) &  164(34) \\ 
			$n_{prev}$ & 109(12)& 81(9) &  64(8) \\
			$n_{13}$ & 484(19)& 293(15) &  352(111) \\ 
			$n_{23}$ & 186(18)& 99(11) &  102(38) \\ 
			\hline
			$n=10,000$ & & & \\
			$n_{12}$ & 1806(88) & 1252 (35) & 1092(209)  \\ 
			$n_{prev}$ & 759(44)& 542(23) &  423(21) \\ 
			$n_{13}$ & 3148(54)& 1968 (39) & 2373(720)  \\ 
			$n_{23}$ & 1310(64)& 657  (24) & 678(239)  \\ 
			\hline
		\end{tabular}
		\caption{Number of observed events per transition in the simulation study: means (standard deviations), where $n_{12}$, $n_{prev}$, $n_{13}$ and $n_{23}$ stand for the numbers of $1\rightarrow2$ (including prevalent), prevalent, $1\rightarrow3$ and $2\rightarrow3$ cases, respectively. \label{Tab:simEvents}}
	\end{table}
	
	\begin{table}[ht]
		\spacingset{1}
		\centering
		\begin{tabular}{lcccccccc}
			\hline
			$\vectr{\beta}^o_{12}$ & 2.00 & -1.50 & 0.10 & -0.50 & 1.00 & -2.50 & -1.00 & 0.00 \\ 
			\hline
			$n=1,500$&  &   &  &  &  &  &  &  \\ 
			PL & 1.96 & -1.61 & 0.10 & -0.48 & 0.96 & -2.52 & -1.03 & 0.02 \\ 
			Pairwise & 1.99 & -1.52 & 0.13 & -0.46 & 0.98 & -2.56 & -1.03 & 0.00 \\ 
			PL-SE & 0.68 & 0.95 & 0.79 & 0.55 & 0.62 & 0.55 & 0.56 & 0.27 \\ 
			Pairwise-SE & \textbf{0.52} &\textbf{ 0.72} & \textbf{0.59} & \textbf{0.42} & \textbf{0.55} & \textbf{0.44} & \textbf{0.44} & \textbf{0.21} \\ 
			RE & \textbf{1.69} & \textbf{1.78} & \textbf{1.80} & \textbf{1.69} & \textbf{1.28} &\textbf{ 1.53} & \textbf{1.64} & \textbf{1.70} \\ 
			Bootstrap1-SE & 0.54 & 0.71 & 0.63 & 0.45 & 0.52 & 0.48 & 0.49 & 0.25 \\ 
			Bootstrap2-SE & 0.56 & 0.73 & 0.64 & 0.45 & 0.51 & 0.53 & 0.50 & 0.25 \\ 
			Bootstrap3-SE & 0.54 & 0.71 & 0.62 & 0.44 & 0.49 & 0.51 & 0.49 & 0.24\\ 
			Bootstrap1-CR & 0.96 & 0.96 & 0.95 & 0.96 & 0.92 & 0.96 & 0.96 & 0.98 \\ 
			Bootstrap2-CR & 0.97 & 0.96 & 0.96 & 0.96 & 0.92 & 0.97 & 0.97 & 0.98 \\ 
			Bootstrap3-CR & 0.97 & 0.96 & 0.95 & 0.95 & 0.90 & 0.97 & 0.96 & 0.99 \\ 
			\hline
			$n=10,000$&  &   &  &  &  &  &  &  \\  
	  PL & 2.03 & -1.52 & 0.11 & -0.53 & 0.98 & -2.50 & -1.01 & 0.01 \\ 
	Pairwise & 2.02 & -1.52 & 0.10 & -0.52 & 1.01 & -2.49 & -1.01 & -0.01 \\ 
	PL-SE & 0.27 & 0.32 & 0.28 & 0.30 & 0.29 & 0.33 & 0.31 & 0.29 \\ 
	Pairwise-SE & \textbf{0.22} & \textbf{0.23} & \textbf{0.21} & \textbf{0.21} & \textbf{0.22} & \textbf{0.23} & \textbf{0.23} & \textbf{0.24} \\ 
	RE & \textbf{1.50} & \textbf{1.93} & \textbf{1.77 }& \textbf{1.92} & \textbf{1.65} & \textbf{2.04} & \textbf{1.83} & \textbf{1.51} \\ 
	Bootstrap1-SE & 0.22 & 0.24 & 0.23 & 0.23 & 0.23 & 0.25 & 0.23 & 0.23 \\ 
	Bootstrap2-SE & 0.22 & 0.24 & 0.23 & 0.23 & 0.23 & 0.25 & 0.23 & 0.23 \\ 
	Bootstrap3-SE & 0.22 & 0.24 & 0.22 & 0.23 & 0.22 & 0.25 & 0.23 & 0.23 \\ 
	Bootstrap1-CR & 0.94 & 0.94 & 0.96 & 0.98 & 0.94 & 0.98 & 0.94 & 0.93 \\ 
	Bootstrap2-CR & 0.94 & 0.95 & 0.96 & 0.98 & 0.95 & 0.98 & 0.94 & 0.93 \\ 
	Bootstrap3-CR & 0.94 & 0.95 & 0.95 & 0.97 & 0.94 & 0.97 & 0.94 & 0.92 \\ 
	\hline
		\end{tabular}
	\caption{Simulation results for setting A: point estimates based on the standard PL estimator with the risk-set adjustment for left truncation (PL) and the proposed pairwise pseudolikelihood (Pairwise), their corresponding empirical standard errors (PL-SE and Pairwise-SE) and bootstrap standard errors and coverage rates, based on 200 replicates, and $B=100$. The relative efficiency (RE) is the ratio of mean-squared errors between the PL and the proposed estimator. \label{Tab:sim1}}
	\end{table}
	
	\begin{table}[ht]
		\centering
		\spacingset{1}
		\begin{tabular}{lcccccccc}
			\hline
			$\vectr{\beta}^o_{12}$ & 2.00 & -1.50 & 0.10 & -0.50 & 1.00 & -2.50 & -1.00 & 0.00 \\ 
			\hline
			$n=1,500$ &  &  &  &  &  &  &  &  \\ 
			PL & 1.93 & -1.52 & 0.13 & -0.53 & 1.07 & -2.60 & -1.08 & 0.00 \\ 
			Pairwise & 1.91 & -1.55 & 0.18 & -0.50 & 1.02 & -2.54 & -1.08 & 0.04 \\ 
			PL-SE & 0.77 & 0.80 & 0.81 & 0.84 & 0.82 & 0.97 & 0.83 & 0.85 \\ 
			Pairwise-SE & \textbf{0.62} & \textbf{0.61} & \textbf{0.60} &\textbf{ 0.63} & \textbf{0.63} & \textbf{0.71} & \textbf{0.64} & \textbf{0.58} \\ 
			RE & \textbf{1.51} & \textbf{1.71} & \textbf{1.83} & \textbf{1.83} & \textbf{1.72} & \textbf{1.88} & \textbf{1.65} & \textbf{2.15} \\ 
			Bootstrap1-SE & 0.59 & 0.65 & 0.62 & 0.62 & 0.61 & 0.67 & 0.62 & 0.61 \\ 
			Bootstrap2-SE & 0.60 & 0.66 & 0.62 & 0.63 & 0.62 & 0.68 & 0.63 & 0.62 \\ 
			Bootstrap3-SE & 0.58 & 0.65 & 0.60 & 0.61 & 0.60 & 0.66 & 0.61 & 0.61 \\ 
			Bootstrap1-CR & 0.92 & 0.95 & 0.95 & 0.94 & 0.93 & 0.96 & 0.94 & 0.96 \\ 
			Bootstrap2-CR & 0.92 & 0.97 & 0.96 & 0.95 & 0.94 & 0.96 & 0.94 & 0.97 \\ 
			Bootstrap3-CR & 0.91 & 0.97 & 0.95 & 0.94 & 0.92 & 0.96 & 0.94 & 0.96 \\ 
			\hline
			$n=10,000$&  &   &  &  &  &  &  &  \\ 
			PL & 1.99 & -1.53 & 0.11 & -0.50 & 1.00 & -2.49 & -1.00 & 0.00 \\ 
			Pairwise & 1.99 & -1.52 & 0.11 & -0.49 & 0.99 & -2.51 & -1.00 & 0.02 \\ 
			PL-SE& 0.29 & 0.33 & 0.34 & 0.31 & 0.27 & 0.34 & 0.32 & 0.30 \\ 
			Pairwise-SE & \textbf{0.21} & \textbf{0.25} & \textbf{0.23} & \textbf{0.23} & \textbf{0.22} & \textbf{0.24} & \textbf{0.22} & \textbf{0.24} \\ 
			RE &\textbf{ 1.92} & \textbf{1.74} & \textbf{2.09} & \textbf{1.73} &\textbf{ 1.47} &\textbf{ 1.96} & \textbf{2.07} & \textbf{1.52} \\ 
			Bootstrap1-SE & 0.22 & 0.25 & 0.23 & 0.23 & 0.23 & 0.26 & 0.24 & 0.23 \\ 
			Bootstrap2-SE & 0.22 & 0.25 & 0.23 & 0.23 & 0.23 & 0.25 & 0.24 & 0.23 \\ 
			Bootstrap3-SE & 0.22 & 0.24 & 0.22 & 0.23 & 0.22 & 0.25 & 0.23 & 0.23 \\ 
			Bootstrap1-CR & 0.97 & 0.94 & 0.95 & 0.94 & 0.94 & 0.97 & 0.96 & 0.95 \\
			Bootstrap2-CR & 0.97 & 0.94 & 0.94 & 0.94 & 0.95 & 0.97 & 0.96 & 0.94 \\ 
			Bootstrap3-CR & 0.97 & 0.94 & 0.92 & 0.92 & 0.95 & 0.97 & 0.96 & 0.94 \\ 
			\hline
		\end{tabular}
		\caption{Simulation results for setting B: point estimates based on the standard PL estimator with the risk-set adjustment for left truncation (PL) and the proposed pairwise pseudolikelihood (Pairwise), their corresponding empirical standard errors (PL-SE and Pairwise-SE) and bootstrap standard errors and coverage rates, based on 200 replicates, and $B=100$. The relative efficiency (RE) is the ratio of mean-squared errors between the PL and the proposed estimator. \label{Tab:sim2}}
	\end{table}
	
	\begin{table}[ht]
		\spacingset{1}
		\centering
		\begin{tabular}{lcccccccc}
			\hline
			$\vectr{\beta}^o_{12}$ & 2.00 & -1.00 & 0.10 & -0.50 & 1.00 & -1.00 & -1.00 & 0.00 \\ 
			\hline
			$n=1,500$&  &   &  &  &  &  &  &  \\  
		  PL & 2.09 & -1.08 & 0.11 & -0.51 & 1.01 & -1.08 & -1.00 & 0.01 \\ 
		Pairwise & 2.09 & -1.09 & 0.14 & -0.53 & 1.06 & -1.07 & -0.97 & 0.05 \\ 
		PL-SE & 0.79 & 0.81 & 0.76 & 0.80 & 0.80 & 0.82 & 0.89 & 0.87 \\ 
		Pairwise-SE & \textbf{0.63} & \textbf{0.65} & \textbf{0.64} & \textbf{0.66} & \textbf{0.67} & \textbf{0.65} & \textbf{0.69} & \textbf{0.67} \\ 
		RE & \textbf{1.55} & \textbf{1.54} & \textbf{1.39} & \textbf{1.47} & \textbf{1.39} & \textbf{1.57} & \textbf{1.66} & \textbf{1.67} \\ 
		Bootstrap1-SE & 0.67 & 0.65 & 0.66 & 0.66 & 0.66 & 0.66 & 0.67 & 0.66 \\ 
		Bootstrap2-SE & 0.67 & 0.65 & 0.66 & 0.66 & 0.66 & 0.66 & 0.67 & 0.66 \\ 
		Bootstrap3-SE & 0.64 & 0.60 & 0.62 & 0.63 & 0.63 & 0.63 & 0.63 & 0.62 \\ 
		Bootstrap1-CR & 0.98 & 0.95 & 0.95 & 0.96 & 0.95 & 0.97 & 0.94 & 0.95 \\ 
		Bootstrap2-CR & 0.98 & 0.95 & 0.95 & 0.96 & 0.94 & 0.97 & 0.94 & 0.94 \\ 
		Bootstrap3-CR & 0.97 & 0.94 & 0.95 & 0.96 & 0.94 & 0.95 & 0.92 & 0.92 \\ 
		\hline
			$n=10,000$&  &   &  &  &  &  &  &  \\  
		  PL & 2.01 & -0.96 & 0.09 & -0.50 & 1.00 & -1.00 & -1.01 & -0.03 \\ 
		Pairwise & 2.02 & -0.98 & 0.10 & -0.51 & 1.02 & -0.99 & -0.96 & 0.03 \\ 
		PL-SE & 0.32 & 0.28 & 0.30 & 0.32 & 0.31 & 0.34 & 0.29 & 0.31 \\ 
		Pairwise-SE & \textbf{0.23} & \textbf{0.23} & \textbf{0.24} & \textbf{0.26} & \textbf{0.22} & \textbf{0.27} & \textbf{0.24} & \textbf{0.24} \\ 
		RE & \textbf{1.83} & \textbf{1.49} & \textbf{1.54} & \textbf{1.56} & \textbf{1.89} & \textbf{1.66} & \textbf{1.37} & \textbf{1.60} \\ 
		Bootstrap1-SE & 0.25 & 0.24 & 0.24 & 0.25 & 0.24 & 0.25 & 0.25 & 0.24 \\ 
		Bootstrap2-SE & 0.25 & 0.24 & 0.24 & 0.24 & 0.24 & 0.25 & 0.24 & 0.24 \\ 
		Bootstrap3-SE & 0.24 & 0.23 & 0.24 & 0.24 & 0.24 & 0.24 & 0.24 & 0.24 \\ 
		Bootstrap1-CR & 0.95 & 0.93 & 0.92 & 0.93 & 0.98 & 0.93 & 0.95 & 0.93 \\ 
		Bootstrap2-CR & 0.95 & 0.92 & 0.93 & 0.93 & 0.98 & 0.92 & 0.95 & 0.92 \\ 
		Bootstrap3-CR & 0.94 & 0.92 & 0.92 & 0.93 & 0.97 & 0.91 & 0.95 & 0.94 \\ 
		\hline
		\end{tabular}
		\caption{Simulation results for setting C (misspecification): point estimates based on the standard PL estimator with the risk-set adjustment for left truncation (PL) and the proposed pairwise pseudolikelihood (Pairwise), their corresponding empirical standard errors (PL-SE and Pairwise-SE) and bootstrap standard errors and coverage rates, based on 200 replicates, and $B=100$. The relative efficiency (RE) is the ratio of mean-squared errors between the PL and the proposed estimator. \label{Tab:sim3}}
	\end{table}
	
	\FloatBarrier
	
	\section{UKB - UBC Replication Study}
	We compiled a set of 31 SNPs identified in previous GWAS to be associated with UBC. Details including chromosome number, position, effect allele, other allele, and references are available in Table \ref{Tab:SNPsDetails} in Appendix A.4. The purpose is evaluating the replicability of these associations in the UKB, being an independent cohort. Individual models for each SNP were fitted, using both PL and the proposed pairwise pseudolikelihood with $K_n=100$, resulting in more than 48 million pairs. In the UKB data there are 1,761 observed events in transition $1\rightarrow2$, 637 being prevalent, and 33,059 and 602 observed events in transitions $1\rightarrow3$ and $2\rightarrow3$, respectively.
	Each model contained the SNP being examined, sex, and the first six genetic principal components to account for population substructure \citep{jeon2018determining}. SNP values and genetic PCs were standardized to have zero mean and unit variance.  For variance estimation, we employed Bootstrap 2--3 with $B=500$ bootstrap samples, and $\widetilde{K}_n=25$ in Bootstrap 3.
	
	To address multiple testing, we applied the BH procedure with a 0.05 significance threshold. All SNPs studied were previously associated with increased UBC risk, prompting one-sided tests for effects being greater than zero. Due to potential SNP correlations, their p-values might also correlate, and based on \citet[Case 1]{benjamini2001control}, it is required to confirm non-negative correlation of test statistics for validity of the BH procedure.
	To that end, 500 bootstrap samples were drawn from the UKB data, and 31 SNP-specific models were estimated using PL. The empirical correlation matrix among the resulting test statistics was then computed. The strongest negative correlation was only -0.12, whereas positive correlations neared 1, as illustrated in Figure \ref{Fig:SNPCorBoxplot} in Appendix A.4. These findings confirm non-negative correlations, validating the BH procedure. A similar conclusion for the proposed pairwise pseudolikelihood is anticipated.
	
	Analysis results are summarized in Table \ref{Tab:SNPs100a}, Table \ref{Tab:SNPs100b} in Appendix A.4, and Figure \ref{Fig:SNPse}. Figure \ref{Fig:SNPse} illustrates that the proposed approach yields lower SEs than PL, uniformly across all SNPs. Moreover, the bootstrap approaches display strong agreement regarding the estimated SEs. Owing to reduced SEs, the proposed approach revealed more significant associations, as shown in Tables \ref{Tab:SNPs100a} and \ref{Tab:SNPs100b}. Indeed, out of 31 examined SNPs, 11 achieved significance at the 0.05 level with BH correction, regardless of the chosen bootstrap procedure, in contrast to only six detected by PL. As a sensitivity analysis, we repeated the analysis with $K_n = 150$, see Table \ref{Tab:SNPs150} in Appendix A.4. Increasing $K_n$ had negligible impact on point estimates, estimated SEs, or p-values. 
	
	\FloatBarrier
	
	\begin{table}[ht]
		\spacingset{1}
		\centering
		\begin{tabular}{l|cc|cc}
			\hline
			& \multicolumn{2}{c}{PL} & \multicolumn{2}{c}{Pairwise} \\
			\hline
			SNP & est. effect & adj. p-value & est. effect & adj. p-value \\ 
			\hline
			rs11892031 & 0.052 (0.032) & 0.125 & 0.053 (0.027) & 0.059 \\ 
			rs1052133 & 0.007 (0.030) & 0.692 & 0.012 (0.025) & 0.342 \\ 
			rs10936599 & 0.028 (0.030) & 0.376 & 0.041 (0.026) & 0.115 \\ 
			rs710521 & 0.097 (0.031) & \textbf{0.009} & 0.100 (0.026) & \textbf{0.001} \\ 
			rs798766 & -0.024 (0.030) & 0.835 & 0.049 (0.025) & 0.063 \\ 
			rs401681 & 0.067 (0.030) & 0.059 & 0.077 (0.026) & \textbf{0.007} \\ 
			rs884225 & -0.012 (0.031) & 0.823 & 0.028 (0.026) & 0.263 \\ 
			rs1057868 & -0.061 (0.029) & 0.992 & -0.028 (0.025) & 0.894 \\ 
			rs17149580 & -0.017 (0.030) & 0.823 & 0.015 (0.026) & 0.342 \\ 
			rs12666814 & -0.020 (0.030) & 0.727 & 0.013 (0.025) & 0.342 \\ 
			rs73223045 & -0.014 (0.030) & 0.823 & 0.016 (0.026) & 0.342 \\ 
			rs41515546 & -0.016 (0.030) & 0.823 & 0.015 (0.026) & 0.342 \\ 
			rs12673089 & -0.015 (0.030) & 0.823 & 0.016 (0.026) & 0.342 \\ 
			rs17149628 & -0.016 (0.030) & 0.823 & 0.016 (0.026) & 0.342 \\ 
			rs17149630 & -0.016 (0.030) & 0.823 & 0.016 (0.026) & 0.342 \\ 
			rs17149636 & -0.015 (0.030) & 0.823 & 0.016 (0.026) & 0.342 \\ 
			rs1495741 & 0.063 (0.031) & 0.081 & 0.073 (0.025) & \textbf{0.007} \\ 
			rs9642880 & 0.089 (0.030) & \textbf{0.011} & 0.092 (0.026) & \textbf{0.001} \\ 
			rs2294008 & 0.056 (0.030) & 0.106 & 0.103 (0.026) & \textbf{0.001} \\ 
			rs142492877 & 0.006 (0.031) & 0.692 & 0.014 (0.026) & 0.342 \\ 
			rs907611 & 0.023 (0.030) & 0.419 & 0.024 (0.025) & 0.303 \\
			rs217727 & 0.022 (0.029) & 0.419 & -0.002 (0.025) & 0.569 \\ 
			rs9344 & -0.072 (0.030) & 0.992 & -0.041 (0.026) & 0.944 \\  
			rs4907479 & 0.084 (0.029) & \textbf{0.011} & 0.072 (0.025) & \textbf{0.007} \\ 
			rs17674580 & 0.100 (0.029) & \textbf{0.005} & 0.090 (0.025) & \textbf{0.001} \\
		rs1058396 & 0.054 (0.030) & 0.113 & 0.047 (0.025) & 0.073 \\  
			rs8102137 & 0.104 (0.029) & \textbf{0.005} & 0.081 (0.025) & \textbf{0.003} \\ 
			rs62185668 & 0.050 (0.029) & 0.123 & 0.068 (0.025) & \textbf{0.009} \\ 
			rs6104690 & 0.038 (0.030) & 0.227 & 0.025 (0.025) & 0.291 \\ 
			rs4813953 & 0.042 (0.030) & 0.193 & 0.073 (0.025) & \textbf{0.007} \\ 
			rs1014971 & 0.078 (0.031) & \textbf{0.028} & 0.067 (0.026) & \textbf{0.016} \\ 
			\hline
		\end{tabular}
		\caption{Replicability analysis of 31 SNPs based on the UKB UBC data: estimated effects (standard errors), and BH-adjusted p-values for the PL and the proposed pairwise pseudolikelihood with $K_n=100$. SEs for the pairwise pseudolikelihood are based on Bootstrap 3. Significant effects at the 0.05 threshold are marked in bold. \label{Tab:SNPs100a}}
	\end{table}

	\begin{figure}[h]
		\centering
		\spacingset{1}
		\includegraphics[width=140mm]{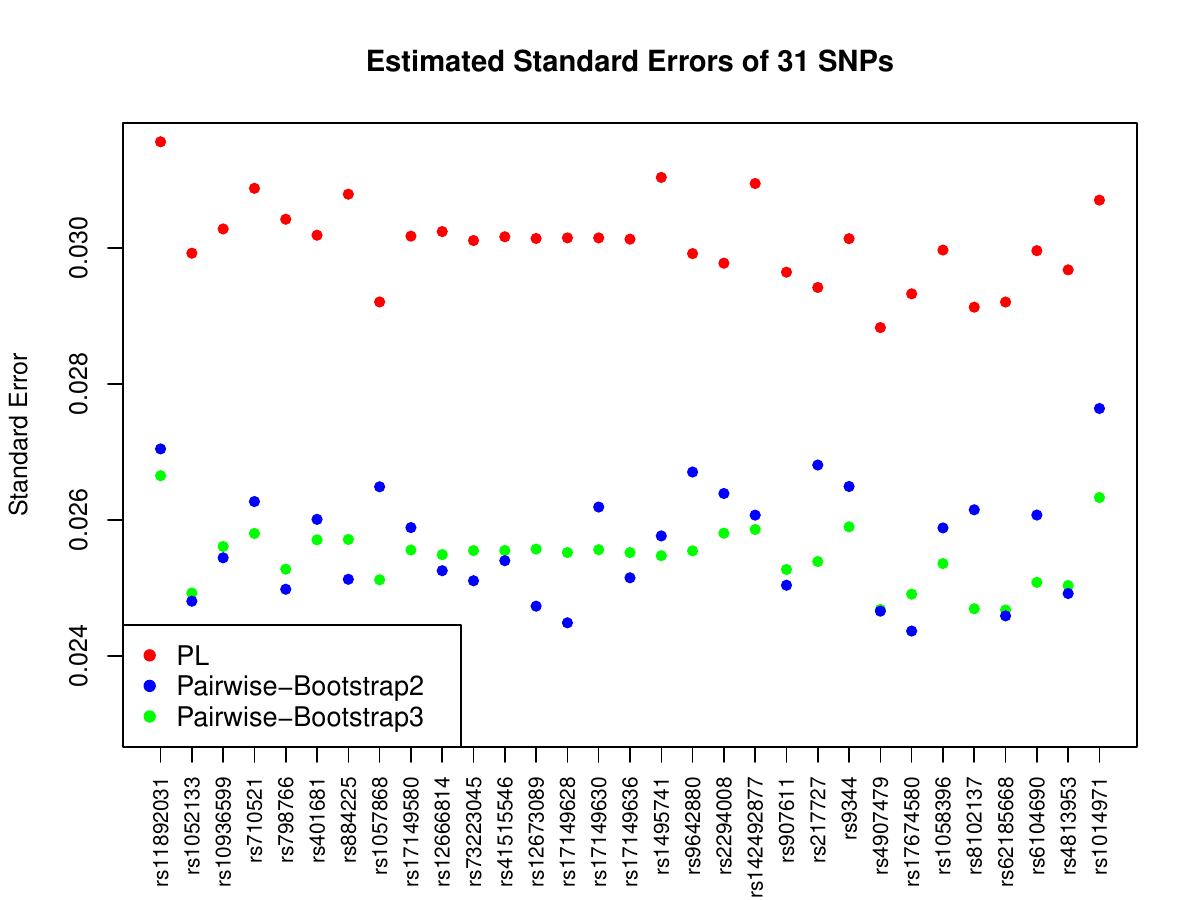}
		\caption{Replicability analysis of 31 SNPs based on the UKB UBC data: estimated standard errors based on PL (red), and Bootstrap 2--3 for the proposed approach (blue and green). \label{Fig:SNPse}}
	\end{figure}
	
	\FloatBarrier
	
	\section{Discussion}
	Existing approaches for delayed entry with prevalent observations are either statistically inefficient by disregarding prevalent information or computationally intractable due to extended runtimes and instability. Our work introduces a novel approach that substantially enhances efficiency in both statistical and computational facets.
	
	In addition to the previously-discussed issue of recall bias, which is an inherent limitation associated with prevalent data and restricts their usability in the context of time-dependent covariates, there is also one limitation in this context tied to our estimation method. The covariate trajectory of the $i$'th individual is observed until time $V_i$, so upon swapping the observed times of two observations, there will inevitably be one ``quasi-observation" with incomplete covariate trajectory. An exception is exogenous covariates, such as air-pollution levels, calendar year or weather conditions, which can be retrieved for any time point.
	
	An important application requiring only time-fixed covariates is replicability analysis for genetic variants. We used the UKB to test the replicability of previously-identified associations between 31 SNPs and UBC. The proposed approach indeed enjoyed higher statistical power compared to the vanilla PL, owing to the incorporation of the prevalent data.
	
	Theoretically, estimation of $H_{012}$ can also benefit from the prevalent observations, but the pairwise pseudolikelihood did not yield satisfactory results for this purpose. Although we have an alternative method leveraging prevalent data, its distinct tools and ideas warrant separate reporting elsewhere.
	Future work could extend the procedure to other (semi-parametric) survival models, like the accelerated failure time. Adding a penalty term to the pairwise pseudolikelihood could also be explored, necessitating adjustments to optimization and asymptotic theory.

	\begin{appendices}
		\renewcommand{\thetable}{S\arabic{table}}
		\renewcommand{\thefigure}{S\arabic{figure}}
		\renewcommand{\theequation}{S.\arabic{equation}}
		\spacingset{2}
		\setcounter{theorem}{0}
		\setcounter{table}{0}
		\setcounter{figure}{0}
		\setcounter{equation}{0}
		\setcounter{corollary}{0}
	
		\section{}
		
		\subsection{Explicit form for $m_{ji}m_{ij}/m_{ii}m_{jj}$}
		\begin{eqnarray}
			\frac{m_{ji}m_{ij}}{m_{ii}m_{jj}}  &=& \exp\left[\left(\vectr{\beta}^T_{12}\vectr{Z}_i - \vectr{\beta}^T_{12}\vectr{Z}_j\right)\left(\Delta_{1j} - \Delta_{1i}\right)
			+ \left\{H_{012}(V_i) - H_{012}(V_j)\right\}\left(e^{\vectr{\beta}_{12}^T\vectr{Z}_i} - e^{\vectr{\beta}_{12}^T\vectr{Z}_j}\right)\right] \nonumber \\
			& & \exp\left[\left(\vectr{\beta}^T_{13}\vectr{Z}_i - \vectr{\beta}^T_{13}\vectr{Z}_j\right)\left(\Delta_{2j} - \Delta_{2i}\right) +\left\{H_{013}(V_i) - H_{013}(V_j)\right\}\left(e^{\vectr{\beta}_{13}^T\vectr{Z}_i} - e^{\vectr{\beta}_{13}^T\vectr{Z}_j}\right)\right] \nonumber \\
			& & \exp\left[\left\{H_{023}(V_j) - H_{023}(R_i)\right\}e^{\vectr{\beta}^T_{23}(\vectr{Z}^T_i,V_j)^T}I\left(R_i>V_j\right)\right] \nonumber \\
			& & \exp\left[\left\{H_{023}(V_i) - H_{023}(R_j)\right\}e^{\vectr{\beta}^T_{23}(\vectr{Z}^T_j,V_i)^T}I(R_j>V_i)\right] \nonumber \\
			& & \exp\left[\left\{H_{023}(R_i) - H_{023}(V_i)\right\}e^{\vectr{\beta}^T_{23}(\vectr{Z}^T_i,V_i)^T}I(R_i>V_i)\right] \nonumber \\
			& & \exp\left[\left\{H_{023}(R_j) - H_{023}(V_j)\right\}e^{\vectr{\beta}^T_{23}(\vectr{Z}^T_j,V_j)^T}I(R_j>V_j)\right]  \,  \nonumber \\
			& &
			\exp\left[\left(\vectr{\beta}^T_{C}\vectr{Z}_i - \vectr{\beta}^T_{C}\vectr{Z}_j\right)\left(\Delta_{1i} + \Delta_{2i} - \Delta_{1j} - \Delta_{2j}\right)\right] \nonumber \\
			& & 
			\exp\left[\left\{\left(H_{0C}(V_i)-H_{0C}(R_i)\right)I(V_i>R_i) + \left(H_{0C}(R_i)- H_{0C}(V_j)\right)I(V_j>R_i)\right\}e^{\vectr{\beta}^T_{C}\vectr{Z}_i} \right] \nonumber  \\
			& & 
			\exp\left[\left\{\left(H_{0C}(V_j) - H_{0C}(R_j)\right)I(V_j>R_j) + \left(H_{0C}(R_j)- H_{0C}(V_i)\right)I(V_i>R_j)\right\}e^{\vectr{\beta}^T_{C}\vectr{Z}_j} \right] \nonumber \\
			& &I\{R_i < V_j\}^{1-\Delta_{1j}}I\{R_j < V_i\}^{1-\Delta_{1i}} \, . \nonumber
		\end{eqnarray}
		
		\subsection{Proofs}
		Before listing the required technical assumptions, denote $\tau^{(l)}_{L}$ and  $\tau^{(l)}_{U}$ as the minimum entry time and maximum follow-up time corresponding to the $l$'th at-risk process, $l=1,2$.

		\underline{Assumptions}	
		\begin{enumerate}[label=A.\arabic*]
			\item The true cumulative baseline hazard functions are bounded, namely, $\int_{0}^{\tau^{(1)}_U}\lambda^o_{0k}(t)dt<\infty$, for $k\in\{12,13,C\}$ and $\int_{0}^{\tau^{(2)}_U}\lambda^o_{023}(t)dt<\infty$. Additionally, the regression parameters $\vectr{\beta}_{k}$ lie in a compact convex set $\mathcal{B}$ of $\mathbb{R}^{p+1}$, for $k\in\{12,13,23,C\}$, that includes an open neighbourhood for each $\vectr{\beta}^o_k$. 
			\item For $l=1,2$, the functions $\vectr{s}_l^{(j)}(\vectr{\beta},t), j=0,1,2$,  defined on $\mathcal{B} \times \left[\tau^{(l)}_L,\tau_U^{(l)}\right]$,  satisfy that as $n \rightarrow \infty$, 
			$$\sup_{t\in\left[\tau^{(l)}_L,\tau_U^{(l)}\right],\vectr{\beta}\in\mathcal{B}}\frac{1}{n}\left\| \vectr{S}_l^{(j)}(\vectr{\beta},t) - \vectr{s}_l^{(j)}(\vectr{\beta},t)\right\|_2\xrightarrow{p}0 \, .$$
			
			\item For all $\vectr{\beta}\in\mathcal{B}$, $t\in\left[\tau^{(1)}_L,\tau_U^{(1)}\right]$,  
				$$\partial s_1^{(0)}(\vectr{\beta},t)/(\partial\vectr{\beta}) = \vectr{s}_1^{(1)}(\vectr{\beta},t) \, ,$$ 
				$$\partial^2s_1^{(0)}(\vectr{\beta},t)/(\partial\vectr{\beta}^T\partial\vectr{\beta}) = \matrx{s}_1^{(2)}(\vectr{\beta},t) \, ,$$
				and for all $\vectr{\beta}\in\mathcal{B}$, $t\in\left[\tau^{(2)}_L,\tau_U^{(2)}\right]$,
				$$\partial s_{2}^{(0)}(\vectr{\beta},t)/(\partial\vectr{\beta}) = \vectr{s}_{2}^{(1)}(\vectr{\beta},t) \, ,$$ 
				$$\partial^2s_{2}^{(0)}(\vectr{\beta},t)/(\partial\vectr{\beta}^T\partial\vectr{\beta}) = \matrx{s}_{2}^{(2)}(\vectr{\beta},t) \, .$$
				
				Additionally, for $j=0,1,2$, $\vectr{s}_1^{(j)}(\vectr{\beta},t)$ are continuous functions of $\vectr{\beta}$ uniformly in $t\in\left[\tau^{(1)}_L,\tau_U^{(1)}\right]$, they are bounded, and $s_1^{(0)}$ is bounded away from $0$ on $\mathcal{B} \times \left[\tau^{(1)}_L,\tau_U^{(1)}\right]$. Similarly, for $j=0,1,2$,  $\vectr{s}_{2}^{(j)}(\vectr{\beta},t)$ are continuous functions of $\vectr{\beta}$, uniformly in $t\in\left[\tau^{(2)}_L,\tau_U^{(2)}\right]$, they are bounded, and $s_{2}^{(0)}$ is bounded away from $0$ on $\mathcal{B} \times \left[\tau^{(2)}_L,\tau_U^{(2)}\right]$.
				\item  The covariates $\vectr{Z}$ are  bounded. If a transformation of $t_1$ is used as a covariate for transition $2\rightarrow3$, it should be bounded for all $t_1\in\left[0,\tau^{(1)}_U\right]$.
			\item Given the covariates, the failure times $T_1, T_2$ are conditionally independent of the censoring time $C$. Additionally, conditionally on the covariates, $T_1$ and the recruitment time $R$ are independent, and $T_2$ and $C$ are quasi-independent \citep{tsai1990testing} of $R$. 
			\item Non-emptiness of the risk sets. Namely, $\Pr\left\{Y_{li}\left(\tau^{(l)}_{L}\right)=Y_{li}\left(\tau^{(l)}_{U}\right)=1\right\}=\nu_l > 0$ for $l=1,2$ and $i=1,\ldots,n$.
			
			\item  The matrix $$\frac{\partial^2 l^{pair}(\vectr{\beta}^o_{12},\vectr{\theta}^o,H^o_{012})}{\partial \vectr{\beta}_{12}^T \partial \vectr{\beta}_{12}} $$ converges in probability to a positive definite matrix $\matrx{Q}_{\beta_{12}}(\vectr{\beta}^o_{12},\vectr{\theta}^o,H^o_{012})$. 

		\end{enumerate}
		
		Assumptions A.1--A.5 are standard regularity conditions required for the PL and Breslow estimators to be consistent for all transitions. In assumption A.1, the set $\mathcal{B}$ is assumed to lie in $\mathbb{R}^{p+1}$ when $t_1$ or a univariate transformation thereof is used as a covariate for transition $2\rightarrow3$. If a vector of covariates is created from $t_1$, or if interactions with $\vectr{Z}$ are included, the dimension of $\mathcal{B}$ should be adapted accordingly. Assumption A.6 means that there is positive probability for any observation to be at risk during the whole follow up time, namely $Y_{li}(t)=1$ for all $t\in\left[\tau^{(l)}_L,\tau^{(l)}_U\right]$, $l=1,2$. 
		
		The following proofs for Theorems 1 and 2 will first assume that all pairwise terms are involved in the estimation procedure, and no subsampling is done. Then, Corollary 1 extends these results to the subsampling case. 
		
		\underline{Consistency}
		
		\begin{theorem} \label{Eq:Theorem1}
			Under assumptions A.1--A.6, as $n\rightarrow \infty$, 
			$$ \Vert\widehat{\vectr{\beta}}_{12} - \vectr{\beta}^o_{12}\Vert_2 = o_p(1) \, .$$ 
		\end{theorem}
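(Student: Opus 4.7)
The plan is to establish consistency via the standard route for M-estimators (cf.\ van der Vaart, Theorem 5.7): first show that the sample criterion $l^{pair}(\vectr{\beta}_{12},\widehat{\vectr{\theta}},\widehat{H}_{012})$ converges in probability, uniformly over the compact parameter set $\mathcal{B}$ from assumption A.1, to a deterministic function $\ell(\vectr{\beta}_{12})$, and then show that $\ell$ is uniquely maximized at $\vectr{\beta}^o_{12}$. Combined with $\widehat{\vectr{\beta}}_{12}$ being a (near-)maximizer, this yields $\Vert\widehat{\vectr{\beta}}_{12}-\vectr{\beta}^o_{12}\Vert_2 = o_p(1)$.

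For the convergence step, I would first fix the nuisance parameters at their true values $(\vectr{\theta}^o,H^o_{012})$. Then $-l^{pair}(\vectr{\beta}_{12},\vectr{\theta}^o,H^o_{012})$ is a U-statistic of order 2 in the i.i.d.\ tuples $\{(V_i,\Delta_{1i},\Delta_{2i},R_i,\vectr{Z}_i)\}$ with kernel $\ln\{1+\zeta_{ij}(\vectr{\theta}^o)\eta_{ij}(\vectr{\beta}_{12},H^o_{012})\}$. The kernel is bounded and continuous in $\vectr{\beta}_{12}$ because $\vectr{Z}$ is bounded (A.4), $H^o_{012}$ is bounded on the follow-up interval (A.1), and $\vectr{\beta}_{12}$ lives in a compact set. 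Hoeffding's law of large numbers for U-statistics gives pointwise convergence to $\ell(\vectr{\beta}_{12}) := -\E[\ln\{1+\zeta_{12}(\vectr{\theta}^o)\eta_{12}(\vectr{\beta}_{12},H^o_{012})\}]$; uniform convergence over $\mathcal{B}$ then follows from a standard Arzel\`a-Ascoli/bracketing argument, using equicontinuity of the kernel in $\vectr{\beta}_{12}$ (whose modulus of continuity is controlled by the compactness of $\mathcal{B}$ and boundedness of the envelope).

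The key technical step is replacing $(\vectr{\theta}^o,H^o_{012})$ with $(\widehat{\vectr{\theta}},\widehat{H}_{012})$. Under A.1--A.6, the PL estimators $\widehat{\vectr{\beta}}_k$ are consistent for $k\in\{13,23,C\}$ and $\widetilde{\vectr{\beta}}_{12}$ is consistent for $\vectr{\beta}^o_{12}$, while the risk-set-adjusted Breslow estimators $\widehat{H}_{0k}$ converge uniformly in probability on the observed support to $H^o_{0k}$ (the potential offset in $\widehat{H}_{013}$ discussed in Section 2 cancels inside $\zeta_{ij}$ by the difference structure, as noted by the authors). Boundedness of $\vectr{Z}$, continuity of the logarithm on values bounded away from zero, and the explicit form of $\zeta_{ij},\eta_{ij}$ in Appendix A.1 give that these functions are Lipschitz in the nuisance parameters on the relevant compact range. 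A plug-in argument then bounds $|l^{pair}(\vectr{\beta}_{12},\widehat{\vectr{\theta}},\widehat{H}_{012}) - l^{pair}(\vectr{\beta}_{12},\vectr{\theta}^o,H^o_{012})|$ uniformly in $\vectr{\beta}_{12}$ by a multiple of $\Vert\widehat{\vectr{\theta}}-\vectr{\theta}^o\Vert + \sup_t|\widehat{H}_{012}(t)-H^o_{012}(t)| + (\text{analogous terms for other }H_{0k})$, which is $o_p(1)$.

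Finally, uniqueness of the maximizer of $\ell(\vectr{\beta}_{12})$ at $\vectr{\beta}^o_{12}$ follows from the interpretation of each pair term as a genuine conditional probability given the order statistic of $(\vectr{O}_{(1)},\vectr{O}_{(2)})_{ij}$ (Eq.\ \ref{Eq:LOnePair}), so that $\ell$ is minus an expected log-likelihood of a correctly specified model; the Kullback--Leibler inequality, together with the identifiability built into the Cox regression on transition $1\to2$ (covariate non-degeneracy implicit in A.2--A.3 via positive definiteness near the truth in A.7, though strict identifiability here only requires non-degeneracy of $\vectr{Z}$ which follows from the regularity of $\vectr{s}_1^{(j)}$), yields a unique maximum at $\vectr{\beta}^o_{12}$. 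The principal obstacle I expect is the plug-in step for the infinite-dimensional nuisance parameter $\widehat{H}_{012}$: controlling it uniformly across pairs and across $\vectr{\beta}_{12}$ without resorting to Donsker class machinery requires careful exploitation of the compact-support / bounded-envelope structure, and this is where most of the bookkeeping will live.
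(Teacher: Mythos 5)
Your proposal follows essentially the same route as the paper's proof: uniform convergence of the criterion obtained by (i) a plug-in step replacing $(\widehat{\vectr{\theta}},\widehat{H}_{012})$ with the truth using consistency of the PL/Breslow estimators and boundedness of the kernel (the paper's Lemma 1 supplies the bound on $\widehat{H}_{0k}$ you allude to), (ii) a uniform law of large numbers for the order-2 U-statistic over the compact set $\mathcal{B}$ (the paper uses Newey's Lipschitz-based corollary where you invoke equicontinuity/bracketing — same substance), and (iii) identifiability of $\vectr{\beta}^o_{12}$ as the unique maximizer of an expected conditional log-likelihood (the paper's Lemma 3, citing Conniffe, is your Kullback--Leibler argument), all combined via van der Vaart's Theorem 5.7. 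The approach is correct and matches the paper.
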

		
		\begin{proof}[Proof of Theorem 1]
			First, since $\vectr{\beta}_k$, $\vectr{Z}$ and $t_1\in\left[0,\tau^{(1)}_U\right]$ are bounded, see assumption A.4, there exists a constant $\kappa>0$ such that $\kappa^{-1} \le \exp\left(\vectr{\beta}_k^T\vectr{Z}\right)\le \kappa$ for all $k\in\{12,13,C\}$ and $\kappa^{-1} \le \exp\left(\vectr{\beta}_{23}^T\widetilde{\vectr{Z}}\right)\le \kappa$. Lemma 1 bounds the Breslow estimator.
			\begin{lemma}
				Under assumptions A.4 and A.6, with probability one there exists some $n^*$ such that for $n\ge n^*$, and all $t\in\left[0,\tau^{(1)}_U\right]$, $\vectr{\beta}_k\in\mathcal{B}$ 
				$$\widehat{H}_{0k}(\vectr{\beta}_k,t)\le 1.01\kappa\nu_{*}^{-1}  \, ,$$
				for $k\in\{12,13,23,C\}$, where
				$\nu_* = \min(\nu_1,\nu_2)$, and $\nu_1,\nu_2$ are defined in assumption A.6. 
			\end{lemma}
			\begin{proof}[Proof of Lemma 1]
				From the strong law of large numbers, based on assumption A.6 there exists with probability one some $n^*$ such that for all $n\ge n^*$ it holds that $$n^{-1}\sum_{i = 1}^n\min\left\{Y_{li}\left(\tau^{(l)}_L\right),Y_{li}\left(\tau^{(l)}_U\right)\right\}\ge 0.999\nu_l \, ,$$ for $l=1,2$. Let $d_k(t)$ denote the number of observed failure times of transition $k$ at time $t$, and consider the ``jump" of the Breslow estimator for $k\in\{12,13,C\}$ at some observed failure time $\widetilde{t}$
				$$\widehat{H}_{0k}\left(\widehat{\vectr{\beta}}_k,\widetilde{t}\right) - \widehat{H}_{0k}\left(\widehat{\vectr{\beta}}_k,\widetilde{t}-\right)=\frac{d_k\left(\widetilde{t}\right)}{\sum_{i = 1}^nY_{1i}\left(\widetilde{t}\right)e^{\widehat{\vectr{\beta}}_k^T\vectr{Z}_i}}\le \frac{n^{-1}\kappa d_k\left(\widetilde{t}\right)}{n^{-1}\sum_{i = 1}^n\min\left\{Y_{1i}\left(\tau^{(1)}_L\right),Y_{1i}\left(\tau^{(1)}_U\right)\right\}} \, ,$$
				so that for $n\ge n^*$ we get that the jump at time $\widetilde{t}$ is no larger than $1.01n^{-1}\kappa\nu^{-1}_1d_k\left(\widetilde{t}\right)$. Since the sum of $d_k(t)$ over all observed failure times of type $k$ cannot exceed $n$, the result follows for $k\in\{12,13,C\}$. The exact same steps can be repeated for $\widehat{H}_{023}$, using $\nu_2$, which implies the required result.
			\end{proof}
			
			Lemma 2 establishes the uniform convergence of the pseudo log-likelihood to its expectation, evaluated at the true nuisance parameter values.
			\begin{lemma} Under assumptions A.1--A.6, as $n\rightarrow\infty$, it follows that,
				\begin{equation}
					\sup_{\vectr{\beta}_{12}\in\mathcal{B}}\left| l^{pair}\left(\vectr{\beta}_{12},\widehat{\vectr{\theta}},\widehat{H}_{012}\right) - 
					\E\left\{l^{pair}(\vectr{\beta}_{12},\vectr{\theta}^o,H_{012}^o)\right\}\right| = o_p(1) \, . \label{Eq:lemma1}
				\end{equation}
			\end{lemma}
			\begin{proof}[Proof of Lemma 2]
				Let us show that the following two equations hold
				\begin{equation} \label{Eq:consis:1} 
					\sup_{\vectr{\beta}_{12}\in\mathcal{B}}\left| l^{pair}\left(\vectr{\beta}_{12},\widehat{\vectr{\theta}},\widehat{H}_{012}\right) - l^{pair}(\vectr{\beta}_{12},\vectr{\theta}^o,H^o_{012})\right| = o_p(1) \, , 
				\end{equation}
				\begin{equation} \label{Eq:consis:2}
					\sup_{\vectr{\beta}_{12}\in\mathcal{B}}\vert l^{pair}(\vectr{\beta}_{12},\vectr{\theta}^o,H_{012}^o) - \E\left\{l^{pair}(\vectr{\beta}_{12},\vectr{\theta}^o,H_{012}^o)\right\}\vert = o_p(1) \, . 
				\end{equation}
				For Eq.(\ref{Eq:consis:1}), let us first observe that although the cumulative baseline hazard functions $H_{0k}$, $k\in\{12,13,23,C\}$ are infinite-dimensional parameters, each term $L^{pair}_{ij}$ depends on them only through a finite number of terms, namely, $H_{0k}(V_i)$, $H_{0k}(V_j)$, $k\in\{12,13,23,C\}$ and $H_{023}(R_i)$, $H_{023}(R_j)$, $H_{0C}(R_i)$, $H_{0C}(R_j)$. Since $L_{ij}^{pair}$ is continuous in each of these terms, as well as in $\vectr{\beta}_k$, and since the partial likelihood and Breslow estimators are consistent, then due to the continuous mapping theorem it follows that $\left| L_{ij}^{pair}\left(\vectr{\beta}_{12},\widehat{\vectr{\theta}},\widehat{H}_{012}\right) - L_{ij}^{pair}(\vectr{\beta}_{12},\vectr{\theta}^o,H_{012}^o) \right| = o_p(1)$ for each $i\ne j$, yielding
				$\left| l^{pair}\left(\vectr{\beta}_{12},\widehat{\vectr{\theta}},\widehat{H}_{012}\right) - l^{pair}(\vectr{\beta}_{12},\vectr{\theta}^o,H^o_{012})\right| = o_p(1)$.
				The vector $\vectr{\beta}_{12}$ enters $l^{pair}$ only through the $\eta_{ij}$ terms, so by examining Eq.(\ref{Eq:eta_ij}), and due to assumptions A.1, A.4 and Lemma 1, it can be verified that the result holds over the supremum of $\vectr{\beta}_{12}$.
				
				For Eq.(\ref{Eq:consis:2}), let us note that $l^{pair}(\vectr{\beta}_{12},\vectr{\theta}^o,H_{012}^o)$ is a U-statistic, so a suitable uniform weak law of large numbers should be established. Assumptions A.1 and A.4 guarantee that $\mathcal{B}$ is compact, and that $\E\vert L_{ij}^{pair}(\vectr{\beta}_{12},\vectr{\theta}^o,H_{012}^o)\vert < \infty$ for all $\vectr{\beta}_{12}\in\mathcal{B}$, so for Eq.(\ref{Eq:consis:2}) to hold it remains to verify that $L_{ij}^{pair}(\vectr{\beta}_{12},\vectr{\theta}^o,H_{012}^o)$ is Lipschitz in $\vectr{\beta}_{12}$ \cite[corollary 4.1]{newey1991uniform}. A sufficient condition for a function to be Lipschitz is that its gradient be bounded. Based on Eq.'s (\ref{Eq:eta_ij}), (\ref{Eq:eta_grad}), assumptions A.1 and A.4, and Lemma 1, we can see that the gradient is indeed bounded, as required, and Eq.(\ref{Eq:consis:2}) holds. Finally, combining Eq.'s (\ref{Eq:consis:1})--(\ref{Eq:consis:2}) and the triangle inequality, Eq.(\ref{Eq:lemma1}) follows.
			\end{proof}
			Next, we need the following identifiability lemma.
			\begin{lemma}
				$\vectr{\beta}^o_{12}$ is the unique global maximizer of $\E\left\{l^{pair}(\vectr{\beta}_{12},\vectr{\theta}^o,H_{012}^o)\right\}$.
			\end{lemma}
			\begin{proof}[Proof of Lemma 3]
				We have that
				\begin{eqnarray*}
					& & \E\left\{l^{pair}(\vectr{\beta}_{12},\vectr{\theta}^o,H_{012}^o)\right\} = \E\left[\ln\left\{1+\zeta_{ij}(\vectr{\theta}^o)\eta_{ij}(\vectr{\beta}_{12},H^o_{012})\right\}\right] \nonumber \\ &=& \E\left[\E\left\{\ln\left(1+\zeta_{ij}(\vectr{\theta}^o)\eta_{ij}(\vectr{\beta}_{12},H^o_{012})\right)|R_i,R_j,\vectr{Z}_i,\vectr{Z}_j,R_i<T_{2i},R_j<T_{2j},(\vectr{O}_{(1)},\vectr{O}_{(2)})_{ij}\right\}\right] \, ,
				\end{eqnarray*} 
				where $(i,j)$ is a random pair, and $\vectr{\beta}^o_{12}$ is the maximizer of the inner expectation, being an expected conditional log-likelihood \citep{conniffe1987expected}, and therefore it maximizes the original expectation as well.
			\end{proof}
			The uniform convergence of $l^{pair}(\vectr{\beta}_{12},\vectr{\theta}^o,H_{012}^o)$ ensures that its continuity in $\vectr{\beta}_{12}$ carries over to its expectation. Combined with the compactness of $\mathcal{B}$ and with Lemma 3, it follows that $\vectr{\beta}^o_{12}$ is a ``well-separated" point of maximum \citep[problem 5.27]{van2000asymptotic}, and together with Lemma 2, we can invoke Theorem 5.7 of \cite{van2000asymptotic}, from which Eq.(\ref{Eq:Theorem1}) follows. 
		\end{proof}

		\underline{Normality}
		
		\begin{theorem} Under assumptions A.1--A.7, and as $n\rightarrow\infty$ it follows that 
			$\sqrt{n}\left(\widehat{\vectr{\beta}}_{12}-\vectr{\beta}^o_{12}\right) \xrightarrow{D} N(\vectr{0},\matrx{Q}^{-1}_{\beta_{12}}\matrx{\mathcal{V}}\matrx{Q}^{-1}_{\beta_{12}}) $, and $\matrx{Q}_{\beta_{12}}$ is evaluated at the true parameter values, namely $\matrx{Q}_{\beta_{12}}(\vectr{\beta}^o_{12},\vectr{\theta}^o,H^o_{012})$.
		\end{theorem}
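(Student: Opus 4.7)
The plan is to start from the score identity $\vectr{0} = \vectr{U}(\widehat{\vectr{\beta}}_{12}, \widehat{\vectr{\theta}}, \widehat{H}_{012})$ and use the three-term decomposition displayed just before the theorem. First I would apply a mean-value expansion to $\vectr{U}(\widehat{\vectr{\beta}}_{12}, \vectr{\theta}^o, H^o_{012}) - \vectr{U}(\vectr{\beta}^o_{12}, \vectr{\theta}^o, H^o_{012})$ around $\vectr{\beta}^o_{12}$, producing an intermediate-point Hessian that, by Theorem 1 and the continuity built into Assumption A.7, converges in probability to $\matrx{Q}_{\beta_{12}}(\vectr{\beta}^o_{12}, \vectr{\theta}^o, H^o_{012})$. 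Solving for $\sqrt{n}(\widehat{\vectr{\beta}}_{12} - \vectr{\beta}^o_{12})$ then reduces the problem to showing that $\sqrt{n}$ times the remaining bracketed expression admits a sum-of-i.i.d.\ representation.

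For the first summand, $\sqrt{n}\,\vectr{U}(\vectr{\beta}^o_{12}, \vectr{\theta}^o, H^o_{012})$ is a $\sqrt{n}$-scaled U-statistic of order two with symmetric kernel $\vectr{\psi}_{ij}(\vectr{\beta}^o_{12}, \vectr{\theta}^o, H^o_{012})$. Lemma 3, combined with an exchange of differentiation and expectation (justified by the boundedness arguments already used for Lemma 2), gives $\E\{\vectr{\psi}_{12}\} = \vectr{0}$. Hoeffding's projection then yields $\sqrt{n}\,\vectr{U}(\vectr{\beta}^o_{12}, \vectr{\theta}^o, H^o_{012}) = n^{-1/2} \sum_i \vectr{\xi}^{(1)}_i + o_p(1)$, with $\vectr{\xi}^{(1)}_i = 2\,\E\{\vectr{\psi}_{12}\mid V_i, \Delta_{1i}, \Delta_{2i}, R_i, \vectr{Z}_i\}$.

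For the second, ``nuisance'' piece, I would expand $\vectr{U}(\widehat{\vectr{\beta}}_{12}, \widehat{\vectr{\theta}}, \widehat{H}_{012}) - \vectr{U}(\widehat{\vectr{\beta}}_{12}, \vectr{\theta}^o, H^o_{012})$ by first replacing $\widehat{\vectr{\beta}}_{12}$ with $\vectr{\beta}^o_{12}$ at $o_p(1)$ cost after $\sqrt{n}$-scaling (using Theorem 1 and stochastic equicontinuity applied to the derivatives of $\vectr{\psi}$, which are bounded and Lipschitz by the same reasoning as in Lemma 2) and then linearizing in $(\vectr{\theta}, H_{012})$. The PL estimators $\widehat{\vectr{\beta}}_k$ admit the standard Andersen--Gill asymptotic linear expansions $\widehat{\vectr{\beta}}_k - \vectr{\beta}^o_k = n^{-1}\sum_i \vectr{\phi}^{(k)}_i + o_p(n^{-1/2})$, and the Breslow estimators admit analogous uniform i.i.d.\ expansions. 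Substituting these into the first-order Taylor expansion of $\vectr{U}$ in its nuisance arguments and exchanging orders of summation gives $\sqrt{n}\{\vectr{U}(\vectr{\beta}^o_{12}, \widehat{\vectr{\theta}}, \widehat{H}_{012}) - \vectr{U}(\vectr{\beta}^o_{12}, \vectr{\theta}^o, H^o_{012})\} = n^{-1/2}\sum_i \vectr{\xi}^{(2)}_i + o_p(1)$, where $\vectr{\xi}^{(2)}_i$ is a linear combination of the nuisance influence functions weighted by population-level partial and functional derivatives of the kernel.

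Combining these pieces with the Hessian step gives $\sqrt{n}(\widehat{\vectr{\beta}}_{12} - \vectr{\beta}^o_{12}) = \matrx{Q}^{-1}_{\beta_{12}} \cdot n^{-1/2}\sum_i (\vectr{\xi}^{(1)}_i + \vectr{\xi}^{(2)}_i) + o_p(1)$, and the multivariate CLT with $\matrx{\mathcal{V}} = \V(\vectr{\xi}^{(1)}_1 + \vectr{\xi}^{(2)}_1)$ delivers the claimed sandwich distribution. The hard part is the nuisance step: $H_{012}$ is an infinite-dimensional parameter entering $\vectr{U}$ only through its values at the finitely many random points $V_i, V_j$ per pair, so making the functional linearization rigorous requires empirical-process bookkeeping (Donsker-type control) of the pairwise kernels. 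The boundedness supplied by Assumption A.4 together with Lemma 1 should furnish the integrable envelopes required to justify both the uniform linearization around the limiting Breslow Gaussian process and the exchange of summation and functional differentiation.
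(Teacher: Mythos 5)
Your proposal is correct and follows essentially the same route as the paper: the same three-term score decomposition, a Taylor/mean-value step yielding the $\matrx{Q}_{\beta_{12}}^{-1}$ factor, a H\'{a}jek/Hoeffding projection of the zero-mean U-statistic $\vectr{U}(\vectr{\beta}^o_{12},\vectr{\theta}^o,H^o_{012})$, and a linearization of the nuisance term using the asymptotically linear expansions of the partial-likelihood estimators together with the (martingale-based) i.i.d.\ representation of the Breslow estimators, exploiting that each pairwise kernel depends on $H_{0k}$ only through finitely many evaluations. The only cosmetic differences are that the paper controls the expansion residual by explicitly bounding third derivatives rather than invoking stochastic equicontinuity, and keeps $\widehat{\vectr{\beta}}_{12}$ in the nuisance term (relying on consistency for the derivative matrices to converge) rather than replacing it by $\vectr{\beta}^o_{12}$ first.
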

		\begin{proof}[Proof of Theorem 2]

			We have 
			\begin{eqnarray}
				\vectr{0} &=& \vectr{U}(\vectr{\beta}^o_{12},\vectr{\theta}^o,H^o_{012}) +\left\{\vectr{U}\left(\widehat{\vectr{\beta}}_{12},\vectr{\theta}^o,H^o_{012}\right) - \vectr{U}(\vectr{\beta}^o_{12},\vectr{\theta}^o,H^o_{012})\right\} 
				\nonumber \\
				& & + \left\{\vectr{U}\left(\widehat{\vectr{\beta}}_{12},\widehat{\vectr{\theta}},\widehat{H}_{012}\right) - \vectr{U}\left(\widehat{\vectr{\beta}}_{12},\vectr{\theta}^o,H^o_{012}\right)\right\} \, . \label{Eq:scoreDecomposition}
			\end{eqnarray}  
			Based on a first-order Taylor expansion about $\vectr{\beta}^o_{12}$  we get
			\begin{eqnarray}
			 \vectr{U}\left(\widehat{\vectr{\beta}}_{12},\vectr{\theta}^o,H^o_{012}\right) - \vectr{U}(\vectr{\beta}^o_{12},\vectr{\theta}^o,H^o_{012})
				= \frac{\partial}{\partial \vectr{\beta}_{12}}\vectr{U}(\vectr{\beta}^o_{12},\vectr{\theta}^o,H^o_{012})\left(\widehat{\vectr{\beta}}_{12}-\vectr{\beta}^o_{12}\right) + \mbox{Res}\left(\breve{\vectr{\beta}}_{12}\right) \, , \label{Eq:ScoreDecompositionBeta12}
			\end{eqnarray}
			where $\breve{\vectr{\beta}}_{12}$ is on the line segment between $\widehat{\vectr{\beta}}_{12}$ and $\vectr{\beta}^o_{12}$, and the $r$'th element in the vector $\mbox{Res}\left(\breve{\vectr{\beta}}_{12}\right)$ is
			\begin{equation} \label{Eq:ScoreBeta12Res}
				\mbox{Res}_{[r]}\left(\breve{\vectr{\beta}}_{12}\right) = \left(\widehat{\vectr{\beta}}_{12}-\vectr{\beta}^o_{12}\right)^T\frac{\partial \vectr{U}'_r\left(\breve{\vectr{\beta}}_{12},\vectr{\theta}^o,H^o_{012}\right)}{\partial \vectr{\beta}^T_{12}}\left(\widehat{\vectr{\beta}}_{12}-\vectr{\beta}^o_{12}\right) \, ,
			\end{equation}
			and $\vectr{U}'_r\left(\breve{\vectr{\beta}}_{12},\vectr{\theta}^o,H^o_{012}\right)$ is the $r$'th row of the matrix $$\frac{\partial \vectr{U}\left(\breve{\vectr{\beta}}_{12},\vectr{\theta}^o,H^o_{012}\right)}{\partial \vectr{\beta}_{12}} = \frac{1}{{n \choose 2}}\sum_{i<j}-\frac{\zeta_{ij}(1+\zeta_{ij}\eta_{ij})\vectr{\eta}_{ij}'' - \zeta^2_{ij}\vectr{\eta}'^{\otimes2}_{ij}}{(1+\zeta_{ij}\eta_{ij})^2} \, ,$$
			where the arguments $\left(\breve{\vectr{\beta}}_{12},\vectr{\theta}^o,H^o_{012}\right)$ are suppressed for brevity, and 
			\begin{eqnarray*}
				\vectr{\eta}''_{ij} = \frac{\vectr{\eta}'^{\otimes2}_{ij}}{\eta_{ij}} + \eta_{ij}\left\{H^o_{012}(V_i)-H^o_{012}(V_j)\right\}\left(e^{\breve{\vectr{\beta}}^T_{12}\vectr{Z}_i}\vectr{Z}^{\otimes2}_i - e^{\breve{\vectr{\beta}}^T_{12}\vectr{Z}_j}\vectr{Z}^{\otimes2}_j  \right) \, .
			\end{eqnarray*}
			Examining a general $(l,m)$ element in the matrix $\partial \vectr{U}'_r\left(\breve{\vectr{\beta}}_{12},\vectr{\theta}^o,H^o_{012}\right)/\partial \vectr{\beta}^T_{12}$ we get
			\begin{eqnarray*}
				\frac{\partial^3 l^{pair}}{\partial \beta_{12[r]}\partial\beta_{12[l]}\partial\beta_{12[m]}} &=& -\frac{1}{{n \choose 2}}\sum_{i<j}\left\{\frac{\zeta_{ij}}{1+\zeta_{ij}\eta_{ij}} \eta'''_{ij[rlm]} \right. \\ & &- \left. \frac{\zeta_{ij}^2}{(1+\zeta_{ij}\eta_{ij})^2}\left(\eta'_{ij[r]}\eta''_{ij[lm]}+\eta'_{ij[l]}\eta''_{ij[rm]}+\eta'_{ij[m]}\eta''_{ij[rl]}\right) \right. \\ 
				& &+ \left. \frac{2\zeta^3_{ij}}{(1+\zeta_{ij}\eta_{ij})^3}\left(\eta'_{ij[r]}\eta'_{ij[l]}\eta'_{ij[m]}\right) \right\} \, ,
			\end{eqnarray*}
			where given a matrix $\matrx{X}$, $X_{[lm]}$ is its element in the $l$'th row and $m$'th column,  
			$$\eta''_{ij[rl]}=\frac{\partial^2\eta_{ij}}{\partial\beta_{12[r]}\partial\beta_{12[l]}} = \frac{\eta'_{ij[r]}\eta'_{ij[l]}}{\eta_{ij}} + \eta_{ij}(H_{012}^o(V_i)-H_{012}^o(V_j))\left(e^{\breve{\vectr{\beta}}^T_{12}\vectr{Z}_i}Z_{i[r]}Z_{i[l]} - e^{\breve{\vectr{\beta}}^T_{12}\vectr{Z}_j}Z_{j[r]}Z_{j[l]}\right) \, ,$$
			and,
			\begin{eqnarray*}
				\eta'''_{ij[rlm]}= \frac{\partial^3\eta_{ij}}{\partial\beta_{12[r]}\partial\beta_{12[l]}\partial\beta_{12[m]}}&=&\frac{\eta'_{ij[r]}\eta''_{ij[lm]}+\eta'_{ij[l]}\eta''_{ij[rm]}+\eta'_{ij[m]}\eta''_{ij[rl]}}{\eta_{ij}}-\frac{\eta'_{ij[r]}\eta'_{ij[l]}\eta'_{ij[m]}}{\eta^2_{ij}} \\
				& & +  \eta_{ij}\left\{H_{012}^o(V_i)-H_{012}^o(V_j)\right\}\left(e^{\breve{\vectr{\beta}}^T_{12}\vectr{Z}_i}Z_{i[r]}Z_{i[l]}Z_{i[m]}  \right. \\ 
				 & & \left. - e^{\breve{\vectr{\beta}}^T_{12}\vectr{Z}_j}Z_{j[r]}Z_{j[l]}Z_{j[m]}\right) \, .
			\end{eqnarray*}
			As $\breve{\vectr{\beta}}_{12}\in\mathcal{B}$, and due to assumptions A.1 and A.4, a careful inspection affirms that the matrix entries are all bounded, so based on Theorem 1 and Eq.(\ref{Eq:ScoreBeta12Res}) it follows that
			\begin{equation} \label{Eq:Beta12ResBounded}
				\mbox{Res}\left(\breve{\vectr{\beta}}_{12}\right) = O_p\left(\left\| \widehat{\vectr{\beta}}_{12} - \vectr{\beta}^o_{12}\right\|_2^2\right) = o_p\left(\left\| \widehat{\vectr{\beta}}_{12} - \vectr{\beta}^o_{12}\right\|_2\right) \, . 	
			\end{equation}
			Hence, based on Eq.'s (\ref{Eq:scoreDecomposition}), (\ref{Eq:ScoreDecompositionBeta12}), (\ref{Eq:Beta12ResBounded}) and assumption A.7, we get
			\begin{eqnarray} 
				\sqrt{n}\left(\widehat{\vectr{\beta}}_{12}-\vectr{\beta}^o_{12}\right) &=& -\matrx{Q}_{\beta_{12}}^{-1}(\vectr{\beta}^o_{12},\vectr{\theta}^o,H^o_{012})\sqrt{n}\bigg[\vectr{U}(\vectr{\beta}^o_{12},\vectr{\theta}^o,H^o_{012}) \nonumber \\
				& & +  \left\{\vectr{U}\left(\widehat{\vectr{\beta}}_{12},\widehat{\vectr{\theta}},\widehat{H}_{012}\right) - \vectr{U}\left(\widehat{\vectr{\beta}}_{12},\vectr{\theta}^o,H^o_{012}\right)\right\}\bigg]  +o_p(1) \, . \label{Eq:beta12Represent} 
			\end{eqnarray}
			Our goal now is to find an asymptotic representation of $$\sqrt{n}\left[\vectr{U}(\vectr{\beta}^o_{12},\vectr{\theta}^o,H^o_{012}) +\left\{\vectr{U}\left(\widehat{\vectr{\beta}}_{12},\widehat{\vectr{\theta}},\widehat{H}_{012}\right) - \vectr{U}\left(\widehat{\vectr{\beta}}_{12},\vectr{\theta}^o,H^o_{012}\right)\right\} \right]$$ as a sum of $n$ properly scaled i.i.d elements, and then use a central limit theorem. 
			
			\noindent First, the term $$\vectr{U}(\vectr{\beta}^o_{12},\vectr{\theta}^o,H^o_{012}) = \frac{1}{{n \choose 2}}\sum_{i<j}-\frac{\zeta_{ij}(\vectr{\theta}^o)\eta'_{ij}(\vectr{\beta}^o_{12},H{^o_{012}})}{1+\zeta_{ij}(\vectr{\theta}^o)\eta_{ij}(\vectr{\beta}^o_{12},H^o_{012})} \, ,$$ 
			is a a zero-mean U-statistic, being a score function evaluated at the true parameter values, so its H\'{a}jek projection \citep[Chapter 12]{van2000asymptotic} implies that
			\begin{equation} \label{Eq:hajekBeta12}
				\sqrt{n}\vectr{U}(\vectr{\beta}^o_{12},\vectr{\theta}^o,H^o_{012}) = \frac{2}{\sqrt{n}}\sum_{i=1}^nE\left\{\frac{\zeta_{ij}(\vectr{\theta}^o)\eta'_{ij}(\vectr{\beta}^o_{12},H{^o_{012}})}{1+\zeta_{ij}(\vectr{\theta}^o)\eta_{ij}(\vectr{\beta}^o_{12},H^o_{012})}\bigg| \vectr{O}_i,R_i,\vectr{Z}_i\right\} + o_p(1) \, .
			\end{equation} 
			As for the other term, each pairwise addend in $\vectr{U}$ depends on the cumulative baseline hazard functions only through the terms $H_{0k}(V_i)$, $H_{0k}(V_j)$, $k\in\{12,13,23,C\}$, and $H_{023}(R_i)$, $H_{023}(R_j)$, $H_{0C}(R_i)$, $H_{0C}(R_j)$. So, for each pairwise addend we tailor its Taylor expansion to the relevant terms, yielding 
			\begin{eqnarray}
				& & \sqrt{n}\left\{\vectr{U}\left(\widehat{\vectr{\beta}}_{12},\widehat{\vectr{\theta}},\widehat{H}_{012}\right) - \vectr{U}\left(\widehat{\vectr{\beta}}_{12},\vectr{\theta}^o,H^o_{012}\right)\right\} =  \label{Eq:scoreDecompositionTheta} \\
				 & & \frac{\sqrt{n}}{{n \choose 2}}\sum_{k\in\{13,23,C\}}\sum_{i<j} -\frac{\partial}{\partial \vectr{\beta}_{k}}\frac{\zeta_{ij}\left(\breve{\vectr{\theta}}^{(ij)}\right)\eta'_{ij}\left(\widehat{\vectr{\beta}}_{12},\breve{H}^{(ij)}_{012}\right)}{1+\zeta_{ij}\left(\breve{\vectr{\theta}}^{(ij)}\right)\eta_{ij}\left(\widehat{\vectr{\beta}}_{12},\breve{H}^{(ij)}_{012}\right)}\left(\widehat{\vectr{\beta}}_{k}-\vectr{\beta}^o_{k}\right) \nonumber \\ 
				&+& \frac{\sqrt{n}}{{n \choose 2}}\sum_{k\in\{12,13,23,C\}}\sum_{i<j}-\frac{\partial}{\partial (H_{0k}(V_i),H_{0k}(V_j))}\frac{\zeta_{ij}\left(\breve{\vectr{\theta}}^{(ij)}\right)\eta'_{ij}\left(\widehat{\vectr{\beta}}_{12},\breve{H}^{(ij)}_{012}\right)}{1+\zeta_{ij}\left(\breve{\vectr{\theta}}^{(ij)}\right)\eta_{ij}\left(\widehat{\vectr{\beta}}_{12},\breve{H}^{(ij)}_{012}\right)}\begin{pmatrix}
					\widehat{H}_{0k}(V_i) - H^o_{0k}(V_i) \\
					\widehat{H}_{0k}(V_j) - H^o_{0k}(V_j)
				\end{pmatrix}
				\nonumber \\ 
				&+& \frac{\sqrt{n}}{{n \choose 2}}\sum_{k\in\{23,C\}}\sum_{i<j}-\frac{\partial}{\partial (H_{0k}(R_i),H_{0k}(R_j))}\frac{\zeta_{ij}\left(\breve{\vectr{\theta}}^{(ij)}\right)\eta'_{ij}\left(\widehat{\vectr{\beta}}_{12},\breve{H}^{(ij)}_{012}\right)}{1+\zeta_{ij}\left(\breve{\vectr{\theta}}^{(ij)}\right)\eta_{ij}\left(\widehat{\vectr{\beta}}_{12},\breve{H}^{(ij)}_{012}\right)} \begin{pmatrix}
					\widehat{H}_{0k}(R_i) - H^o_{0k}(R_i) \\
					\widehat{H}_{0k}(R_j) - H^o_{0k}(R_j) 
				\end{pmatrix} \, , \nonumber
			\end{eqnarray}
			where $\breve{H}^{(ij)}_{012}$ is in the sense of $\left\{\breve{H}_{012}(V_i),\breve{H}_{012}(V_j)\right\}$ and $\breve{H}_{012}(V_l)$ is on the line segment between $\widehat{H}_{012}(V_l)$ and $H^o_{012}(V_l)$, $l=i,j$. Similarly, $\breve{\vectr{\theta}}^{(ij)}$ is in the sense of $\breve{\vectr{\beta}}_k$, $k\in\{13,23,C\}$, $\breve{H}_{0k}(V_l)$, $k\in\{13,23,C\}$, $l=i,j$, and $\breve{H}_{0k}(R_l)$, $k\in\{23,C\}$, $l=i,j$. 
			
			Denote $\matrx{Q}_{\beta_{13}}(\vectr{\beta}_{12},\vectr{\theta},H_{012})$ as the limiting matrix of $\partial \vectr{U}(\vectr{\beta}_{12},\vectr{\theta},H_{012})/\partial \vectr{\beta}_{13}$, then due to the consistency of $\widehat{\vectr{\beta}}_{12}$, which was proven in Theorem 1, and the consistency of $\widehat{\vectr{\theta}}$ and $\widehat{H}_{012}$, we have
			\begin{equation} \label{Eq:mat13conv}
				-\frac{1}{{n \choose 2}}\sum_{i<j}\frac{\partial}{\partial \vectr{\beta}_{13}}\frac{\zeta_{ij}\left(\breve{\vectr{\theta}}^{(ij)}\right)\eta'_{ij}\left(\widehat{\vectr{\beta}}_{12},\breve{H}^{(ij)}_{012}\right)}{1+\zeta_{ij}\left(\breve{\vectr{\theta}}^{(ij)}\right)\eta_{ij}\left(\widehat{\vectr{\beta}}_{12},\breve{H}^{(ij)}_{012}\right)} \xrightarrow{p} \vectr{Q}_{\beta_{13}}(\vectr{\beta}^o_{12},\vectr{\theta}^o,H^o_{012}) \, .
			\end{equation} 
			Additionally, since $\widehat{\vectr{\beta}}_{13}$ is estimated based on PL, it is a regular asymptotically linear estimator, and as such has the following asymptotic representation  \citep{tsiatis2006semiparametric}
			\begin{equation} \label{Eq:RAL13}
				\sqrt{n}\left(\widehat{\vectr{\beta}}_{13}-\vectr{\beta}^o_{13}\right) = \frac{1}{\sqrt{n}}\sum_{i=1}^n\varphi_{13}(R_i,V_i,\Delta_{2i},\vectr{Z}_i) + o_p(1) \, ,
			\end{equation}
			where $\varphi_{13}$ is known as the influence function,  defined as in \cite{reid1985influence}, but with the risk-set correction for left truncation,
			\begin{eqnarray*}
				\varphi_{13}(R_i,V_i,\Delta_{2i},\vectr{Z}_i) &=& \matrx{\Sigma}_{13}^{-1}\Delta_{2i}\left\{\vectr{Z}_i-\frac{\vectr{s}^{(1)}_{13}(\vectr{\beta}^o_{13},V_i)}{s^{(0)}_{13}(\vectr{\beta}^o_{13},V_i)}\right\} \\
				&-& \matrx{\Sigma}_{13}^{-1}e^{\vectr{Z}_i^T\vectr{\beta}^o_{13}}\int\frac{\delta_2 I(R_i\le t\le V_i)}{s^{(0)}_{13}(\vectr{\beta}^o_{13},t)}\left\{\vectr{Z}_i - \frac{\vectr{s}^{(1)}_{13}(\vectr{\beta}^o_{13},t)}{s^{(0)}_{13}(\vectr{\beta}^o_{13},t)} \right\}dF(t,\delta_2) \, ,
			\end{eqnarray*}
			where $F(t,\delta_2)$ is the joint cumulative distribution function for the observed time $V$ and the indicator $\Delta_2$, and 
			$$\matrx{\Sigma}_{13} = \int \delta_2\left[\frac{\matrx{s}^{(2)}_{13}(\vectr{\beta}^o_{13},t)}{s^{(0)}_{13}(\vectr{\beta}^o_{13},t)} - \left\{\frac{\vectr{s}^{(1)}_{13}(\vectr{\beta}^o_{13},t)}{s^{(0)}_{13}(\vectr{\beta}^o_{13},t)}\right\}^{\otimes2}\right]dF(t,\delta_2) \, .$$
			Based on Eq.'s(\ref{Eq:mat13conv})--(\ref{Eq:RAL13}), we obtain
			\begin{eqnarray} \label{Eq:QRAL13}
				 & & -\frac{\sqrt{n}}{{n \choose 2}} \sum_{i<j}\frac{\partial}{\partial \vectr{\beta}_{13}}\frac{\zeta_{ij}\left(\breve{\vectr{\theta}}^{(ij)}\right)\eta'_{ij}\left(\widehat{\vectr{\beta}}_{12},\breve{H}^{(ij)}_{012}\right)}{1+\zeta_{ij}\left(\breve{\vectr{\theta}}^{(ij)}\right)\eta_{ij}\left(\widehat{\vectr{\beta}}_{12},\breve{H}^{(ij)}_{012}\right)}\left(\widehat{\vectr{\beta}}_{13}-\vectr{\beta}^o_{13}\right)  \nonumber \\ & &  = \frac{1}{\sqrt{n}}\sum_{i=1}^n\vectr{Q}_{\beta_{13}}(\vectr{\beta}^o_{12},\vectr{\theta}^o,H^o_{012})\varphi_{13}(V_i,\Delta_{2i},\vectr{Z}_i) + o_p(1) \, .
			\end{eqnarray}
			
			The exact same steps can be taken for the terms corresponding to $\widehat{\vectr{\beta}}_{23}$ and $\widehat{\vectr{\beta}}_{C}$. 
			
			Now, denote 
			$$\matrx{W}^{(ij)}(V_i,V_j)=-\frac{\partial}{\partial (H_{012}(V_i),H_{012}(V_j))}\frac{\zeta_{ij}(\vectr{\theta}^o)\eta'_{ij}(\vectr{\beta}^o_{12},H^o_{012})}{1+\zeta_{ij}(\vectr{\theta}^o)\eta_{ij}(\vectr{\beta}^o_{12},H^o_{012})} $$
			and it will follow due to the consistency of $\widehat{\vectr{\beta}}_{12}$, $\widehat{\vectr{\theta}}$ and $\widehat{H}_{012}$, and due to the continuous mapping theorem, that 
			\begin{eqnarray}
				& &
				\frac{\sqrt{n}}{{n \choose 2}}\sum_{i<j}-\frac{\partial}{\partial (H_{012}(V_i),H_{012}(V_j))}\frac{\zeta_{ij}\left(\breve{\vectr{\theta}}\right)\eta'_{ij}\left(\widehat{\vectr{\beta}}_{12},\breve{H}_{012}\right)}{1+\zeta_{ij}\left(\breve{\vectr{\theta}}\right)\eta_{ij}\left(\widehat{\vectr{\beta}}_{12},\breve{H}_{012}\right)}\begin{pmatrix}
					\widehat{H}_{012}(V_i) - H^o_{012}(V_i) \\
					\widehat{H}_{012}(V_j) - H^o_{012}(V_j)
				\end{pmatrix} = \nonumber \\
				& &  \indent  \indent \frac{\sqrt{n}}{{n \choose 2}}\sum_{i<j}\matrx{W}^{(ij)}(V_i,V_j)\begin{pmatrix}
					\widehat{H}_{012}(V_i) - H^o_{012}(V_i) \\
					\widehat{H}_{012}(V_j) - H^o_{012}(V_j)
				\end{pmatrix} + o_p(1)\, . \label{Eq:TaylorH012}
			\end{eqnarray}
			
			Now, the notation $\matrx{W}^{(ij)}_l(t_1,t_2)$ refers to the $l$'th column of the matrix $\matrx{W}^{(ij)}$, $l=1, 2$.
			Denote $\widetilde{N}_i(t) = I(V_i\le t)$, ${N}_i(t) = \Delta_{1i}\widetilde{N}_i(t)$ and $M_{12i}(t) = N_i(t) - \int_0^tY_{1i}(u)h_{12}(u|\vectr{Z}_i)du $. Then, using the martingale representation of the Breslow estimator, we have that Eq.(\ref{Eq:TaylorH012}) is asymptotically equivalent to 
			\begin{equation*}
				\frac{\sqrt{n}}{{n \choose 2}}\sum_{i<j}	\int_0^\tau \int_0^\tau \left\{ \vectr{W}^{(ij)}_{1}(s,t)\int_{0}^s\frac{\sum_{l=1}^ndM_{12l}(u)}{\sum_{l=1}^nY_{1l}(u)e^{\vectr{\beta}^{oT}_{12}\vectr{Z}_l}} + \vectr{W}^{(ij)}_{2}(s,t)\int_{0}^t\frac{\sum_{l=1}^ndM_{12l}(u)}{\sum_{l=1}^nY_{1l}(u)e^{\vectr{\beta}^{oT}_{12}\vectr{Z}_l}}\right\}d\widetilde{N}_i(s)d\widetilde{N}_j(t) \, ,
			\end{equation*}
			which in turn, by changing the order of integration, and due to assumption A.2, is asymptotically equivalent to 
			\begin{eqnarray}
				& &
				\frac{1}{\sqrt{n}}\sum_{l=1}^n\int_{0}^\tau \frac{1}{{n \choose 2}}\sum_{i<j}\int_u^\tau\int_{0}^\tau \matrx{W}_1^{(ij)}(s,t)d\widetilde{N}_j(t)d\widetilde{N}_i(s)\frac{dM_{12l}(u)}{s^{(0)}_{12}(\vectr{\beta}^o_{12},u)} \nonumber \\
				&+& \frac{1}{\sqrt{n}}\sum_{l=1}^n\int_{0}^\tau \frac{1}{{n \choose 2}}\sum_{i<j}\int_u^\tau\int_{0}^\tau \matrx{W}_2^{(ij)}(s,t)d\widetilde{N}_i(s)d\widetilde{N}_j(t)\frac{dM_{12l}(u)}{s^{(0)}_{12}(\vectr{\beta}^o_{12},u)} \nonumber \, .
			\end{eqnarray}
			If we now denote $\vectr{\pi}_1(u)$ as the limiting value of ${n \choose 2}^{-1}\sum_{i<j}\int_u^\tau\int_{0}^\tau \matrx{W}_1^{(ij)}(s,t)d\widetilde{N}_j(t)d\widetilde{N}_i(s)$ and $\vectr{\pi}_2(u)$ as the limiting value of ${n \choose 2}^{-1}\sum_{i<j}\int_u^\tau\int_{0}^\tau \matrx{W}_2^{(ij)}(s,t)d\widetilde{N}_i(s)d\widetilde{N}_j(t)$, it will then follow that Eq.(\ref{Eq:TaylorH012}) is asymptotically equivalent to 
			\begin{equation} \label{Eq:BreslowAsympRep}
				\frac{1}{\sqrt{n}}\sum_{l=1}^n\int_{0}^\tau \frac{\vectr{\pi}_1(u)+\vectr{\pi}_2(u)}{s^{(0)}_{12}(\vectr{\beta}^o_{12},u)}dM_{12l}(u) \, ,
			\end{equation}
			which has mean zero since $M_{12l}(\cdot)$ is a zero-mean martingale for each $l=1,\ldots,n$. 
			In the same fashion, similar representations for the terms corresponding to $\widehat{H}_{013}$, $\widehat{H}_{023}$, $\widehat{H}_{0C}$ can be derived.
			
			Aggregating  Eq.'s (\ref{Eq:beta12Represent})--(\ref{Eq:scoreDecompositionTheta}), (\ref{Eq:QRAL13})--(\ref{Eq:BreslowAsympRep}) we finally obtain that $$
			\sqrt{n}\left[\vectr{U}(\vectr{\beta}^o_{12},\vectr{\theta}^o,H^o_{012}) +\left\{\vectr{U}\left(\widehat{\vectr{\beta}}_{12},\widehat{\vectr{\theta}},\widehat{H}_{012}\right) - \vectr{U}\left(\widehat{\vectr{\beta}}_{12},\vectr{\theta}^o,H^o_{012}\right)\right\} \right]=\frac{1}{\sqrt{n}}\sum_{i=1}^n\vectr{\xi}_i +o_p(1) \, ,$$
			where the $\vectr{\xi}$'s are zero-mean i.i.d random vectors, and thus a central limit theorem follows, so
			$$\frac{1}{\sqrt{n}}\sum_{i=1}^n\vectr{\xi}_i\xrightarrow{D} N(\vectr{0},\vectr{\mathcal{V}})\, ,$$ where $\matrx{\mathcal{V}}=\V(\vectr{\xi})$. Combined with Eq.(\ref{Eq:beta12Represent}) and Slutsky's theorem we finally arrive at the conclusion that 
			$$\sqrt{n}\left(\widehat{\vectr{\beta}}_{12}-\vectr{\beta}^o_{12}\right) \xrightarrow{D} N\left(\vectr{0},\matrx{Q}^{-1}_{\beta_{12}}\matrx{\mathcal{V}}\matrx{Q}^{-1}_{\beta_{12}}\right) \, ,$$
			with the true values $(\vectr{\beta}^o_{12},\vectr{\theta}^o,H^o_{012})$ inserted in $\matrx{Q}_{\beta_{12}}$.
			\end{proof}
		It should be reminded, that in practice we do not use all pairs of observations due to the high computational cost, and instead sample a number of pairs for each observation, creating a so-called incomplete U-statistic \citep{janson1984asymptotic}, as described in Section 2.3. The following corollary extends the asymptotic results to these settings.
		\begin{corollary}
			As $K_n\rightarrow\infty$ and $n\rightarrow\infty$, Theorems 1 and 2 extend to the subsampling framework.
		\end{corollary}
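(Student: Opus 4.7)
The plan is to reduce the subsampled estimator to the full-pair one and inherit Theorems 1 and 2 from the fact that an incomplete U-statistic with $nK_n$ terms agrees with its complete counterpart to within a design-variance of order $(nK_n)^{-1}$ when the subsample is obtained by random ordering.

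First, for consistency, I would establish
$$\sup_{\vectr{\beta}_{12} \in \mathcal{B}} \left| l_{K_n}^{pair}\bigl(\vectr{\beta}_{12}, \widehat{\vectr{\theta}}, \widehat{H}_{012}\bigr) - l^{pair}\bigl(\vectr{\beta}_{12}, \widehat{\vectr{\theta}}, \widehat{H}_{012}\bigr) \right| = o_p(1).$$
Because the data are randomly ordered before the neighbour-pairs $\{(i, i+1), \ldots, (i, i+K_n)\}$ are selected, conditional on the observations the subsampled sum is unbiased for the full pairwise sum, with conditional variance of order $(nK_n)^{-1}$ by the standard calculation for incomplete U-statistics in \cite{janson1984asymptotic}. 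Pointwise $o_p(1)$ convergence would then be promoted to uniform convergence via the Lipschitz-in-$\vectr{\beta}_{12}$ bound already established in the proof of Theorem 1 (bounded gradient of each pairwise kernel on the compact set $\mathcal{B}$, together with Lemma 1). Combined with Lemma 2 this delivers the same uniform convergence to $\E\{l^{pair}(\vectr{\beta}_{12}, \vectr{\theta}^o, H_{012}^o)\}$. Lemma 3 continues to identify $\vectr{\beta}_{12}^o$ as the well-separated global maximizer, so the argument via Theorem 5.7 of \cite{van2000asymptotic} invoked at the end of the proof of Theorem 1 delivers consistency of the subsampled estimator.

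Next, for asymptotic normality, the crucial step would be to show
$$\sqrt{n}\,\vectr{U}_{K_n}\bigl(\vectr{\beta}_{12}^o, \vectr{\theta}^o, H_{012}^o\bigr) = \sqrt{n}\,\vectr{U}\bigl(\vectr{\beta}_{12}^o, \vectr{\theta}^o, H_{012}^o\bigr) + o_p(1).$$
Each $\vectr{\psi}_{ij}$ is bounded at the truth, so the difference between the incomplete and the complete score is a mean-zero random quantity whose variance, by the incomplete-U-statistic bound, is of order $(nK_n)^{-1}$; multiplying by $\sqrt{n}$ yields $O_p(K_n^{-1/2}) = o_p(1)$ under $K_n \to \infty$. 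Hence the H\'ajek projection in Eq.~(\ref{Eq:hajekBeta12}) carries over, and so do the Taylor-expansion and martingale-representation steps in the proof of Theorem 2: every ${n \choose 2}^{-1} \sum_{i<j}$ appearing there may be replaced by $(nK_n)^{-1} \sum_{i=1}^n \sum_{j=i+1}^{i+K_n}$ without altering the limits, since the kernels are bounded and continuous in the nuisance parameters. Assumption A.7 transfers by the same argument, so $\matrx{Q}_{\beta_{12}}$ and $\matrx{\mathcal{V}}$ are unchanged and the limiting normal distribution of Theorem 2 holds verbatim for $\widehat{\vectr{\beta}}_{12}$ based on the $K_n$-neighbour subsample.

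The main obstacle is the joint and uniform control of the incomplete-U-statistic remainder: one must show that the subsampling design contributes variability only at rate $K_n^{-1}$ rather than at rate $n^{-1}$, uniformly over $\mathcal{B}$ and simultaneously with the fluctuations of the plug-in estimators $\widehat{\vectr{\theta}}$ and $\widehat{H}_{012}$. This is precisely what makes the growth condition $K_n \to \infty$ both natural and sufficient. Once this joint control is in place, every ingredient of the proofs of Theorems 1 and 2 transfers without alteration, and the corollary follows.
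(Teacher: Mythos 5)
Your proposal is correct and follows essentially the same route as the paper's proof: both rest on the incomplete-U-statistic variance bound of Lemma 1 in \cite{janson1984asymptotic}, which gives $\E[\{\sqrt{n}(U-U_0)\}^2]=O(K_n^{-1})$ and hence $\sqrt{n}|U-U_0|\xrightarrow{p}0$ via Chebyshev when $K_n\rightarrow\infty$, so that Theorems 1 and 2 carry over. The paper states this in three lines and leaves the uniform-in-$\vectr{\beta}_{12}$ and plug-in details implicit, whereas you spell them out, but the argument is the same.
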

		\begin{proof}[Proof of Corollary 1]
			Suppose that $U_0$ is a complete U-statistic, and that $U$ is an incomplete version of it. Obviously, $\E(U)=\E(U_0)$, and due to Lemma 1 in \cite{janson1984asymptotic}, it also holds that $\E\left[\left\{\sqrt{n}(U-U_0)\right\}^2\right]=O\left(K_n^{-1}\right)$. Since $K_n\rightarrow\infty$, it will follow due to the Chebyshev inequality that $\sqrt{n}|U -U_0|\xrightarrow{p}0$, and therefore Theorems 1 and 2 will carry over for the incomplete U-statistic case. 
		\end{proof}

			\subsection{Bootstrap Methods - Additional Details}
		
		First, we give the explicit expressions for the PL-based information matrices, required for Bootstrap 2 and 3. 
	$$\matrx{\mathcal{I}}_{12} =  \sum_{i = 1}^n\Delta_{1i}\left[\frac{\matrx{S}^{(2)}_{1}\left(\widetilde{\vectr{\beta}}_{12},V_i\right)}{S^{(0)}_{1}\left(\widetilde{\vectr{\beta}}_{12},V_i\right)} -  \left\{\frac{\matrx{S}^{(1)}_{1}\left(\widetilde{\vectr{\beta}}_{12},V_i\right)}{S^{(0)}_{1}\left(\widetilde{\vectr{\beta}}_{12},V_i\right)}\right\}^{\otimes2}\right]$$
		$$\matrx{\mathcal{I}}_{13} =  \sum_{i = 1}^n\Delta_{2i}\left[\frac{\matrx{S}^{(2)}_{1}\left(\widehat{\vectr{\beta}}_{13},V_i\right)}{S^{(0)}_{1}\left(\widehat{\vectr{\beta}}_{13},V_i\right)} - \left\{\frac{\matrx{S}^{(1)}_{1}\left(\widehat{\vectr{\beta}}_{13},V_i\right)}{S^{(0)}_{1}\left(\widehat{\vectr{\beta}}_{13},V_i\right)}\right\}^{\otimes2}\right]$$
		$$\matrx{\mathcal{I}}_{23} =  \sum_{i = 1}^n\Delta_{3i}\left[\frac{\matrx{S}^{(2)}_{2}\left(\widehat{\vectr{\beta}}_{23},W_i\right)}{S^{(0)}_{2}\left(\widehat{\vectr{\beta}}_{23},W_i\right)} - \left\{\frac{\matrx{S}^{(1)}_{2}\left(\widehat{\vectr{\beta}}_{23},W_i\right)}{S^{(0)}_{2}\left(\widehat{\vectr{\beta}}_{23},W_i\right)}\right\}^{\otimes2}\right]$$
		$$\matrx{\mathcal{I}}_{C} =  \sum_{i = 1}^n(1-\Delta_{1i}-\Delta_{2i})\left[\frac{\matrx{S}^{(2)}_{1}\left(\widehat{\vectr{\beta}}_{C},V_i\right)}{S^{(0)}_{1}\left(\widehat{\vectr{\beta}}_{C},V_i\right)} - \left\{\frac{\matrx{S}^{(1)}_{1}\left(\widehat{\vectr{\beta}}_{C},V_i\right)}{S^{(0)}_{1}\left(\widehat{\vectr{\beta}}_{C},V_i\right)}\right\}^{\otimes2}\right] \, .$$
		
		For arriving at Bootstrap 3, let us use a Taylor expansion about $\vectr{\beta}^o_{12}$, and due to Theorem 1 we get
		\begin{eqnarray*}
			\vectr{0} &=& \vectr{U}_{K_n}\left(\widehat{\vectr{\beta}}_{12},\widehat{\vectr{\theta}},\widehat{H}_{012}\right)\\ 
			& =& \vectr{U}_{K_n}\left(\vectr{\beta}^o_{12},\widehat{\vectr{\theta}},\widehat{H}_{012}\right) + \frac{\partial \vectr{U}_{K_n}\left(\vectr{\beta}^o_{12},\widehat{\vectr{\theta}},\widehat{H}_{012}\right)}{\partial \vectr{\beta}_{12}}\left(\widehat{\vectr{\beta}}_{12} - \vectr{\beta}^o_{12}\right) +o_p\left(\left\|	\widehat{\vectr{\beta}}_{12} - \vectr{\beta}^o_{12} \right\|\right)
		\end{eqnarray*}
		and so
		\begin{equation*}
			\widehat{\vectr{\beta}}_{12} - \vectr{\beta}^o_{12} = -\left\{\frac{\partial\vectr{U}_{K_n}\left(\vectr{\beta}^o_{12},\widehat{\vectr{\theta}},\widehat{H}_{012}\right)}{\partial \vectr{\beta}_{12}}\right\}^{-1}\vectr{U}_{K_n}\left(\vectr{\beta}^o_{12},\widehat{\vectr{\theta}},\widehat{H}_{012}\right) +o_p\left(\left\| 	\widehat{\vectr{\beta}}_{12} - \vectr{\beta}^o_{12} \right\|\right)\, .
		\end{equation*}
		
		From the law of total variance it follows that 
		\begin{eqnarray}
			\V\left(\widehat{\vectr{\beta}}_{12} - \vectr{\beta}^o_{12}\right) &=& \E\left[\V\left\{\left(\frac{\partial\vectr{U}_{K_n}\left(\vectr{\beta}^o_{12},\widehat{\vectr{\theta}},\widehat{H}_{012}\right)}{\partial \vectr{\beta}_{12}}\right)^{-1}\vectr{U}_{K_n}\left(\vectr{\beta}^o_{12},\widehat{\vectr{\theta}},\widehat{H}_{012}\right)\bigg|\widehat{\vectr{\theta}},\widehat{H}_{012}\right\}\right] \label{Eq:BootTotalvar} \\
			&+&  \V\left[\E\left\{\left(\frac{\partial\vectr{U}_{K_n}\left(\vectr{\beta}^o_{12},\widehat{\vectr{\theta}},\widehat{H}_{012}\right)}{\partial \vectr{\beta}_{12}}\right)^{-1}\vectr{U}_{K_n}\left(\vectr{\beta}^o_{12},\widehat{\vectr{\theta}},\widehat{H}_{012}\right)\bigg|\widehat{\vectr{\theta}},\widehat{H}_{012}\right\}\right]  + o_p(1)\, \nonumber. 
		\end{eqnarray}
		Under a working assumption that $\left(\widehat{\vectr{\theta}},\widehat{H}_{012}\right)$ and $\vectr{U}_{K_n}(\vectr{\beta}^o_{12},\vectr{\theta}^o,H^o_{012})$ are independent, the inner variance in the first term can be estimated as if $\widehat{\vectr{\theta}}$ and $\widehat{H}_{012}$ were fixed, using a sandwich-type variance estimator. Namely, under this independence working assumption it can be shown that
		\begin{eqnarray*}
		\V\left\{\left(\frac{\partial\vectr{U}_{K_n}\left(\vectr{\beta}^o_{12},\widehat{\vectr{\theta}},\widehat{H}_{012}\right)}{\partial \vectr{\beta}_{12}}\right)^{-1}\vectr{U}_{K_n}\left(\vectr{\beta}^o_{12},\widehat{\vectr{\theta}},\widehat{H}_{012}\right)\bigg|\widehat{\vectr{\theta}},\widehat{H}_{012}\right\} &=& \matrx{V}^{-1}_1\left(\vectr{\beta}^o_{12},\widehat{\vectr{\theta}},\widehat{H}_{012}\right)\matrx{V}_2\left(\vectr{\beta}^o_{12},\widehat{\vectr{\theta}},\widehat{H}_{012}\right) \\ & &\matrx{V}^{-1}_1\left(\vectr{\beta}^o_{12},\widehat{\vectr{\theta}},\widehat{H}_{012}\right) \, ,
		\end{eqnarray*}
		where
		$$\matrx{V}_1(\vectr{\beta}_{12},\vectr{\theta},H_{012}) = \E\left\{\frac{\partial\vectr{U}_{K_n}(\vectr{\beta}_{12},\vectr{\theta},H_{012})}{\partial \vectr{\beta}_{12}}\right\} \, ,
		$$
		and the expectation here treats the arguments $\vectr{\beta}_{12},\vectr{\theta},H_{012}$ as fixed, so that for instance
		$$\matrx{V}_1\left(\vectr{\beta}^o_{12},\widehat{\vectr{\theta}},\widehat{H}_{012}\right) = \E\left\{\frac{\partial\vectr{U}_{K_n}(\vectr{\beta}_{12},\vectr{\theta},H_{012})}{\partial \vectr{\beta}_{12}}\right\}_{\vectr{\beta}_{12}=\vectr{\beta}^o_{12},\vectr{\theta}=\widehat{\vectr{\theta}},H_{012}=\widehat{H}_{012}} \, . $$
		Additionally,
		$$\matrx{V}_2(\vectr{\beta}_{12},\vectr{\theta},H_{012}) = \V\left\{\frac{1}{nK_n}\sum_{i=1}^n\sum^{i+K_n}_{j=i+1}\vectr{\psi}_{ij}(\vectr{\beta}_{12},\vectr{\theta},H_{012})\right\} = \frac{\V(\vectr{\psi}_{ij})}{nK_n}+\frac{2(2K_n-1)\cov(\vectr{\psi}_{ij},\vectr{\psi}_{il})}{nK_n} \, ,$$
		where $(i,j)$ and $(i,l)$ are two random pairs sharing one index in common, and similarly to $\matrx{V}_1$, the variance and covariance treat the arguments of $\vectr{\psi}$ as fixed. These matrices can be estimated by  
		$$\widehat{\matrx{V}}_1\left(\vectr{\beta}^o_{12},\widehat{\vectr{\theta}},\widehat{H}_{012}\right) = \frac{\partial\vectr{U}_{K_n}\left(\widehat{\vectr{\beta}}_{12},\widehat{\vectr{\theta}},\widehat{H}_{012}\right)}{\partial \vectr{\beta}_{12}} \, ,
		$$
		$$\widehat{\matrx{V}}_2\left(\vectr{\beta}^o_{12},\widehat{\vectr{\theta}},\widehat{H}_{012}\right) = \frac{1}{n^2K_n^2}\sum_{i=1}^n\sum^{i+K_n}_{j=i+1}\widehat{\vectr{\psi}}^{\otimes2}_{ij} + \frac{2(2K_n-1)}{n^2K_n^2(K_n-1)}\sum_{i = 1}^n\sum_{j = i+1}^{i+K_n}\sum^{i+K_n}_{\substack{l = i+1 \\ j\ne l}}\widehat{\vectr{\psi}}_{ij}\widehat{\vectr{\psi}}^T_{il} \, ,$$
		where $\widehat{\vectr{\psi}}_{ij}$ is in the sense of $\vectr{\psi}_{ij}\left(\widehat{\vectr{\beta}}_{12},\widehat{\vectr{\theta}},\widehat{H}_{012}\right)$.
		
		For estimating the second addend in Eq.(\ref{Eq:BootTotalvar}), one should observe that the inner conditional expectation is a random variable with respect to $\widehat{\vectr{\theta}}$ and $\widehat{H}_{012}$. To estimate this variance term, we can generate $B$ bootstrap replicates of $\widehat{\vectr{\theta}}$ and $\widehat{H}_{012}$ following Steps (i)--(iii) in Bootstrap 2, then derive
	$$\vectr{\mathfrak{U}}^{(b)}= \left(\frac{\partial\vectr{U}_{K_n}\left(\widehat{\vectr{\beta}}_{12},\widehat{\vectr{\theta}}^{(b)},\widehat{H}^{(b)}_{012}\right)}{\partial \vectr{\beta}_{12}}\right)^{-1}\vectr{U}_{K_n}\left(\widehat{\vectr{\beta}}_{12},\widehat{\vectr{\theta}}^{(b)},\widehat{H}^{(b)}_{012}\right) \, ,$$
	$b=1,\ldots,B$, and calculate the empirical variance matrix of these vectors. Combining the estimates for the two variance sources would thus yield an estimate for the variance of $\widehat{\vectr{\beta}}_{12}$.
	
	If the estimator uses all pairwise terms the following modifications should be made,
	\begin{equation*}
	\vectr{U}(\vectr{\beta}_{12},\vectr{\theta},H_{012}) = \frac{1}{{n \choose 2}}\sum_{i<j}\vectr{\psi}_{ij}(\vectr{\beta}_{12},\vectr{\theta},H_{012}) \, ,
	\end{equation*} 
	$$\widehat{\matrx{V}}_1\left(\vectr{\beta}^o_{12},\widehat{\vectr{\theta}},\widehat{H}_{012}\right) = \frac{\partial\vectr{U}\left(\widehat{\vectr{\beta}}_{12},\widehat{\vectr{\theta}},\widehat{H}_{012}\right)}{\partial \vectr{\beta}_{12}} \, ,$$
	$$\matrx{V}_2(\vectr{\beta}_{12},\vectr{\theta},H_{012}) = \V\left\{\frac{1}{{n \choose 2}}\sum_{i<j}\vectr{\psi}_{ij}(\vectr{\beta}_{12},\vectr{\theta},H_{012})\right\} = \frac{1}{{n \choose 2}}\V(\vectr{\psi}_{ij})+\frac{4(n-2)}{n(n-1)}\cov(\vectr{\psi}_{ij},\vectr{\psi}_{il}) \, ,$$
	$$\widehat{\matrx{V}}_2(\vectr{\beta}_{12},\vectr{\theta},H_{012}) = \frac{1}{{n \choose 2}^2}\sum_{i<j}\widehat{\vectr{\psi}}^{\otimes2}_{ij} + \frac{4}{n^2(n-1)^2}\sum_{i = 1}^n\sum_{\substack{j\ne l \\ j,l\ne i}}\widehat{\vectr{\psi}}_{ij}\widehat{\vectr{\psi}}^T_{il} \, , $$
	and
	$$\vectr{\mathfrak{U}}^{(b)}= \left(\frac{\partial\vectr{U}\left(\widehat{\vectr{\beta}}_{12},\widehat{\vectr{\theta}}^{(b)},\widehat{H}^{(b)}_{012}\right)}{\partial \vectr{\beta}_{12}}\right)^{-1}\vectr{U}\left(\widehat{\vectr{\beta}}_{12},\widehat{\vectr{\theta}}^{(b)},\widehat{H}^{(b)}_{012}\right) \, ,$$
		
		\subsection{Additional Figures and Tables}
		 \FloatBarrier
		 
		\begin{table}[ht]
			\centering
			\spacingset{1}
			\begin{tabular}{cccccccccc}
				\hline
				Setting&$K_n$ & $\beta_{12_{[1]}}$ & $\beta_{12_{[2]}}$ & $\beta_{12_{[3]}}$ & $\beta_{12_{[4]}}$ & $\beta_{12_{[5]}}$ &$\beta_{12_{[6]}}$ & $\beta_{12_{([7]}}$ & $\beta_{12_{[8]}}$ \\ 
				\hline
				$n=1,500$ &&&&&&&&&\\
				A &10 & 0.538 & 0.749 & 0.589 & 0.426 & 0.559 & 0.477 & 0.465 & 0.218 \\ 
				A&25 & 0.523 & 0.712 & 0.575 & 0.417 & 0.549 & 0.453 & 0.445 & 0.211 \\ 
				A&50 & 0.525 & 0.716 & 0.588 & 0.421 & 0.548 & 0.441 & 0.440 & 0.211 \\ 
				A&100 & 0.523 & 0.713 & 0.581 & 0.415 & 0.546 & 0.432 & 0.442 & 0.211 \\ 
				A&200 & 0.524 & 0.710 & 0.577 & 0.415 & 0.543 & 0.434 & 0.443 & 0.212 \\ 
				B&10 & 0.655 & 0.642 & 0.625 & 0.655 & 0.657 & 0.756 & 0.663 & 0.576 \\ 
				B&25 & 0.628 & 0.612 & 0.598 & 0.634 & 0.632 & 0.739 & 0.642 & 0.574 \\ 
				B&50 & 0.624 & 0.614 & 0.597 & 0.625 & 0.629 & 0.710 & 0.642 & 0.576 \\ 
				B&100 & 0.618 & 0.613 & 0.590 & 0.619 & 0.627 & 0.711 & 0.633 & 0.574 \\ 
				B&200 & 0.614 & 0.615 & 0.591 & 0.613 & 0.619 & 0.703 & 0.631 & 0.572 \\ 
				C&10 & 0.674 & 0.695 & 0.674 & 0.689 & 0.692 & 0.678 & 0.710 & 0.727 \\ 
				C&25 & 0.639 & 0.656 & 0.639 & 0.665 & 0.677 & 0.664 & 0.686 & 0.682 \\ 
				C&50 & 0.633 & 0.651 & 0.644 & 0.656 & 0.674 & 0.653 & 0.691 & 0.674 \\ 
				C&100 & 0.641 & 0.645 & 0.638 & 0.641 & 0.670 & 0.646 & 0.685 & 0.674 \\ 
				C&200 & 0.634 & 0.639 & 0.632 & 0.641 & 0.663 & 0.641 & 0.677 & 0.671 \\
				\hline 
				$n=10,000$ &&&&&&&&& \\
				A&10 & 0.285 & 0.318 & 0.288 & 0.177 & 0.196 & 0.197 & 0.197 & 0.088 \\ 
				A&25 & 0.281 & 0.311 & 0.289 & 0.177 & 0.194 & 0.195 & 0.190 & 0.087 \\ 
				A&50 & 0.279 & 0.307 & 0.285 & 0.173 & 0.191 & 0.193 & 0.188 & 0.088 \\ 
				A&100 & 0.279 & 0.308 & 0.282 & 0.173 & 0.189 & 0.189 & 0.186 & 0.088 \\ 
				A&200 & 0.278 & 0.309 & 0.280 & 0.173 & 0.189 & 0.190 & 0.185 & 0.088 \\ 
				B&10 & 0.233 & 0.242 & 0.217 & 0.222 & 0.233 & 0.234 & 0.247 & 0.253 \\ 
				B&25 & 0.221 & 0.230 & 0.213 & 0.215 & 0.229 & 0.230 & 0.235 & 0.243 \\ 
				B&50 & 0.220 & 0.227 & 0.212 & 0.213 & 0.223 & 0.227 & 0.229 & 0.239 \\ 
				B&100 & 0.219 & 0.227 & 0.209 & 0.213 & 0.225 & 0.226 & 0.229 & 0.237 \\ 
				B&200 & 0.217 & 0.225 & 0.209 & 0.215 & 0.223 & 0.225 & 0.228 & 0.237 \\ 
				C&10 & 0.242 & 0.237 & 0.260 & 0.268 & 0.238 & 0.282 & 0.249 & 0.254 \\ 
				C&25 & 0.234 & 0.238 & 0.241 & 0.261 & 0.225 & 0.269 & 0.243 & 0.249 \\ 
				C&50 & 0.232 & 0.232 & 0.239 & 0.257 & 0.222 & 0.267 & 0.240 & 0.244 \\ 
				C&100 & 0.228 & 0.227 & 0.240 & 0.257 & 0.219 & 0.265 & 0.242 & 0.240 \\ 
				C&200 & 0.228 & 0.226 & 0.240 & 0.255 & 0.219 & 0.263 & 0.241 & 0.239 \\ 
				\hline
			\end{tabular}
		\caption{Simulation results: estimated standard errors of $\widehat{\vectr{\beta}}_{12}$ based on 200 replicates for settings A--C, and different values of $K_n$. \label{Tab:simKn}}
		\end{table}

			\begin{sidewaystable}[ht]
			\spacingset{1}
			\centering
			\fbox{%
				\scalebox{0.85}{
					\begin{tabular}{llclccl}
						
						& RSID & Chromosome & Position & OA & EA & References \\ 
						\hline
						1 & rs11892031 &   2 & 234565283 & C & A & \cite{selinski2012rs11892031,zhang2014genetic} \\ 
						2 & rs1052133 &   3 & 9798773 & C & G & \cite{kim2005genotypes,karahalil2006dna,ma2012hogg1} \\ 
						3 & rs10936599 &   3 & 169492101 & T & C & \cite{figueroa2014genome,polat2019association} \\ 
						4 & rs710521 &   3 & 189645933 & C & T & \cite{kiemeney2008sequence,stern2009sequence,lehmann2010rs710521}  \\ 
						5 & rs798766 &   4 & 1734239 & C & T &  \cite{kiemeney2010sequence,figueroa2015modification,meng2017association}\\ 
						6 & rs401681 &   5 & 1322087 & T & C &  \cite{rafnar2009sequence,gago2011genetic}\\ 
						7 & rs884225 &   7 & 55274084 & T & C & \cite{chu2013egfr,luo2021rs884225} \\ 
						8 & rs1057868 &   7 & 75615006 & T & C & \cite{xiao2015functional} \\ 
						9 & rs17149580 &   7 & 125978216 & A & G & \cite{lipunova2019genome} \\ 
						10 & rs12666814 &   7 & 125979540 & C & T & \cite{lipunova2019genome} \\ 
						11 & rs73223045 &   7 & 125992106 & G & C & \cite{lipunova2019genome} \\ 
						12 & rs41515546 &   7 & 125998959 & T & C & \cite{lipunova2019genome} \\ 
						13 & rs12673089 &   7 & 126006133 & C & T & \cite{lipunova2019genome} \\ 
						14 & rs17149628 &   7 & 126006965 & C & T & \cite{lipunova2019genome} \\ 
						15 & rs17149630 &   7 & 126006996 & C & T & \cite{lipunova2019genome} \\ 
						16 & rs17149636 &   7 & 126018952 & A & G & \cite{lipunova2019genome} \\ 
						17 & rs1495741 &   8 & 18272881 & G & A & \cite{rothman2010multi,garcia2011single,figueroa2014genome} \\ 
						18 & rs9642880 &   8 & 128718068 & G & T & \cite{kiemeney2008sequence,wang2009common,mamdouh2022molecular} \\ 
						19 & rs2294008 &   8 & 143761931 & C & T & \cite{wu2009genetic,wang2010genetic,fu2012common,ma2013systematic} \\ 
						20 & rs142492877 &   9 & 98482828 & A & G & \cite{lipunova2019genome} \\ 
						21 & rs907611 &  11 & 1874072 & G & A &  \cite{figueroa2014genome}\\ 
						22 & rs217727 &  11 & 2016908 & G & A & \cite{hua2016genetic} \\ 
						23 & rs9344 &  11 & 69462910 & G & A &  \cite{yuan2010cyclin}\\ 
						24 & rs4907479 &  13 & 113659108 & G & A & \cite{figueroa2016identification} \\ 
						25 & rs17674580 &  18 & 43309911 & C & T & \cite{rafnar2011european,wang2014cumulative} \\ 
						26 & rs1058396 &  18 & 43319519 & A & G & \cite{rafnar2011european} \\ 
						27 & rs8102137 &  19 & 30296853 & T & C &  \cite{rothman2010multi}\\ 
						28 & rs62185668 &  20 & 10961935 & C & A & \cite{figueroa2016identification} \\ 
						29 & rs6104690 &  20 & 10988099 & G & A &  \cite{figueroa2016identification}\\ 
						30 & rs4813953 &  20 & 10991138 & C & T & \cite{rafnar2014genome}\\ 
						31 & rs1014971 &  22 & 39332623 & C & T & \cite{rothman2010multi} \\
			\end{tabular}}}
			\caption{Replicability analysis of 31 SNPs based on the UKB UBC data: additional SNP details and corresponding references. EA and OA stand for effect allele and other allele, respectively.  \label{Tab:SNPsDetails}}
		\end{sidewaystable}
		
		\begin{table}[ht]
			\spacingset{1}
			\centering
			\begin{tabular}{l|cc}
				\hline
				& \multicolumn{2}{c}{Pairwise}     \\ 
				\hline
				SNP &  est. effect  & adj. p-value   \\ 
				\hline
				rs11892031 & 0.053 (0.027) & 0.060 \\ 
				rs1052133 & 0.012 (0.025) & 0.341 \\ 
				rs10936599 & 0.041 (0.025) & 0.113 \\ 
				rs710521 & 0.100 (0.026) & \textbf{0.001} \\ 
				rs798766 & 0.049 (0.025) & 0.060 \\ 
				rs401681 & 0.077 (0.026) & \textbf{0.007} \\ 
				rs884225 & 0.028 (0.025) & 0.253 \\ 
				rs1057868 & -0.028 (0.026) & 0.881 \\ 
				rs17149580 & 0.015 (0.026) & 0.341 \\ 
				rs12666814 & 0.013 (0.025) & 0.341 \\ 
				rs73223045 & 0.016 (0.025) & 0.341 \\ 
				rs41515546 & 0.015 (0.025) & 0.341 \\ 
				rs12673089 & 0.016 (0.025) & 0.341 \\ 
				rs17149628 & 0.016 (0.024) & 0.341 \\ 
				rs17149630 & 0.016  (0.026) & 0.341 \\ 
				rs17149636 & 0.016  (0.025) & 0.341 \\ 
				rs1495741 & 0.073  (0.026) & \textbf{0.008} \\ 
				rs9642880 & 0.092  (0.027) & \textbf{0.002} \\ 
				rs2294008 & 0.103  (0.026) & \textbf{0.001} \\ 
				rs142492877 & 0.014  (0.026) & 0.341 \\ 
				rs907611 & 0.024  (0.025) & 0.299 \\
				rs217727 & -0.002  (0.027) & 0.567 \\ 
				rs9344 & -0.041  (0.026) & 0.939 \\  
				rs4907479 & 0.072  (0.025) & \textbf{0.007} \\ 
				rs17674580 & 0.090  (0.024)& \textbf{0.001} \\ 
				rs1058396 & 0.047  (0.026) & 0.079 \\
				rs8102137 & 0.081  (0.026) & \textbf{0.006} \\  
				rs62185668 & 0.068  (0.025) & \textbf{0.009} \\ 
				rs6104690 & 0.025  (0.026) & 0.299 \\
				rs4813953 & 0.073  (0.025) & \textbf{0.007} \\ 
				rs1014971 & 0.067  (0.028) & \textbf{0.022} \\ 
				
				\hline
			\end{tabular}
			\caption{Replicability analysis of 31 SNPs based on the UKB UBC data: estimated effects, standard errors (in parentheses) based on Bootstrap 2, and BH-adjusted p-values for the pairwise pseudolikelihood with $K_n=100$. Significant effects at the 0.05 threshold are marked in bold. \label{Tab:SNPs100b}}
		\end{table}

		\begin{figure}
			\centering
			\includegraphics[width=115mm]{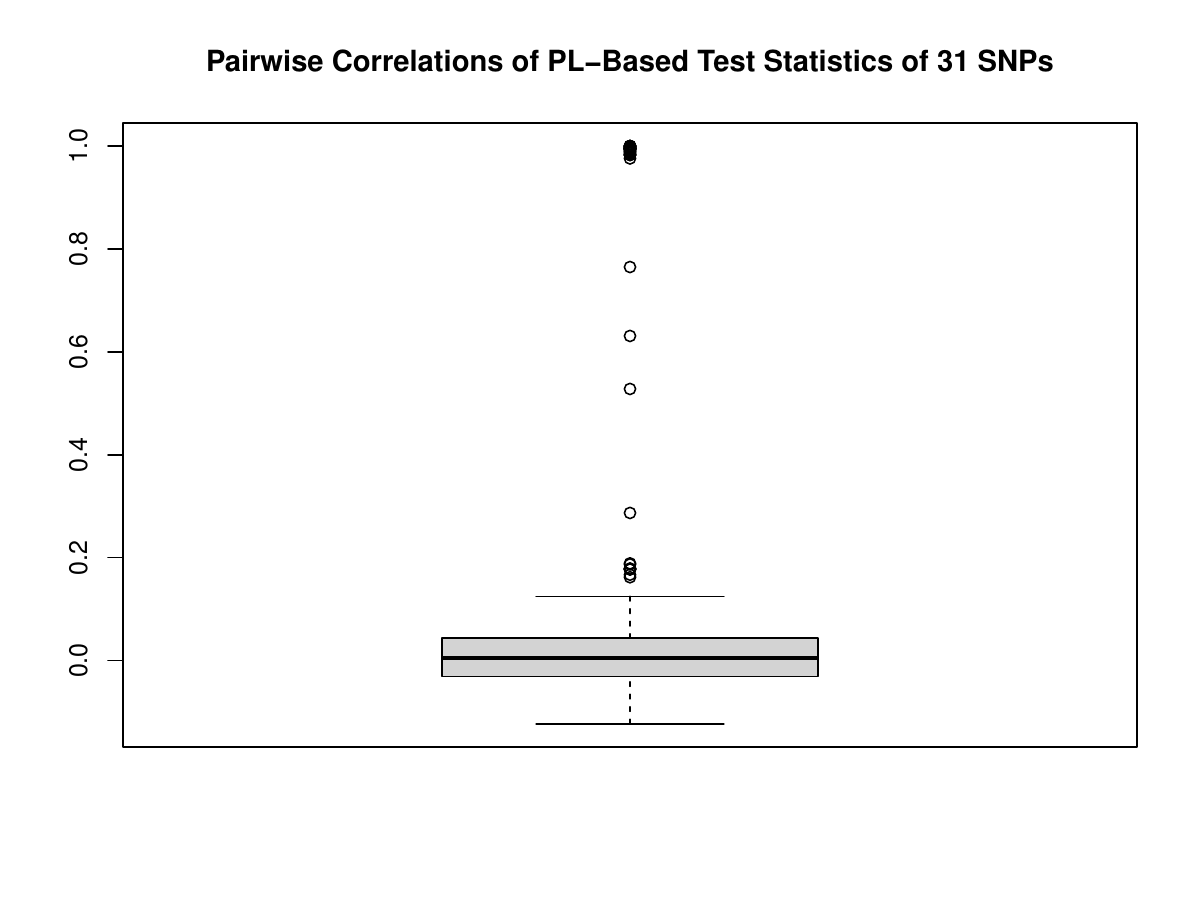}
			\caption{Boxplot of all pairwise correlations among the PL-based test statistics of 31 SNPs. \label{Fig:SNPCorBoxplot}}
		\end{figure}
		
\begin{table}[ht]
	\spacingset{1}
	\centering
	\begin{tabular}{llccccc}
		\hline
		& SNP & est. effect & Boot3-SE & Boot2-SE & adj. p-value(Boot3) & adj. p-value(Boot2) \\ 
		\hline
		1 & rs11892031 & 0.055 & 0.027 & 0.027 & 0.051 & 0.055 \\ 
		2 & rs1052133 & 0.014 & 0.025 & 0.025 & 0.347 & 0.346 \\ 
		3 & rs10936599 & 0.041 & 0.026 & 0.025 & 0.119 & 0.113 \\ 
		4 & rs710521 & 0.096 & 0.026 & 0.026 & \textbf{0.001} & \textbf{0.001} \\ 
		5 & rs798766 & 0.050 & 0.025 & 0.025 & 0.055 & 0.055 \\ 
		6 & rs401681 & 0.075 & 0.026 & 0.026 & \textbf{0.007} & \textbf{0.007} \\ 
		7 & rs884225 & 0.028 & 0.026 & 0.026 & 0.257 & 0.261 \\ 
		8 & rs1057868 & -0.014 & 0.025 & 0.027 & 0.730 & 0.719 \\ 
		9 & rs17149580 & 0.014 & 0.026 & 0.026 & 0.347 & 0.346 \\ 
		10 & rs12666814 & 0.012 & 0.026 & 0.025 & 0.354 & 0.350 \\ 
		11 & rs73223045 & 0.015 & 0.026 & 0.025 & 0.347 & 0.346 \\ 
		12 & rs41515546 & 0.013 & 0.026 & 0.025 & 0.354 & 0.350 \\ 
		13 & rs12673089 & 0.014 & 0.026 & 0.025 & 0.347 & 0.346 \\ 
		14 & rs17149628 & 0.016 & 0.026 & 0.025 & 0.347 & 0.346 \\ 
		15 & rs17149630 & 0.016 & 0.026 & 0.026 & 0.347 & 0.346 \\ 
		16 & rs17149636 & 0.016 & 0.026 & 0.026 & 0.347 & 0.346 \\ 
		17 & rs1495741 & 0.073 & 0.026 & 0.025 & \textbf{0.007} & \textbf{0.007} \\ 
		18 & rs9642880 & 0.092 & 0.026 & 0.026 & \textbf{0.001} & \textbf{0.002} \\ 
		19 & rs2294008 & 0.105 & 0.026 & 0.027 & \textbf{0.001} & \textbf{0.001} \\ 
		20 & rs142492877 & 0.015 & 0.026 & 0.026 & 0.347 & 0.346 \\ 
		21 & rs907611 & 0.021 & 0.025 & 0.026 & 0.347 & 0.346 \\ 
		22 & rs217727 & -0.001 & 0.025 & 0.026 & 0.557 & 0.557 \\ 
		23 & rs9344 & -0.045 & 0.026 & 0.025 & 0.957 & 0.961 \\ 
		24 & rs4907479 & 0.073 & 0.025 & 0.024 & \textbf{0.007} & \textbf{0.007} \\ 
		25 & rs17674580 & 0.092 & 0.025 & 0.025 & \textbf{0.001} & \textbf{0.001} \\ 
		26 & rs1058396 & 0.046 & 0.025 & 0.026 & 0.077 & 0.078 \\ 
		27 & rs8102137 & 0.081 & 0.025 & 0.025 & \textbf{0.003} & \textbf{0.004} \\ 
		28 & rs62185668 & 0.066 & 0.025 & 0.025 & \textbf{0.013} & \textbf{0.014} \\ 
		29 & rs6104690 & 0.029 & 0.025 & 0.026 & 0.246 & 0.260 \\ 
		30 & rs4813953 & 0.075 & 0.025 & 0.026 & \textbf{0.007} & \textbf{0.007} \\ 
		31 & rs1014971 & 0.066 & 0.026 & 0.027 & \textbf{0.018} & \textbf{0.022} \\ 
		\hline
	\end{tabular}
\caption{Replicability analysis of 31 SNPs based on the UKB UBC data: Replicability analysis of 31 SNPs based on the UKB UBC data: estimated effects, standard errors, and BH-adjusted p-values for the proposed pairwise pseudolikelihood with $K_n=150$. Significant effects at the 0.05 threshold are marked in bold. \label{Tab:SNPs150}}
\end{table}
		
	\end{appendices}
	
\FloatBarrier
	\spacingset{2}
	\bibliographystyle{chicago}
	\bibliography{library}
	
\end{document}